\newsavebox{\measure@tikzpicture}
  \def\tikz@width{#1}%
  \def\tikzscale{1}\begin{lrbox}{\measure@tikzpicture}%
  \edef\tikzscale{\pgfmathresult}%
\renewcommand{\leq}{\leqslant}
\newcommand{\PMC}{\textsc{Perfect Matching Cut}\xspace}
\newcommand{\sPMC}{\textsc{PMC}\xspace}
\newcommand{\var}{\text{var}}
\definecolor{dartmouthgreen}{rgb}{0.05, 0.5, 0.06}
\title{Cutting Barnette graphs perfectly is hard}
\titlerunning{Cutting Barnette graphs perfectly is hard}
\author{\'{E}douard Bonnet}{Univ Lyon, CNRS, ENS de Lyon, Université Claude Bernard Lyon 1, LIP UMR5668, France \and \url{http://perso.ens-lyon.fr/edouard.bonnet/}}{edouard.bonnet@ens-lyon.fr}{https://orcid.org/0000-0002-1653-5822}{}
\author{Dibyayan Chakraborty}{Univ Lyon, CNRS, ENS de Lyon, Université Claude Bernard Lyon 1, LIP UMR5668, France}{dibyayan.chakraborty@ens-lyon.fr}{https://orcid.org/0000-0003-0534-6417}{}
\author{Julien Duron}{Univ Lyon, CNRS, ENS de Lyon, Université Claude Bernard Lyon 1, LIP UMR5668, France}{julien.duron@ens-lyon.fr}{}{}
\authorrunning{\'E. Bonnet, D. Chakraborty, J. Duron}
\keywords{Perfect matching, cutset, perfect matching set, planar graphs, Barnette graphs, NP-completeness}
\begin{document}

\maketitle              

\begin{abstract}
  A~\emph{perfect matching cut} is a perfect matching that is also a~cutset, or equivalently a perfect matching containing an even number of edges on every cycle.
  The corresponding algorithmic problem, \textsc{Perfect Matching Cut}, is known to be NP-complete in subcubic bipartite graphs [Le \& Telle, TCS '22] but its complexity was open in planar graphs and in cubic graphs.
  We settle both questions at once by showing that \textsc{Perfect Matching Cut} is NP-complete in 3-connected cubic bipartite planar graphs or \emph{Barnette graphs}.
  Prior to our work, among problems whose input is solely an undirected graph, only \textsc{\mbox{Distance-2} \mbox{4-Coloring}} was known NP-complete in Barnette graphs.
  Notably, \textsc{Hamiltonian Cycle} would only join this private club if Barnette's conjecture were refuted. 
\end{abstract}

\section{Introduction}

Deciding if an input graph admits a~perfect matching, i.e., a~subset of its edges touching each of its vertices exactly once, notoriously is a~tractable task.
There is indeed a~vast literature, starting arguably in 1947 with Tutte's characterization via determinants~\cite{Tutte47}, of polynomial-time algorithms deciding \textsc{Perfect Matching} (or returning actual solutions) and its optimization generalization \textsc{Maximum Matching}.

In this paper, we are interested in another containment of a~spanning set of disjoint edges --perfect matching-- than as a subgraph.
As containing such a~set of edges as an induced subgraph is a trivial property\footnote{Note however that the induced variant of \textsc{Maximum Matching} is an interesting problem that happens to be NP-complete~\cite{Stockmeyer82}.} (only shared by graphs that are themselves disjoint unions of edges), the meaningful other containment is as a~\emph{semi-induced subgraph}.
By that we mean that we look for a~bipartition of the vertex set or \emph{cut} such that the edges of the perfect matching are ``induced'' in the corresponding cutset (i.e., the edges going from one side of the bipartition to the other), while we do not set any requirement on the presence or absence of edges within each side of the bipartition.

This problem was in fact introduced as the \PMC (\sPMC for short) problem\footnote{The authors consider the framework of $(k,\sigma,\rho)$-partition problem, where $k$ is a~positive integer, and $\sigma, \rho$ are sets of non-negative integers, and one looks for a vertex-partition into $k$ parts such that each vertex of each part has a number of neighbors in its own part in $\sigma$, and a number of other neighbors in $\rho$; hence, \sPMC is then the $(2,\mathbb N,\{1\})$-partition problem.} by Heggernes and Telle who show that it is NP-complete~\cite{Heggernes98}.
As the name \PMC suggests, we indeed look for a perfect matching that is also a~cutset.
Le and Telle further show that \sPMC remains NP-complete in subcubic bipartite graphs of arbitrarily large girth, whereas it is polynomial-time solvable in a superclass of chordal graphs, and in graphs without a particular subdivided claw as an induced subgraph~\cite{Le22}.
An in-depth study of the complexity of \sPMC when forbidding a~single induced subgraph or a finite set of subgraphs has been carried out~\cite{Feghali22,Lucke22b}.

We look at Le and Telle's hardness constructions and wonder what other properties could make \sPMC tractable (aside from chordality, and forbidding a~finite list of subgraphs or induced subgraphs).
A~simpler reduction for bipartite graphs is first presented.
Let us briefly sketch their reduction (without thinking about its correctness) from~\textsc{Monotone Not-All-Equal 3-SAT}, where given a~negation-free 3-CNF formula, one seeks a~truth assignment that sets in each clause a variable to true and a variable to false.
Every variable is represented by an edge, and each 3-clause, by a~(3-dimensional) cube with three anchor points at three pairwise non-adjacent vertices of the cube.
One endpoint of the variable gadget is linked to the anchor points corresponding to this variable among the clause gadgets.
Note that this construction creates three vertices of degree~4 in each clause gadget, and vertices of possibly large degree in the variable gadgets.
Le and Telle then reduce the maximum degree to at most~3, by appropriately subdividing the cubes and tweaking the anchor points, and replacing the variable gadgets by cycles.

Notably the edge subdivision of the clause gadgets creates degree 2-vertices, which are not easy to ``pad'' with a~third neighbor (even more so while keeping the construction bipartite). 
And indeed, prior to our work, the complexity of \sPMC in cubic graphs was open.
Let us observe that on cubic graphs, the problem becomes equivalent to partitioning the vertex set into two sets each inducing a~disjoint union of (independent) cycles.
The close relative, \textsc{Matching Cut}, where one looks for a mere matching that is also a cutset, while NP-complete in general~\cite{Chvatal84}, is polynomial-time solvable in \emph{subcubic} graphs~\cite{Moshi89,Bonsma09}.
The complexity of \textsc{Matching Cut} has further been examined in subclasses of planar graphs~\cite{Patrignani01,Bonsma09}, when forbidding some (induced) subgraphs~\cite{Feghali22,Lucke22,Lucke22b,Feghali23}, on graphs of bounded diameter~\cite{Lucke22,Le19}, and on graphs of large minimum degree~\cite{Chen21}.
\textsc{Matching Cut} has also been investigated with respect to parameterized complexity, exact exponential algorithms~\cite{Kratsch16,Komusiewicz20}, and enumeration~\cite{Golovach22}.

It was also open if \sPMC is tractable on planar graphs.
Note that Bouquet and Picouleau show that~a related problem, \textsc{Disconnected Perfect Matching}, where one looks for a perfect matching that contains a~cutset, is NP-complete on planar graphs of maximum degree~4, on planar graphs of girth~5, and on \mbox{5-regular} bipartite graphs~\cite{Bouquet20}.
They incidentally call this related problem~\PMC but subsequent references~\cite{Feghali22,Le22} use the name \textsc{Disconnected Perfect Matching} to avoid confusion.
We will observe that \sPMC is equivalent to asking for a~perfect matching containing an even number of edges from every cycle of the input graph.
The sum of even numbers being even, it is in fact sufficient that the perfect matching contains an even number of edges from every element of a~cycle basis.
There is a~canonical cycle basis for planar graphs: the bounded faces.
This gives rise to the following neat reformulation of \sPMC in planar graphs: is there a perfect matching containing an even number of edges along each face?

While \textsc{Matching Cut} is known to be NP-complete on planar graphs~\cite{Patrignani01,Bonsma09}, it could have gone differently for \sPMC for the following ``reasons.''
\textsc{Not-All-Equal 3-SAT}, which appears as the \emph{right} starting point to reduce to \sPMC, is tractable on planar instances~\cite{Moret88}.
In planar graphs, perfect matchings are \emph{simpler} than arbitrary matchings in that they alone~\cite{Vadhan01} can be counted efficiently~\cite{Temperley61,Kasteleyn67}.
Let us finally observe that \textsc{Maximum Cut} can be solved in polynomial time in planar graphs~\cite{Hadlock75}.

\medskip

In fact, we show that the reformulations for cubic and planar graphs cannot help algorithmically, by simultaneously settling the complexity of \sPMC in cubic and in planar graphs, with the following stronger statement. 
\begin{theorem}\label{thm:hard}
\PMC is NP-hard in 3-connected cubic bipartite planar graphs.
\end{theorem}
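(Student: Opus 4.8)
The plan is to give a polynomial-time reduction from \textsc{Monotone Not-All-Equal 3-SAT}: given a negation-free $3$-CNF formula $\varphi$ on variables $x_1,\dots,x_n$ with clauses $C_1,\dots,C_m$, one wants a truth assignment setting in each clause at least one variable to true and at least one to false. This problem is NP-hard on arbitrary (not necessarily planar) instances, which is exactly what we exploit; planarity of the output graph will be recovered by a gadget. The not-all-equal flavour is the natural match for cuts: the two sides of a perfect matching cut play the roles of ``true'' and ``false'', and by the reformulation recalled above we may equivalently build, from $\varphi$, a cubic bipartite planar $3$-connected graph $G_\varphi$ admitting a perfect matching whose edge set meets every face an even number of times if and only if $\varphi$ is NAE-satisfiable.

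The construction rests on a handful of cubic bipartite planar gadgets. A \emph{wire} is a long chain of small $3$-regular blocks such that, in any perfect matching cut, the matching restricted to the wire is forced into one of exactly two ``alternating'' patterns; we read these as the boolean value carried by the wire, and a wire can be bent and lengthened at will. A \emph{variable gadget} is a cyclic assembly of wire segments forcing all the wires that leave it to carry the same value, so that variable $x_i$ broadcasts a single truth value to all its occurrences (no inverter is needed since $\varphi$ is monotone). A \emph{clause gadget} has three incoming wire-ends and admits a perfect-matching completion meeting all its faces evenly precisely when the three incoming values are \emph{not} all equal. These pieces are combined following a drawing of the variable--clause incidence structure in the plane; wherever two wires would cross, we insert a fourth gadget, the \emph{crossover}: a planar cubic bipartite block with four terminals that, in any perfect matching cut, transmits the value of one pair of opposite terminals and, independently, that of the other pair. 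Substituting all gadgets yields the planar graph $G_\varphi$.

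Three global properties must then be verified. \textit{Cubicity}: every vertex inside a gadget, and every vertex created when two gadgets are glued, must have degree exactly $3$; this dictates the precise shape of the blocks and forces the wires to have lengths in suitable residue classes so that the gluing leaves no stray degree-$2$ vertex. \textit{Bipartiteness}: we exhibit an explicit proper $2$-colouring of each gadget consistent along all gluing interfaces---equivalently, every closed walk, and in particular every face, of $G_\varphi$ has even length. \textit{$3$-connectedness}: we show $G_\varphi$ has no $1$- or $2$-cut. I expect this last point to be the most delicate, since cubic graphs assembled from gadgets are prone to $2$-edge-cuts at the interfaces; the remedy is to attach every gadget through at least three edge-disjoint connections (never through a narrow ``neck''), so that any hypothetical small separator must lie inside a single gadget, where its absence is checked by hand. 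Since in a cubic graph vertex- and edge-connectivity coincide, this yields $3$-connectedness.

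The main obstacle is precisely to satisfy all four structural constraints at once while still faithfully encoding NAE-satisfiability: the parity constraints imposed by bipartiteness sharply limit which cubic blocks are available (in particular ruling out the easy ``odd padding'' that would otherwise dispose of degree-$2$ vertices), and planarity together with $3$-connectivity force the clause and crossover gadgets to be both small and richly connected. Correctness then follows in the usual two directions: from an NAE-satisfying assignment one propagates the corresponding wire patterns and completes each gadget by its designed local solution, obtaining a perfect matching cut; conversely, any perfect matching cut reads off a consistent value on every wire, which the clause gadgets certify to be NAE-satisfying. Membership of \PMC in NP being immediate, this establishes \cref{thm:hard} (and, in fact, NP-completeness).
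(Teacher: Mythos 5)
Your proposal identifies the same overall strategy as the paper: a gadget-based reduction from \textsc{Monotone Not-All-Equal 3-SAT}, with variable, clause and crossover gadgets, followed by verification that the assembled graph is cubic, bipartite, planar and $3$-connected. That said, as written this is a plan, not a proof: every step that carries actual content is left unconstructed and merely asserted to be achievable.

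The heart of the theorem is precisely the design of gadgets that meet all four structural constraints simultaneously, and you acknowledge this yourself (``the main obstacle is precisely to satisfy all four structural constraints at once''), but you do not overcome it. In particular: (i) you describe wires as carrying a boolean value via one of two alternating patterns, yet never exhibit a cubic bipartite block with this property; in the paper's construction the variable gadget is in fact \emph{entirely forced} --- there is a unique intersection with any perfect matching cut --- and the binary ``choice'' lives in the clause gadget, which admits exactly three local types; (ii) you posit a crossover gadget that ``transmits the value of one pair of opposite terminals and, independently, that of the other pair,'' but the natural cubic bipartite crossing block (a $4$-cycle at the crossing point) \emph{flips} the side along each wire passing through it; the paper resolves this by doubling every variable--clause link into two parallel edges, so a crossing becomes four $4$-cycles and the parity of sign-flips is automatically even --- this is exactly why each variable gadget carries $8$ anchors and each clause gadget $6$, and your sketch has no analogue of this fix; (iii) you do not construct the clause gadget at all, whereas the paper's clause gadget (a doubled subdivided cube with a parity-repairing sub-gadget $D_j$ to absorb two odd faces created while raising degrees to $3$) is the most intricate part of the construction; (iv) for $3$-connectedness you propose attaching gadgets via ``at least three edge-disjoint connections,'' but the paper attaches each pair of incident gadgets by exactly two connector edges and derives $3$-connectedness from the global structure (bridgelessness of the gadget-adjacency graph, which in turn requires the preprocessing assumption that no variable vertex is a cutvertex of the incidence graph) --- your remedy would force a different wiring scheme that you would then need to reconcile with cubicity. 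Until at least the gadgets and the crossing-parity fix are exhibited explicitly, the argument has a genuine gap.
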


Not very many problems are known to be NP-complete in cubic bipartite planar graphs. 
Of the seven problems defined on mere undirected graphs from Karp's list of 21 NP-complete problems~\cite{Karp72}, only \textsc{Hamiltonian Path} is known to remain NP-complete in this class, while the other six problems admit a~polynomial-time algorithm.
Restricting ourselves to problems where the input is purely an~undirected graph,\footnote{Among problems with edge orientations, vertex or edge weights, or prescribed subsets of vertices or edges, the list is significantly longer, and also includes \textsc{Minimum Weighted Edge Coloring}~\cite{Werra09}, \textsc{List Edge Coloring} and \textsc{Precoloring Extension}~\cite{Marx05}, \textsc{$k$-In-A-Tree}~\cite{Derhy09}, etc.} besides \textsc{Hamiltonian Path/Cycle} \cite{Munaro17,Akiyama80}, \textsc{Minimum Independent Dominating Set} was also shown NP-complete in cubic bipartite planar graphs~\cite{Loverov20}, as well as \textsc{$P_3$-Packing} \cite{Kosowski05} (hence, an equivalent problem phrased in terms of disjoint dominating and 2-dominating sets~\cite{Miotk20}), and \textsc{Distance-2 4-Coloring}~\cite{Feder21}.
To our knowledge, \textsc{Minimum Dominating Set} is only known NP-complete in \emph{subcubic} bipartite planar graphs~\cite{Garey79,Korobitsin92}. 

It is interesting to note that the reductions for \textsc{Hamiltonian Path}, \textsc{Hamiltonian Cycle}, \textsc{Minimum Independent Dominating Set}, and~\textsc{$P_3$-Packing} all produce cubic bipartite planar graphs that are \emph{not} 3-connected.
Notoriously, lifting the NP-hardness of \textsc{Hamiltonian Cycle} to the 3-connected case would require to disprove Barnette's conjecture\footnote{which precisely states that every polyhedral (that is, 3-connected planar) cubic bipartite graphs admits a hamiltonian cycle.} (and that would be indeed sufficient~\cite{Feder06}).
Note that hamiltonicity in cubic graphs is equivalent to the existence of a~perfect matching that is \emph{not} an edge cut (i.e., whose removal is not disconnecting the graph).
We wonder whether there is something inherently simpler about \mbox{\emph{3-connected}} cubic bipartite planar graphs, which would go beyond hamiltonicity (assuming that Barnette's conjecture is true).

Let us call \emph{Barnette} a~3-connected cubic bipartite planar graph.
It appears that, prior to our work, \textsc{Distance-2 4-Coloring} was the only \emph{vanilla} graph problem shown NP-complete in Barnette graphs~\cite{Feder21}.
Arguing that \textsc{\mbox{Distance-2} 4-Coloring} is a~problem on \emph{squares} of Barnette graphs more than it is on Barnette graphs, a~case can be made for \PMC to be the first natural problem proven NP-complete in Barnette graphs.


\medskip

\textbf{Provably tight subexponential-time algorithm.}
Note that our reduction together with existing results and a~known methodology give a~fine-grained understanding, under the Exponential-Time Hypothesis\footnote{the assumption that there is a~$\lambda > 0$ such that no algorithm solves $n$-variable \textsc{3-SAT} in time $\lambda^n n^{O(1)}$} (or ETH)~\cite{Impagliazzo01}, on solving \PMC in planar graphs.

On the algorithmic side, there is a~$2^{O(\sqrt n)}$-time algorithm for \sPMC in \mbox{$n$-vertex} planar graphs, as a~consequence of a~$2^{O(w)}n^{O(1)}$-time algorithm for \mbox{$n$-vertex} graphs given with a~tree-decomposition of width~$w$, and the fact that tree-decompositions of width $O(\sqrt n)$ always exist in planar graphs and can be computed in polynomial-time~\cite{Lipton80}.
The $2^{O(w)}n^{O(1)}$-time algorithm can be obtained directly or as a~consequence of a~result of Pilipczuk~\cite{Pilipczuk11} that any problem expressible in Existential Counting Modal Logic (ECML) admits a single-exponential fixed-parameter algorithm in treewidth.
ECML allows existential quantifications over vertex and edge sets followed by a~counting modal formula to be satisfied \emph{from every vertex}.
Counting modal formulas enrich quantifier-free Boolean formulas with $\Diamond^S \varphi$, whose semantics is that the current vertex~$v$ has a number of neighbors satisfying $\varphi$ in the ultimately periodic set $S$ of non-negative integers. 
One can thus express \PMC in ECML as $$\exists X \subseteq V(G), \forall v \in V(G),~G,X,v \models X \rightarrow \Diamond^{\{1\}}(\neg X)~\land~\neg X \rightarrow \Diamond^{\{1\}}X,$$
which states that there is a set $X$ such that every vertex in $X$ has exactly one neighbor outside $X$, and vice versa.

On the complexity side, the Sparsification lemma~\cite{sparsification}, the folklore linear reductions from bounded-occurrence \textsc{3-SAT} to bounded-occurrence \textsc{Monotone Not-All-Equal 3-SAT} and to \textsc{Monotone Not-All-Equal 3-SAT-E4}~\cite{darmann2020simple}, and finally our quadratic reduction, imply that $2^{\Omega(\sqrt n)}$ time is required to solve \sPMC in $n$-vertex planar graphs.
Our reduction (as we will see) indeed has a~quadratic blow-up as it creates $O(1)$ vertices per variable and clause, and $O(1)$ vertices for each of the $O(n^2)$ crossings in a (non-planar) drawing of the variable-clause incidence graph. 

\medskip

\textbf{Outline of the proof.}
We reduce the NP-complete problem \textsc{Monotone Not-All-Equal 3-SAT} with exactly 4~occurrences of each variable~\cite{darmann2020simple} to~\sPMC.
Observe that flipping the value of every variable of a~satisfying assignment results in another satisfying assignment.
We thus see a~solution to \textsc{Monotone Not-All-Equal 3-SAT} simply as a bipartition of the set of variables.

As we already mentioned, \textsc{Not-All-Equal 3-SAT} restricted to planar instances (i.e., where the variable-clause incidence graph is planar) is in P.
We thus have to design \emph{crossing} gadgets in addition to \emph{variable} and \emph{clause} gadgets.
Naturally our gadgets are bipartite graphs with vertices of degree 3, except for some special \emph{anchors}, vertices of degree 2 with one incident edge leaving the gadget.

The variable gadget is designed so that there is a unique way a~perfect matching cut can intersect it.
It might seem odd that no ``binary choice'' happens within it.
The role of this gadget is only to serve as a baseline for which side of the bipartition the variable lands in, while the ``truth assignments'' take place in the clause gadgets. 
(Actually the same happens with Le and Telle's first reduction~\cite{Le22}, where the variable gadget is a single edge, which has to be in any solution.)

Our variable gadget consists of 36 vertices, including 8 anchor points; see~\Cref{fig:var-gadget}.
(We will later explain why we have 8 anchor points and not simply 4, that is, one for each occurrence of the variable.)
Note that in all the figures, we adopt the following convention:
\begin{itemize}
\item black edges cannot (or can no longer) be part of a~perfect matching cut,
\item red edges are in every perfect matching cut, 
\item each blue edge $e$ is such that at least one perfect matching cut within its gadget includes~$e$, and at least one excludes $e$, and
\item brown edges are blue edges that were indeed chosen in the solution.
\end{itemize}
Let us recall that \sPMC consists of finding a perfect matching containing an even number of edges from each cycle.
Thus we look for a~perfect matching $M$ such that every path (or walk) between $v$ and $w$ contains a~number of edges of $M$ whose parity only depends on $v$ and $w$.
If this parity is even $v$ and $w$ are on the \emph{same side}, and if it is odd, $v$ and $w$ are on \emph{opposite sides}.
The 8 anchor points of each variable gadget are forced on the same side.
This is the \emph{side of the variable}.

At the core of the clause gadget is a subdivided cube of blue edges; see \Cref{fig:clause-gadget}.
There are three vertices ($u_1, u_8, u_{14}$ on the picture) of the subdivided cube that are forced on the same side as the corresponding three variables.
Three perfect matching cuts are available in the clause gadget, each separating (i.e., putting on opposite sides) a~different vertex of $\{u_1, u_8, u_{14}\}$ from the other two.
Note that this is exactly the semantics of a~not-all-equal 3-clause.
We in fact need two copies of the subdivided cube, partly to increase the degree of some subdivided vertices, partly for the same reason we duplicated the anchor vertices in the variable gadgets.
(The latter will be explained when we present the crossing gadgets.)
Increasing the degree of all the subdivided vertices complicate further the gadget and create two odd faces.
Fortunately these two odd faces have a~common neighboring even face.
We can thus ``fix'' the parity of the two odd faces by plugging the sub-gadget $D_j$ in the even face.
We eventually need a total of 112 vertices, including 6 anchor points.

Let us now describe the crossing gadgets.
Basically we want to replace every intersection point of two edges by a~4-vertex cycle.
This indeed propagates black edges (those that cannot be in any solution).
The issue is that going through such a~crossing gadget flips one's side.
As we cannot guarantee that a~variable ``wire'' has the same parity of intersection points towards each clause gadget it is linked to, we duplicate these wires.
At a~previous intersection point, we now have two parallel wires crossing two other parallel wires, making four crossings.
The gadget simply consists of four 4-vertex cycles; see~\Cref{fig:cross-replace}.
Check in~\Cref{fig:cross-type} that the sides are indeed preserved.
This explains why we have 8 anchor points (not~4) in each variable gadget, and 6 anchor points (not~3) in each clause gadget. 

\section{Preliminaries}

For a graph $G$, we denote by $V(G)$ its set of vertices and by $E(G)$ its set of edges.
If $U \subseteq V(G)$, the \emph{subgraph of $G$ induced by $U$}, denoted $G[U]$ is the graph obtained from $G$ by removing the vertices not in $U$.
$E_G(U)$ (or $E(U)$ when $G$ is clear) is a shorthand for $E(G[U])$.
For $M \subset E(G)$, $G - M$ is the spanning subgraph of $G$ obtained by removing the edges in $M$ (while preserving their endpoints).
A connected component of $G$ is a maximal set $U \subseteq V(G)$ such that $G[U]$ is connected.
A graph $G$ is \emph{cubic} if every vertex of $G$ has exactly three neighbors.
A graph is \emph{bipartite} if it contains no odd cycles.
We may use \emph{$k$-cycle} as a short-hand for the $k$-vertex cycle.

Given two disjoint sets $X, Y \subseteq V(G)$ we denote by $E(X, Y)$ the set of edges between $X$ and $Y$.
A set $M \subseteq E(G)$ is a~\emph{cutset}\footnote{We avoid using the term ``edge cut'' since, for some authors, an edge cut is, more generally, a~subset of edges whose deletion increases the number of connected components.} of $G$ if there is a~proper bipartition $X \uplus Y = V(G)$, called~\emph{cut}, such that $M = E(X, Y)$.
Note that a~cut fully determines a~cutset, and among connected graphs a cutset fully determines a~cut.
When dealing with connected graphs, we may speak of \emph{the} cut of a cutset.
For $X \subseteq V(G)$ the set of \emph{outgoing edges of $X$} is $E(X, V(G) \setminus X)$.
For a cutset $M$ of a~connected graph $G$, and $u, v \in V(G)$, we say that $u$ and $v$ are on the \emph{same side} (resp. on \emph{opposite sides}) of $M$ if $u$ and $v$ are on the same part (resp. on different parts) of the cut of $M$.

A~\emph{matching} (resp. \emph{perfect matching}) of $G$ is a set $M \subset E(G)$ such that each vertex of $G$ is incident to at most (resp. exactly) one edge of $M$.
A~\emph{perfect matching cut} is a perfect matching that is also a~cutset.
For $M \subseteq E(G)$ and $U \subseteq V(G)$, we say that $M$ is a~\emph{perfect matching cut of $G[U]$} if $M \cap E(U)$ is so.



A graph is \emph{planar} if it can be embedded in the plane, i.e., drawn such that edges (simple curves) may only intersect at their endpoint (the vertices).
A \emph{plane graph} is a planar graph together with such an embedding.
Given a plane graph $G$, a face of $G$ is a connected component of the plane after removing the embedding of~$G$.
A~\emph{facial cycle} of a~plane graph $G$ is a cycle of $G$ that bounds a~face of~$G$.
We say that two plane graphs $G$ and $H$ are \emph{translates} if the embedding of $G$ is a~translate of the embedding of $H$.

\section{Proof of \Cref{thm:hard}}

\newcommand{\MSAT}{\textsc{Monotone Not-All-Equal 3SAT-E4}\xspace}
\newcommand{\cycle}[2]{S_{#1}^{#2}}
\newcommand{\varGadget}[1]{\mathcal{X}_{#1}}
\newcommand{\clauseGadget}[1]{\mathcal{C}_{#1}}
\newcommand{\construct}{G(I)}
\newcommand{\cubic}{H(I)}

\newcommand{\topvarclause}[2]{t_{#1,#2}}
\newcommand{\topclausevar}[2]{t'_{#1,#2}}
\newcommand{\botvarclause}[2]{b_{#1,#2}}
\newcommand{\botclausevar}[2]{b'_{#1,#2}}

\newcommand{\card}[1]{\left|#1\right|}
\newcommand{\fourcycle}[2]{F\left(#1,#2\right)}
\newcommand{\fin}[1]{f^{in}_{#1}}
\newcommand{\fout}[1]{f^{out}_{#1}}

\newcommand{\internalclausEdge}[2]{e_{#1}^{#2}}

Before we give our reduction, we start with a~handful of useful lemmas and observations, which we will later need.

\subsection{Preparatory lemmas}

\begin{lemma}\label{lem:pmc-cycles}
  Let $G$ be a~graph, and $M \subseteq E(G)$.
  Then $M$ is a~cutset if and only if for every cycle $C$ of $G$, $\card{E(C) \cap M}$ is even.
\end{lemma}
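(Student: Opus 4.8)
The plan is to prove both implications directly from the definition of a cutset as $M = E(X,Y)$ for a proper bipartition $X \uplus Y = V(G)$. For the forward direction, suppose $M$ is a cutset with associated cut $X \uplus Y$. Take any cycle $C = v_0 v_1 \cdots v_{k-1} v_0$. As we traverse the cycle, each vertex $v_i$ lies in $X$ or in $Y$, and an edge $v_i v_{i+1}$ belongs to $M = E(X,Y)$ precisely when $v_i$ and $v_{i+1}$ lie on opposite sides. So the edges of $C$ in $M$ are exactly the ``transitions'' between the two sides along the closed walk; since we return to the starting vertex $v_0$, the number of such transitions must be even. Hence $\card{E(C) \cap M}$ is even. (Formally, assign $\chi(v) = 0$ if $v \in X$ and $1$ if $v \in Y$; then $\card{E(C)\cap M} = \sum_{i} (\chi(v_i) + \chi(v_{i+1})) \bmod 2$ edges contribute, and telescoping gives $\sum_i (\chi(v_i) - \chi(v_{i+1})) \equiv 0 \pmod 2$.)

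For the backward direction, suppose that $\card{E(C) \cap M}$ is even for every cycle $C$ of $G$; we must produce a proper bipartition $X \uplus Y$ with $M = E(X,Y)$. It suffices to do this on each connected component of $G$ separately and take unions, so assume $G$ is connected. Fix a vertex $r$, and for each vertex $v$ define $\chi(v) \in \{0,1\}$ to be the parity of $\card{P \cap M}$ where $P$ is any path (or walk) from $r$ to $v$. The key point is that this is well defined: if $P$ and $P'$ are two $r$--$v$ walks, then $P$ followed by the reverse of $P'$ is a closed walk, hence an edge-disjoint union of cycles (or, more simply, its edge multiset reduces to a union of cycles), and by hypothesis each cycle meets $M$ an even number of times, so $\card{P \cap M}$ and $\card{P' \cap M}$ have the same parity. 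Set $X = \chi^{-1}(0)$ and $Y = \chi^{-1}(1)$. Then for any edge $uv \in E(G)$: extending an $r$--$u$ walk by the edge $uv$ shows $\chi(v) \equiv \chi(u) + [uv \in M] \pmod 2$, so $uv \in M$ iff $\chi(u) \neq \chi(v)$ iff $uv \in E(X,Y)$. Thus $M = E(X,Y)$, and $M$ is a cutset. (One should note the bipartition is required only to be proper if one wants a genuine cut; if $M = \emptyset$ then $Y = \emptyset$, but the empty edge set is trivially a cutset under the convention that improper bipartitions are allowed, or one may simply treat this as a degenerate case — this matches the paper's remark that ``a cut fully determines a cutset'' and conversely among connected graphs.)

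The main obstacle is the well-definedness of $\chi$ in the backward direction: one needs the elementary fact that the symmetric difference (as edge multisets) of two walks with the same endpoints decomposes into cycles, so that the even-intersection hypothesis can be applied. Everything else is bookkeeping with parities. I would handle this either by the symmetric-difference-of-closed-walks argument sketched above, or — cleaner — by induction on a BFS/DFS tree from $r$: define $\chi$ on tree edges greedily, and use that every non-tree edge closes a fundamental cycle meeting $M$ evenly to check consistency. Either route is short.
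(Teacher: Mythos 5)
Your proposal takes essentially the same route as the paper's proof: the forward direction counts parity of side-transitions along a cycle, and the backward direction constructs the bipartition by parity of $|E(P)\cap M|$ along a path from a fixed root in each component, with well-definedness reduced to the even-intersection hypothesis on cycles. The only meaningful difference is presentational — you spell out the well-definedness step (via decomposing a closed walk into cycles), where the paper simply asserts $A\cap B=\emptyset$ — and you correctly flag the degenerate $M=\emptyset$ edge case, which the paper glosses over but which is immaterial for the perfect matching cuts it is applied to.
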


\begin{proof}
  Suppose that $M$ is a cutset, and let $(A,B)$ be a~cut of $M$.
  Every closed walk (and in particular, cycle) contains an even number of edges of $M$, since the edges of $M$ go (along the walk) from $A$ to $B$, and from $B$ to $A$.

  Now assume that every cycle of $G$ has an even number of edges in common with $M$.
  We build a~cut $(A,B)$.
  For each connected component $H$ of $G$, we fix an arbitrary vertex $v \in V(H)$, and do the following.
  For each vertex $w \in V(H)$, put $w$ in $A$ if there is a~path from $v$ to $w$ taking an even number of edges from $M$, and in $B$ if there is a path from $v$ to $w$ taking an odd number of edges from $M$.
  It holds that $A \cup B=V(G)$.
  By our assumption on the cycles of $G$, $A \cap B = \emptyset$.
  Hence $(A,B)$ is indeed a~cut.
  The cutset of $(A,B)$ is, by construction,~$M$.
\end{proof}

\begin{lemma}\label{lem:facial-cycle}
  Let $G$ be a plane graph, and $M \subseteq E(G)$.
  Then $M$ is a cutset if and only if for any facial cycle $C$ of $G$, $\card{E(C) \cap M}$ is even.
\end{lemma}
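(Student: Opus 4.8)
The plan is to derive \Cref{lem:facial-cycle} from \Cref{lem:pmc-cycles}. The forward direction is immediate: a facial cycle is in particular a cycle, so if $M$ is a cutset then \Cref{lem:pmc-cycles} already gives that $\card{E(C) \cap M}$ is even for every facial cycle $C$. So the content is entirely in the converse: assuming $\card{E(C) \cap M}$ is even for every \emph{facial} cycle $C$, we must upgrade this to \emph{all} cycles and invoke \Cref{lem:pmc-cycles}.

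The key idea is that the facial cycles of a plane graph (minus one, per connected component / per the outer face) form a cycle basis over $\mathbb{F}_2$, so every cycle is an $\mathbb{F}_2$-sum (symmetric difference of edge sets) of facial cycles. Concretely, I would first reduce to the connected case: $M$ is a cutset of $G$ iff $M \cap E(H)$ is a cutset of $H$ for every connected component $H$, and the facial cycles of $G$ restricted to a component are exactly the facial cycles of that component (after re-embedding), so it suffices to treat each component separately. Then, for a connected plane graph $H$, I would argue that an arbitrary cycle $C$ encloses a set of bounded faces $f_1, \dots, f_k$, and that $E(C)$ equals the symmetric difference $E(C_1) \triangle \dots \triangle E(C_k)$ of the corresponding facial cycles $C_1, \dots, C_k$ (each edge of $H$ lies on exactly two faces, so an edge is in this symmetric difference iff exactly one of its two incident faces is enclosed by $C$, which is precisely the condition for the edge to lie on $C$). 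Taking the count modulo $2$ of edges in $M$: $\card{E(C) \cap M} \equiv \sum_{i=1}^k \card{E(C_i) \cap M} \equiv 0 \pmod 2$, since each term is even by hypothesis. Hence every cycle of $H$ meets $M$ in an even number of edges, and \Cref{lem:pmc-cycles} applies.

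An alternative, more self-contained route that avoids quoting the cycle-space fact and re-examining \Cref{lem:pmc-cycles} is to mimic the proof of \Cref{lem:pmc-cycles} directly: build the bipartition $(A,B)$ component-by-component by fixing a base vertex and assigning each vertex according to the parity of some path to it, and then show well-definedness using only facial cycles. The parity invariance along \emph{all} closed walks follows once it holds along a generating set of cycles, and the facial cycles generate; but making that last implication rigorous again amounts to the symmetric-difference argument above, so the two approaches coincide in substance.

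The main obstacle is the (intuitively obvious but formally fiddly) Jordan-curve-type claim that a cycle in a plane graph partitions the remaining faces into "inside" and "outside", and that its edge set is exactly the symmetric difference of the facial cycles bounding the inside faces. I would either cite this as the standard fact that the bounded facial cycles form a basis of the cycle space of a connected plane graph, or handle it by induction on the number of faces strictly inside $C$: if there are none, $C$ itself bounds a face and is facial; otherwise pick a face $f$ strictly inside $C$, replace $C$ by $E(C) \triangle E(\partial f)$, which is a disjoint union of shorter cycles enclosing fewer interior faces, and apply induction, using that $\card{(E(C) \triangle E(\partial f)) \cap M} \equiv \card{E(C)\cap M} + \card{E(\partial f) \cap M} \pmod 2$. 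This keeps everything elementary and confined to parities, which is all \Cref{lem:pmc-cycles} needs.
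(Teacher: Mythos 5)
Your proposal is correct and takes essentially the same route as the paper: reduce to \Cref{lem:pmc-cycles} and invoke the standard fact that the bounded facial cycles form an $\mathbb{F}_2$-basis of the cycle space, so parity over facial cycles propagates to all cycles. The paper simply cites this basis fact (to Diestel) and gives the one-line mod-$2$ sum, whereas you additionally sketch how to prove the basis fact; both are fine.
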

\begin{proof}
  The forward implication is a direct consequence of~\Cref{lem:pmc-cycles}.
  The converse comes from the known fact that the bounded faces form a~cycle basis; see for instance~\cite{Diestel12}.

  If $H$ is a subgraph of $G$, let $\tilde{H}$ be the vector of $\mathbb F_2^{E(G)}$ with 1 entries at the positions corresponding to edges of $H$.
  Thus, for any cycle $C$ of $G$, we have $\tilde{C} = \Sigma_{1 \leqslant i \leqslant k} \tilde{F_i}$ where $F_i$ are facial cycles of $G$.
  And $|M \cap E(C)|$ has the same parity as $\Sigma_{1 \leqslant i \leqslant k} |M \cap E(F_i)|$, a sum of even numbers.
\end{proof}

\begin{lemma} \label{lem:4-cycle}
  Let $M$ be a perfect matching cut of a cubic graph $G$.
  Let $C$ be an induced $4$-vertex cycle of~$G$.
  Then, exactly one of the following holds:
\begin{enumerate}[label=(\alph*)]
    \item\label{it:4-cycle-1} $E(C)\cap M = \emptyset$ and the four outgoing edges of $V(C)$ belong to~$M$. 
    \item\label{it:4-cycle-2} $|E(C) \cap M| = 2$, the two edges of $E(C) \cap M$ are disjoint, and none of the outgoing edges of $V(C)$ belongs to $M$.
\end{enumerate}
\end{lemma}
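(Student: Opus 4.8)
The plan is to analyze the local structure around the $4$-cycle $C$ using the fact that $G$ is cubic, so each vertex of $C$ has exactly one edge leaving $C$, giving exactly four outgoing edges. First I would observe that since $M$ is a perfect matching, every vertex of $C$ is incident to exactly one edge of $M$, and this edge is either an edge of $C$ or one of the four outgoing edges. So $C$ contributes some number $k = |E(C) \cap M|$ of matched edges internally, and these internal matching edges must be pairwise disjoint (being part of a matching). In a $4$-cycle the only matchings are $\emptyset$, a single edge, or a pair of opposite edges, so a priori $k \in \{0, 1, 2\}$.

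The key step is to rule out $k = 1$ using \Cref{lem:pmc-cycles}: since $M$ is a cutset, every cycle of $G$ meets $M$ in an even number of edges, and $C$ is a cycle, so $|E(C) \cap M|$ is even. Hence $k = 0$ or $k = 2$. These are exactly the two cases \ref{it:4-cycle-1} and \ref{it:4-cycle-2}, and they are mutually exclusive, so it only remains to pin down the behavior of the outgoing edges in each case.

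In case $k = 0$, no edge of $C$ is in $M$, so each of the four vertices of $C$ must be matched by its unique outgoing edge; hence all four outgoing edges lie in $M$. In case $k = 2$, the two matching edges of $C$ are opposite edges of the $4$-cycle (the only way to get two disjoint edges in a $4$-cycle), and they already saturate all four vertices of $C$; since $M$ is a matching, no vertex of $C$ can be incident to a second edge of $M$, so none of the outgoing edges is in $M$. This establishes both items and the exclusivity.

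The main obstacle, if any, is purely a matter of care: making sure the case analysis of matchings inside a $4$-cycle is exhaustive and correctly invokes that $C$ is induced only to identify its edge set cleanly (induced-ness guarantees $C$ has exactly its four cycle edges and no chord, so ``opposite edges'' is well-defined and the outgoing edges are genuinely distinct from chords). Everything else is a direct application of the matching condition and \Cref{lem:pmc-cycles}; I expect the proof to be short.
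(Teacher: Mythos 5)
Your proof is correct and follows essentially the same approach as the paper: parity of $|E(C)\cap M|$ via \Cref{lem:pmc-cycles} rules out $k=1$, and then the cubic + perfect-matching conditions determine the outgoing edges in each of the two remaining cases. The only cosmetic difference is that the paper cites \Cref{lem:facial-cycle} for the parity step where you (arguably more directly) cite \Cref{lem:pmc-cycles}.
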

\begin{proof}
The number of edges of $M$ within $E(C)$ is even by~\Cref{lem:facial-cycle}.
Thus $|E(C) \cap M| \in \{0, 2\}$, as all four edges of $E(C)$ do not make a~matching.

Suppose that $E(C) \cap M = \emptyset$.
As $M$ is a perfect matching, for every $v \in V(C)$ there is an edge in $M$ incident to $v$ and not in $E(C)$.
As $G$ is cubic, every outgoing edge of $V(C)$ is in $M$.

Suppose instead that $|E(C) \cap M| = 2$.
As $M$ is a~matching, the two edges of $E(C) \cap M$ do not share an endpoint.
It implies that all the four vertices of $C$ are touched by these two edges.
Thus no outgoing edge of $V(C)$ can be in $M$.
\end{proof}

\begin{corollary}\label{cor:4-cycle-propagation}
  Let $M$ be a perfect matching of a cubic graph $G$.
  Let $C_1$, $C_2$ two vertex-disjoint induced $4$-vertex cycles of $G$ such that there is an edge between $V(C_1)$ and $V(C_2)$.
  Then $E(C_1) \cap M \neq \emptyset$ if and only if $E(C_2) \cap M \neq \emptyset$.
\end{corollary}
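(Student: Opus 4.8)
The plan is to apply \Cref{lem:4-cycle} to each of $C_1$ and $C_2$ separately and then use the fact that they are joined by an edge to synchronize the two cases. First I would invoke \Cref{lem:4-cycle} on $C_1$: since $M$ is a perfect matching cut of the cubic graph $G$ and $C_1$ is an induced $4$-cycle, either (a) $E(C_1) \cap M = \emptyset$ and all four outgoing edges of $V(C_1)$ lie in $M$, or (b) $|E(C_1) \cap M| = 2$ (a disjoint pair) and no outgoing edge of $V(C_1)$ lies in $M$. The same dichotomy applies to $C_2$. Note that one subtle point: the corollary is stated for a perfect matching $M$, not explicitly a perfect matching cut, so I should first remark that in a cubic graph every perfect matching that we are considering here is indeed a cutset in the relevant sense — or, more carefully, that \Cref{lem:4-cycle} is exactly the tool we need and the corollary is intended to be read with $M$ a perfect matching cut (so I would either add ``cut'' to the hypothesis or note that the argument only uses the conclusion of \Cref{lem:4-cycle}, which requires $M$ to be a perfect matching cut).

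Next I would let $e = xy$ be an edge with $x \in V(C_1)$ and $y \in V(C_2)$, which exists by hypothesis. Since $C_1$ and $C_2$ are vertex-disjoint, $e$ is an outgoing edge of both $V(C_1)$ and $V(C_2)$. The key case analysis is then: if $E(C_1) \cap M = \emptyset$, then by case (a) of \Cref{lem:4-cycle} applied to $C_1$, the edge $e$ is an outgoing edge of $V(C_1)$ and hence $e \in M$; but then $e$ is an outgoing edge of $V(C_2)$ lying in $M$, so $C_2$ cannot be in case (b) (where no outgoing edge of $V(C_2)$ is in $M$), forcing $C_2$ into case (a), i.e. $E(C_2) \cap M = \emptyset$. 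Symmetrically, if $E(C_2) \cap M = \emptyset$ then $e \in M$ and $C_1$ must be in case (a) as well, so $E(C_1) \cap M = \emptyset$. Contrapositively, $E(C_1) \cap M \neq \emptyset$ iff $E(C_2) \cap M \neq \emptyset$, which is exactly the claim.

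I do not expect a serious obstacle here — the proof is a short propagation argument riding entirely on \Cref{lem:4-cycle}. The only thing to be careful about is the interface between ``perfect matching'' and ``perfect matching cut'': \Cref{lem:4-cycle} (via \Cref{lem:facial-cycle}) genuinely needs $M$ to be a cutset to conclude that $|E(C) \cap M|$ is even, so the cleanest write-up either strengthens the hypothesis of the corollary to ``perfect matching cut'' or explicitly notes that the statement is applied in contexts where $M$ is already known to be a perfect matching cut. Given how the corollary will be used downstream (propagating the ``$E(C) \cap M \neq \emptyset$'' status along chains of $4$-cycles inside the gadgets), reading $M$ as a perfect matching cut is clearly the intent, and with that reading the proof above is complete.
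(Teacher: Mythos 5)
Your proof is correct and follows essentially the same approach as the paper: both apply \Cref{lem:4-cycle} to each of $C_1$, $C_2$ and synchronize the two dichotomies through the shared outgoing edge $e$. The paper argues in the direct direction (assume $E(C_1)\cap M\neq\emptyset$, conclude no outgoing edge of $V(C_1)$ is in $M$, hence some outgoing edge of $V(C_2)$ is missing from $M$, hence $E(C_2)\cap M\neq\emptyset$), while you argue contrapositively starting from $E(C_1)\cap M=\emptyset$; this is a cosmetic difference only. You also correctly spotted that the corollary as stated hypothesizes only a \emph{perfect matching}, whereas \Cref{lem:4-cycle} (through \Cref{lem:facial-cycle}) genuinely requires a perfect matching \emph{cut}; this is a typo in the paper's statement, and the paper's own proof silently uses the stronger hypothesis exactly as you observed, so your clarification is the right reading.
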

\begin{proof}
  Suppose $E(C_1) \cap M \neq \emptyset$.
  By \Cref{lem:4-cycle} on $C_1$, no outgoing edge of $V(C_1)$ is in $M$.
  Thus, there is an outgoing edge of $V(C_2)$ that is not in $M$.
  Applying \Cref{lem:4-cycle} on $C_2$, we have $E(C_2) \cap M \neq \emptyset$.
  We get the converse symmetrically.
\end{proof}

\begin{lemma}\label{lem:hex-3-edges}
Let $M$ be a~perfect matching cut of a~cubic graph~$G$.
If a~$6$-cycle has three outgoing edges in $M$, then all six outgoing edges are in $M$.
\end{lemma}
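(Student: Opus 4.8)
The plan is to mimic the case analysis used for \Cref{lem:4-cycle}, with the given $6$-cycle $C = v_1v_2\cdots v_6 v_1$ playing the role of the $4$-cycle. First I would record that, $C$ being a cycle, \Cref{lem:facial-cycle} (equivalently \Cref{lem:pmc-cycles}) yields that $|M \cap E(C)|$ is even; and since $M$ is a matching, $M \cap E(C)$ is a matching contained in the $6$-cycle, hence of size at most $3$. Therefore $|M \cap E(C)| \in \{0,2\}$.

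Next I would rule out $|M \cap E(C)| = 2$ under the hypothesis. If $M \cap E(C) = \{f,g\}$ with $f,g$ disjoint edges of $C$, then $f$ and $g$ together cover exactly four of the six vertices of $C$; each such covered vertex is matched by $f$ or $g$, so its third (non-cycle) edge is not in $M$. The two remaining vertices are incident within $C$ only to edges not in $M$, so, $M$ being perfect, each of them is matched by its third edge. Consequently at most two outgoing edges of $V(C)$ lie in $M$ (only the third edges of these two uncovered vertices can, and only if they are outgoing at all), contradicting the assumption that three outgoing edges are in $M$. Hence $|M \cap E(C)| = 0$.

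Finally, with $M \cap E(C) = \emptyset$, every vertex $v_i$ of $C$ has both of its cycle edges outside $M$, so, since $M$ is a perfect matching and $G$ is cubic, $v_i$ is matched by its unique third edge; as $C$ has six outgoing edges, this third edge is precisely the outgoing edge at $v_i$, and hence all six outgoing edges of $V(C)$ lie in $M$, as claimed. I do not expect any genuine obstacle here: this is a short parity-plus-counting argument. The only point to be careful about is the implicit assumption that $C$ really has six outgoing edges (equivalently, that it is an induced cycle of the cubic graph $G$) — were $C$ to have a chord it would have fewer outgoing edges, but the very same argument still forces $|M \cap E(C)| = 0$ once three outgoing edges are in $M$, so no separate treatment is needed.
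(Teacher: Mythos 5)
Your proof is correct and takes essentially the same route as the paper: parity forces $|M\cap E(C)|\in\{0,2\}$, the size-$2$ case covers four vertices of $C$ and hence leaves room for at most two outgoing edges in $M$, and the size-$0$ case forces all outgoing edges into $M$ by cubicity. Your closing remark about chords is a reasonable sanity check, though the paper (like you) tacitly takes $C$ to be an induced $6$-cycle so that ``six outgoing edges'' is meaningful.
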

\begin{proof}
\sloppy Let $C$ be our $6$-cycle.
Remember that, as $M$ is a perfect matching cut, $|E(C) \cap M|$ is even.
This means that $|E(C) \cap M|$ is either $0$ or $2$.
If $|E(C) \cap M| = 2$, four vertices of $C$ are touched by $E(C) \cap M$, which rules out that three outgoing edges of $V(C)$ are in $M$.
Thus $E(C) \cap M = \emptyset$ and, $G$ being cubic, every outgoing edge of $V(C)$ is in $M$. 
\end{proof}

\begin{lemma}\label{lem:hex-incident-square}
Let $M$ a perfect matching cut of~a cubic bipartite graph~$G$.
Suppose $C$ is a $6$-cycle $v_1v_2 \ldots v_6$ of $G$, such that $v_2v_3$, $v_3v_4$, $v_5v_6$ and $v_6v_1$ are in some induced 4-cycles.
Then $M \cap E(C) = \emptyset$.
\end{lemma}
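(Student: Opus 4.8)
The plan is to combine the ``even intersection with every face/cycle'' characterisation (\Cref{lem:facial-cycle}, or just \Cref{lem:pmc-cycles}) with the structural dichotomy for induced $4$-cycles (\Cref{lem:4-cycle}). So $|M \cap E(C)|$ is even, hence $0$ or $2$. I would first dispose of the case $|M \cap E(C)| = 2$: if two disjoint edges of the $6$-cycle $C = v_1v_2\ldots v_6$ are in $M$, they together touch exactly four of the six vertices, so the other two vertices are matched by outgoing edges. The goal is to argue that this forces an odd cycle, contradicting bipartiteness. The key is that each of the four edges $v_2v_3, v_3v_4, v_5v_6, v_6v_1$ sits in an induced $4$-cycle, and \Cref{lem:4-cycle} tells us that an induced $4$-cycle either has no outgoing edge in $M$ (case \ref{it:4-cycle-2}) or has all four outgoing edges in $M$ and no internal edge in $M$ (case \ref{it:4-cycle-1}).

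First I would set up notation: let $Q_{23}, Q_{34}, Q_{56}, Q_{61}$ be induced $4$-cycles containing $v_2v_3$, $v_3v_4$, $v_5v_6$, $v_6v_1$ respectively. Note that in a cubic graph a vertex has degree $3$, so the two $4$-cycles $Q_{23}$ and $Q_{34}$ both pass through $v_3$ and each uses two of its three incident edges; since they share the edge-free vertex... actually the cleaner observation is that $Q_{23}$ contains edges $v_2v_3$ and $v_3x$ at $v_3$, and $Q_{34}$ contains $v_3v_4$ and $v_3x'$ at $v_3$, and as $v_3$ has only three incident edges one of $x,x'$ must equal the ``third neighbour'' of $v_3$; I would check whether $Q_{23}$ and $Q_{34}$ can be made to share a vertex or not, and handle both. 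The heart of the argument is a parity/propagation computation: determine, for each of the two possible values of $|M\cap E(C)|$, whether $M$ can be a perfect matching cut at all, using \Cref{lem:4-cycle} on each of the four squares and tracking which of $v_1,\dots,v_6$ are matched inside $C$ versus outward. When $|M\cap E(C)| = 2$, exactly two vertices are matched outward, and these outward edges are outgoing edges of one or more of the four squares; case \ref{it:4-cycle-1} of \Cref{lem:4-cycle} would then force additional edges of $M$ inside those squares, and I expect this to clash with $v_2v_3$ or $v_3v_4$ etc.\ already (or not) being in $M$, or to create an odd closed walk through $C$ and a square. When $|M\cap E(C)|=0$, I need to derive a contradiction from the cubic/bipartite structure together with the squares, again via \Cref{lem:4-cycle}: every $v_i$ is then matched by its outgoing edge, so each square $Q_{23}$ (containing two $v_i$'s) has at least one outgoing edge in $M$ at, say, $v_2$ or $v_3$... wait, the outgoing edge of $v_3$ relative to $C$ might be an edge \emph{of} the square, so I must be careful to distinguish ``outgoing from $C$'' and ``outgoing from $Q$''. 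This bookkeeping is the main obstacle.

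Concretely, the main obstacle is the case analysis tying together the four squares: I must figure out the possible adjacency patterns among $Q_{23},Q_{34},Q_{56},Q_{61}$ and the $6$-cycle (do consecutive squares share a vertex off $C$? can a square coincide with $\{v_1,v_2,v_3,?\}$?), because the conclusion $M\cap E(C)=\emptyset$ is \emph{opposite} to what the naive ``$0$ or $2$'' split might suggest — so the $|M\cap E(C)| = 2$ branch must be \emph{excluded}, presumably because it would force, via \Cref{lem:4-cycle}\ref{it:4-cycle-1} applied to the squares straddling the two outward-matched vertices, an odd cycle or an inconsistency. I would organise it as: (i) reduce to $|M\cap E(C)|\in\{0,2\}$; (ii) assume $|M\cap E(C)| = 2$ for contradiction, locate the two outward-matched vertices among $v_1,\dots,v_6$, apply \Cref{lem:4-cycle} to each of the four named squares to see that these outward edges force case \ref{it:4-cycle-1} somewhere, propagate the forced $M$-edges, and exhibit either a vertex incident to two edges of $M$ or an odd cycle violating \Cref{lem:facial-cycle}; (iii) conclude $M\cap E(C) = \emptyset$. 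Throughout I can lean on \Cref{cor:4-cycle-propagation} to move the ``has an internal $M$-edge'' status between adjacent squares, which should shorten the propagation steps.
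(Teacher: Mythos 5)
Your high-level plan — reduce to $|M\cap E(C)|\in\{0,2\}$ and then exclude the value $2$ — is the right shape, but you have not actually found the argument, and the bookkeeping you worry about never becomes necessary once you spot the key move. Applying \Cref{lem:4-cycle} to the \emph{single} $4$-cycle $Q_{23}$ containing $v_2v_3$ already tells you that $v_1v_2$ and $v_3v_4$, both being \emph{outgoing} edges of $V(Q_{23})$, are either both in $M$ (case~\ref{it:4-cycle-1}) or both absent (case~\ref{it:4-cycle-2}). Similarly, the $4$-cycle containing $v_6v_1$ synchronises $v_1v_2$ with $v_5v_6$. Hence $v_1v_2,v_3v_4,v_5v_6$ are all in $M$ or all out, and if all three were in $M$ then $|E(C)\cap M|=3$, odd, which is impossible. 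So none is in $M$, and the $4$-cycles through $v_3v_4$ and $v_5v_6$ give the same for $v_2v_3, v_4v_5, v_6v_1$. That is the paper's entire proof; there is no need to locate the two outward-matched vertices, no case analysis on where the two $M$-edges of $C$ sit, and no propagation between adjacent squares via \Cref{cor:4-cycle-propagation}.

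A few concrete slips in your writeup. (i)~You say that when $|M\cap E(C)|=0$ you ``need to derive a contradiction'' — but that is exactly the conclusion you want; you only catch this at the very end, and the confusion shows that the overall strategy was not yet settled. (ii)~You claim that case~\ref{it:4-cycle-1} of \Cref{lem:4-cycle} ``would then force additional edges of $M$ inside those squares'' — in fact case~\ref{it:4-cycle-1} asserts $E(Q)\cap M=\emptyset$, i.e.\ \emph{no} edge of the square is in $M$; the forced $M$-edges are the outgoing ones. (iii)~The contradiction to aim for is a cycle carrying an \emph{odd number of $M$-edges}, which violates the cutset characterisation of \Cref{lem:facial-cycle}; it is not an odd cycle in $G$ contradicting bipartiteness. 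Taken together these suggest you had assembled the right tools but were still missing the decisive observation that a single induced $4$-cycle already forces parity agreement among all of its outgoing edges.
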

\begin{proof}
By applying~\Cref{lem:4-cycle} on the 4-cycle containing $v_2v_3$, and the one containing $v_6v_1$, it holds that $v_1v_2 \in M \Leftrightarrow v_3v_4 \in M \Leftrightarrow v_5v_6 \in M$.
Thus none of these three edges can be in $M$, because $C$ would have an odd number of edges in $M$.
Symmetrically, no edge among  $v_2v_3$, $v_4v_5$ and $v_6v_1$ can be in $M$.
Thus no edge of $C$ is in $M$.
\end{proof}

\begin{observation}\label{obs:sides}
Let $G$ be a graph and $M$ be a perfect matching cut of $G$. Let $u, v$ be two vertices of $G$. Then for any path $P$ between $u$ and $v$, 
$|E(P) \cap M|$ is even if and only if $u$ and $v$ are on the same side of $M$. Note that implies that for any paths $P, Q$ from $u$ to $v$, $|E(P) \cap M|$ and $|E(Q) \cap M|$ have same parity.
\end{observation}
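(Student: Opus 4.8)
The plan is to unwind the definition of a cutset. Since $M$ is a perfect matching cut of $G$, by definition there is a proper bipartition $(A,B)$ of $V(G)$ with $M = E(A,B)$. The key elementary fact is that an edge $e = xy$ of $G$ joins the two parts of $(A,B)$ if and only if $e \in M$: indeed $E(A,B)$ consists of \emph{all} edges crossing the bipartition and of no edge with both endpoints in the same part. Before speaking of ``sides'' I would first restrict to the connected component $H$ of $G$ containing $u$ and $v$ --- it exists because $P$ is a $u$--$v$ path --- so that the cut of $M$ is determined on $H$ up to swapping $A$ and $B$, making ``$u$ and $v$ on the same side of $M$'' well defined.

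Next, write a $u$--$v$ path as $P = w_0 w_1 \cdots w_k$ with $w_0 = u$ and $w_k = v$, and walk along it tracking which part of $(A,B)$ each $w_i$ lies in. By the fact above, $w_{i-1}$ and $w_i$ lie in opposite parts precisely when $w_{i-1}w_i \in M$, and in the same part otherwise. A one-line induction on $i$ then shows that $w_i$ lies in the same part as $u = w_0$ if and only if $|E(w_0 \cdots w_i) \cap M|$ is even. Taking $i = k$ gives that $u$ and $v$ are on the same side of $M$ if and only if $|E(P) \cap M|$ is even, which is exactly the claimed equivalence.

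The final sentence is then immediate: for fixed $u$ and $v$ the parity of $|E(P) \cap M|$ equals a single Boolean value (whether $u$ and $v$ lie on the same side of $M$) that does not depend on $P$; hence for any two $u$--$v$ paths $P$ and $Q$ the quantities $|E(P)\cap M|$ and $|E(Q)\cap M|$ have the same parity.

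I do not expect a real obstacle here: the statement is essentially the ``path version'' of the forward implication of \Cref{lem:pmc-cycles}, and the only point requiring a moment of care is making sure the cut of $M$ is well defined on the component carrying $u$ and $v$ before invoking the notion of ``side.''
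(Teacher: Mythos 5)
Your proof is correct and is essentially the argument the paper leaves implicit: the paper states this as an observation with no accompanying proof, treating it as immediate from the definition of a cutset, and your walk-along-the-path argument (each edge of $M$, and only those edges, flips the side) is exactly the intended reasoning. Your remark about restricting to the connected component of $u$ and $v$ so that ``same side'' is well defined is appropriate care, matching the paper's own definition of ``same side'' being given only for connected graphs.
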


\subsection{Reduction}

We will prove \Cref{thm:hard} by reduction from the NP-complete \MSAT~\cite{darmann2020simple}.
In \MSAT, the input is a~3-CNF formula where each variable occurs exactly four times, each clause contains exactly three distinct literals, and no clause contains a negated literal.
Here we say that a~truth assignment on the variables \emph{satisfies} a clause $C$ if at least one literal of $C$ is true and at least least one literal of $C$ is false.
The objective is to decide whether there is a truth assignment that satisfies all clauses.
We can safely assume (and we will) that the variable-clause incidence graph $\text{inc}(I)$ of $I$ has no cutvertex among its ``variable'' vertices.
Indeed the reduction from \textsc{Monotone Not-All-Equal 3-SAT} to its four-occurrence variant does not create such cutvertices if they do not exist originally.
Now if there is a~``variable'' cutvertex $v$ in a~\textsc{Monotone Not-All-Equal 3-SAT}-instance $J$, one can split $J$ into 
$J_1$ made of one connected component $X$ of $\text{inc}(J)-\{v\}$ plus $v$, and $J_2$ made of $\text{inc}(J) \setminus X$.
One can observe that $J$ is positive if and only if $J_1$ and $J_2$ are positive.
As $\text{inc}(J_1)$ and $\text{inc}(J_2)$ sum up to one more vertex than $\text{inc}(J)$, such a~scheme is a polynomial-time Turing reduction to subinstances without ``variable'' cutvertices.

Let $I$ be an instance of \MSAT~with variables $x_1,x_2,\ldots,$ $x_n$ and clauses $m=4n/3$ clauses $C_1,C_2,\ldots,C_m$.
We shall construct, in polynomial time, an equivalent \sPMC-instance~$\construct$ that is~Barnette.

Our reduction consists of three steps.
First we construct a cubic graph $\cubic$ by introducing \emph{variable gadgets} and \emph{clause gadgets}. 
Then we \emph{draw} $\cubic$ on the plane, i.e., we map the vertices of $\cubic$ to a set of points on the plane, and the edges of $\cubic$ to a set of simple curves on the plane. 
We shall refer to this drawing as $\mathcal{R}$.
Note that, this drawing may not be planar, {i.e.,} two simple curves (or analogously the corresponding edges) might intersect at a point which is not their endpoints. 
Finally, we eliminate the crossing points by introducing \emph{crossing gadgets}.
(Recall that if the variable-clause incidence graph of a~\textsc{Not-All-Equal 3-SAT} instance is planar, then its satisfiability can be tested in polynomial time~\cite{Moret88}; hence, we do need crossing gadgets.)  
The resulting graph $\construct$ is Barnette, and we shall prove that $\construct$ has a perfect matching if and only if $I$ is a positive instance of \MSAT.
We now describe the above steps.

\begin{figure}[h!]
    \centering
    \begin{scaletikzpicturetowidth}{\textwidth}
    \begin{tikzpicture}[scale=\tikzscale]
     
    \foreach \x/\y/\w/\z [count = \n] in
 {0/0/0.5/0.5, 0.5/0.5/1/0.5, 1/0.5/1.5/0, 1.5/0/1/-0.5, 1/-0.5/0.5/-0.5, 0/0/0.5/-0.5,
  3/0/3.5/0.5, 3.5/0.5/4/0.5, 4/0.5/4.5/0, 4.5/0/4/-0.5, 4/-0.5/3.5/-0.5, 3/0/3.5/-0.5,
  6/0/6.5/0.5, 6.5/0.5/7/0.5, 7/0.5/7.5/0, 7.5/0/7/-0.5, 7/-0.5/6.5/-0.5, 6/0/6.5/-0.5, 
    9/0/9.5/0.5, 9.5/0.5/10/0, 10/0/9.5/-0.5, 9.5/-0.5/9/0,
    11/0/11.5/0.5, 11.5/0.5/12/0.5, 12/0.5/12.5/0, 12.5/0/12/-0.5, 12/-0.5/11.5/-0.5, 11.5/-0.5/11/0, -1/-2/-0.5/-2, 5/-2/5.5/-2,
    8/-2/8.5/-2, 13/-2/13.5/-2
 }
    {
    	\draw (\x, \y) -- (\w,\z);
    }
    
        \foreach \x/\y/\w/\z [count = \n] in
 {0/0/-0.5/0, -0.5/0/-0.5/-2, 
 0.5/0.5/-1/0.5, -1/0.5/-1/-2, 
 0.5/-0.5/0.5/-1.5, 0.5/-1.5/5/-1.5, 5/-1.5/5/-2,
 1/0.5/3.5/0.5, 1.5/0/3/0, 1/-0.5/3.5/-0.5,
 4/0.5/6.5/0.5, 4/-0.5/6.5/-0.5, 4.5/0/6/0,
 7/-0.5/7/-1.5, 7/-1.5/5.5/-1.5, 5.5/-1.5/5.5/-2,
 7/0.5/8/0.5, 8/0.5/8/0, 8/0/9/0,
 7.5/0/7.5/-1.5, 7.5/-1.5/8/-1.5, 8/-1.5/8/-2,
 11/0/10/0, 11.5/0.5/9.5/0.5, 11.5/-0.5/9.5/-0.5,
 12/-0.5/12/-1.5, 8.5/-1.5/12/-1.5, 8.5/-2/8.5/-1.5,
 12/0.5/13.5/0.5/, 13.5/-2/13.5/0.5,
 12.5/0/13/0, 13/0/13/-2
 }
    {
    	\draw[very thick, red] (\x, \y) -- (\w,\z);
    }
    
    \foreach \x/\y [count = \n] in
 { 0.75/0, 3.75/0, 6.75/0, 9.5/0, 11.75/0
 }
    {
    	\node at (\x, \y) { $\cycle{i}{\n}$};
    }
    
    \node[right] at (13.5,-2) {$\topvarclause{i}{q}$};
    \node[left] at (13,-2) {$\botvarclause{i}{q}$};
    
    \node[right] at (8.5,-2) {$\topvarclause{i}{p}$};
    \node[left] at (8,-2) {$\botvarclause{i}{p}$};
    
    \node[right] at (5.5,-2) {$\topvarclause{i}{k}$};
    \node[left] at (5,-2) {$\botvarclause{i}{k}$};
    
    \node[right] at (-0.5,-2) {$\topvarclause{i}{j}$};
    \node[left] at (-1,-2) {$\botvarclause{i}{j}$};

    \foreach \i in {-1,-0.5,5,5.5,8,8.5,13,13.5}{
      \draw (\i,-2)--++(0,-0.3) ;
    }
    
    \foreach \x/\y [count = \n] in
 {0/0, 0.5/0.5, 1/0.5, 1.5/0, 1/-0.5, 0.5/-0.5,
 3/0, 3.5/0.5, 4/0.5, 4.5/0, 4/-0.5, 3.5/-0.5,
 6/0, 6.5/0.5, 7/0.5, 7.5/0, 7/-0.5, 6.5/-0.5, 
 9/0, 9.5/0.5, 10/0, 9.5/-0.5,
 11/0, 11.5/0.5, 12/0.5, 12.5/0, 12/-0.5, 11.5/-0.5, 
 -1/-2, -0.5/-2, 
 5/-2, 5.5/-2,
 8/-2, 8.5/-2,
 13/-2, 13.5/-2
 }
    {
    	\filldraw (\x, \y) circle (3pt);
    }   
    \end{tikzpicture}
    \end{scaletikzpicturetowidth}
    \caption{Variable Gadget $\varGadget{i}$ corresponding to the variable $x_i$ appearing in the clauses $C_j,C_k,C_p,C_q$ with $j<k<p<q$.}
    \label{fig:var-gadget}
\end{figure}
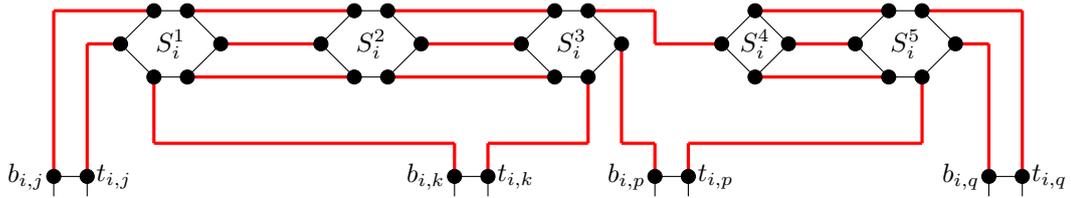

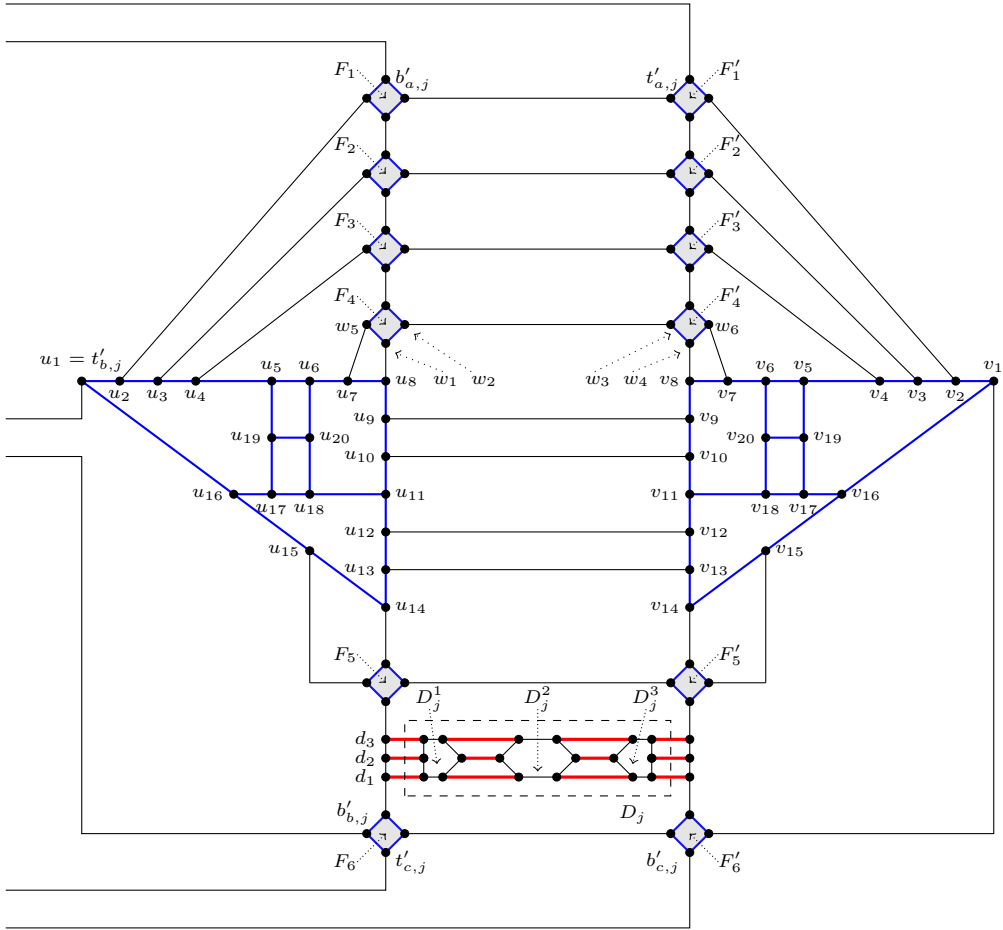
\begin{figure}[h!]
    \centering
        \begin{tikzpicture}[scale = 0.5]

    \foreach \x/\y/\w/\z [count = \n] in
 {0/6/8/6, 8/6/8/0, 8/0/0/6, 4/3/8/3, 5/6/5/3, 6/6/6/3, 5/4.5/6/4.5}
    {
    	\draw[thick,blue] (\x,\y) -- (\w,\z); 
    	\draw[thick,blue] (24 - \x, \y) -- (24 - \w, \z); 
     
    }
    
    \foreach \x/\y [count = \n] in
 {8/5, 8/4, 8/2, 8/1, 8.5/-2, 8.5/-6, 11.5/-3.5, 11.5/-4.5, 8.5/7.5, 8.5/9.5, 8.5/11.5, 8.5/13.5}
    {
    	\draw (\x,\y) -- (24-\x,\y); 
     
    }

    \foreach \x/\y/\w/\z [count = \n] in
 {8/13/8.5/13.5, 8.5/13.5/8/14, 8/14/7.5/13.5,  7.5/13.5/8/13,
  8/11/8.5/11.5, 8.5/11.5/8/12, 8/12/7.5/11.5, 7.5/11.5/8/11,
  8/9/8.5/9.5, 8.5/9.5/8/10, 8/10/7.5/9.5, 7.5/9.5/8/9,
  8/7/8.5/7.5, 8.5/7.5/8/8, 8/8/7.5/7.5, 7.5/7.5/8/7,
   8/-2.5/8.5/-2, 8.5/ -2/8/-1.5, 8/-1.5/7.5/-2, 7.5/-2/8/-2.5, 8/-6.5/8.5/-6, 8.5/-6/8/-5.5, 8/-5.5/7.5/-6, 7.5/-6/8/-6.5}
    {
    	\draw[thick,blue] (\x,\y) -- (\w,\z); 
        \draw[thick,blue] (24-\x,\y) -- (24-\w,\z);
    }

    \foreach \x/\y/\w/\z/\a/\b/\c/\d [count = \n] in
 {8/13/8.5/13.5/8/14/7.5/13.5,
 8/11/8.5/11.5/8/12/7.5/11.5,
 8/9/8.5/9.5/8/10/7.5/9.5, 
 8/7/8.5/7.5/8/8/7.5/7.5,
 8/-2.5/8.5/-2/8/-1.5/7.5/-2,
 8/-6.5/8.5/-6/8/-5.5/7.5/-6
}
    {
    	\filldraw[gray,opacity=0.2] (\x,\y) -- (\w,\z) -- (\a,\b) -- (\c,\d) -- cycle; 
    	\filldraw[gray,opacity=0.2] (24-\x,\y) -- (24-\w,\z) -- (24-\a,\b) -- (24-\c,\d) -- cycle; 
    }

    \foreach \x/\y/\w/\z [count = \n] in
 { 8/13/8/12, 8/11/8/10, 8/9/8/8, 8/7/8/6, 
 8/-1.5/8/0, 8/-2.5/8/-5.5, 7.5/-2/6/-2, 
 6/-2/6/1.5,  8/-4/9/-4, 9.5/-4.5/10/ -4, 
 10/ -4/9.5/-3.5, 9.5/-3.5/9/-3.5, 9/-3.5/9/-4, 9/-4/9/-4.5, 9/-4.5/10/-4.5, 
 11/-4/11.5/-4.5, 11/-4/11.5/-3.5, 
 7.5/7.5/7/6, 7.5/9.5/3/6, 7.5/11.5/2/6, 7.5/13.5/1/6 }
    {
    	\draw (\x,\y) -- (\w,\z); 
        \draw (24-\x,\y) -- (24-\w,\z);
    }    
    
    \foreach \x/\y/\w/\z [count = \n] in
 { 8/-4/9/-4, 10/-4/11/-4, 9.5/-4.5/11.5/-4.5, 9.5/-3.5/11.5/-3.5,
 9/-4.5/8/-4.5, 9/-3.5/8/-3.5}
    {
    	\draw[very thick, red] (\x,\y) -- (\w,\z); 
        \draw[very thick, red] (24-\x,\y) -- (24-\w,\z);
    } 
    
    
    \draw (24,6) -- (24,-6) -- (16.5,-6);
    \draw (0,6) -- (0,5) -- (-2,5); 
    \draw (7.5,-6) -- (0,-6) -- (0,4) -- (-2,4);
    
    \draw (16,-6.5) -- (16,-8.5) -- (-2,-8.5);
    \draw (8,-6.5) -- (8,-7.5) -- (-2,-7.5);
    
     \draw (16,14) -- (16,16) -- (-2,16);
    \draw (8,14) -- (8,15) -- (-2,15);
    
    
    
    \foreach \x/\y [count = \n] in
  {0/6, 1/6, 2/6, 3/6, 5/ 6, 6/ 6, 7/6, 8/6, 8/5, 8/4, 8/3, 8/2, 8/1, 8/0, 6/1.5, 4/ 3, 5/ 3, 6/ 3, 5/ 4.5, 6/ 4.5, 8/14, 8/-6.75 }
	{
		\ifthenelse{\n=1} {\node[above] at (\x, \y) {\scriptsize $u_{\n} = \scriptsize \topclausevar{b}{j}$}; \node[above] at (24 - \x, \y) {\scriptsize $v_{\n}$}; \node[above] at (7.15, -6) {\scriptsize $\botclausevar{b}{j}$}; }{}
		\ifthenelse{\n>1 \and \n<5 } {\node[below] at (\x, \y) {\scriptsize $u_{\n}$}; \node[below] at (24 - \x, \y) {\scriptsize $v_{\n}$};}{}
		\ifthenelse{\n=17 \OR \n=18 } {\node[below] at (\x, \y) {\scriptsize $u_{\n}$}; \node[below] at (24 - \x, \y) {\scriptsize $v_{\n}$};}{}
		
		\ifthenelse{\n>4 \and \n<7 } {\node[above] at (\x, \y) {\scriptsize $u_{\n}$}; \node[above] at (24 - \x, \y) {\scriptsize $v_{\n}$};}{}
		
		\ifthenelse{\n=7} {\node[below] at (\x, \y) {\scriptsize $u_{\n}$}; \node[below] at (24 - \x, \y) {\scriptsize $v_{\n}$};}{}
		
		\ifthenelse{\n=8 \OR \n=11 \OR \n=14 \OR \n=20 } {\node[right] at (\x, \y) {\scriptsize $u_{\n}$}; \node[left] at (24 - \x, \y) {\scriptsize $v_{\n}$};}{}
		
		\ifthenelse{\n = 9 \OR \n=10 \OR \n=12 \OR \n=13 \OR \n=15 \OR \n=16 \OR \n=19} {\node[left] at (\x, \y) {\scriptsize $u_{\n}$}; \node[right] at (24 - \x, \y) {\scriptsize $v_{\n}$};}{}
		
		\ifthenelse{\n = 21} {\node[right] at (\x, \y) {\scriptsize $\botclausevar{a}{j}$}; \node[left] at (24 - \x, \y) {\scriptsize $\topclausevar{a}{j}$};}{}
		
		\ifthenelse{\n = 22} {\node[right] at (\x, \y) {\scriptsize $\topclausevar{c}{j}$}; \node[left] at (24 - \x, \y) {\scriptsize $\botclausevar{c}{j}$};}{}
	
	}
	
	\draw[dashed] (8.5, -5) rectangle (15.5,-3); 
	
	\foreach \x/\y [count = \n] in
  {9.15/-3, 12/-3, 14.85/-3}
  {
  	\node[above] at (\x,\y) {\scriptsize $D^{\n}_j$};
  }
	   \node[below] at (14.5,-5) {\scriptsize $D_j$};
	   
	   \foreach \x/\y/\w/\z [count = \n] in
  {9.15/-2.7/9.35/-4.15, 12/-2.7/12/-4.35, 14.85/-2.7/14.5/-4.15}
  {
  	\draw[densely dotted,->] (\x,\y) -- (\w,\z);
  }
  
  \foreach \x/\y [count = \n] in
  {8/-4.5, 8/-4, 8/-3.5}
  {

	\node[left] at (\x,\y) {\scriptsize $d_{\n}$};
  	
  }
  
  
%
\node[right] at (9,6) {\scriptsize $w_1$};
\node[right] at (10,6) {\scriptsize $w_2$};
\node[right] at (13,6) {\scriptsize $w_3$};
\node[right] at (14,6) {\scriptsize $w_4$};
\node[above] at (7,7) {\scriptsize $w_5$};
\node[above] at (17,7) {\scriptsize $w_6$};
\draw[dotted,->] (9.5,6.25) -- (8.25,6.75);
\draw[dotted,->] (10.25,6.25) -- (8.75, 7.25);
\draw[dotted,->] (13.5,6.25) -- (15.5,7.25);
\draw[dotted,->] (14.5,6.25) -- (15.75,6.75);

\foreach \x/\y [count = \n] in
  {7.5/14.25,7.5/12.25,7.5/10.25, 7.5/8.25, 7.5/-1.25, 7.5/-6.75}
  {

	\node[left] at (\x,\y) {\scriptsize $F_{\n}$};
  	\node[right] at (24-\x,\y) {\scriptsize $F'_{\n}$};
  }
  
  \foreach \x/\y/\w/\z [count = \n] in
  {7.25/14.25/8/13.5 ,7.25/12.25/8/11.5,7.25/10.25/8/9.5, 7.25/8.25/8/7.5, 7.25/-1.25/8/-2, 7.25/-6.75/8/-6}
  {

	\draw[densely dotted,->] (\x,\y) -- (\w,\z);
  	\draw[densely dotted,->] (24-\x,\y) -- (24-\w,\z);
  }
  
   \foreach \x/\y [count = \n] in
 {0/6, 1/6, 2/6, 3/6, 5/ 6, 6/ 6, 7/6, 8/6, 8/5, 8/4, 8/3, 8/2, 8/1, 8/0, 5/ 4.5, 6/ 4.5, 5/ 3, 6/ 3, 4/ 3, 6/1.5}
    {
    	\filldraw (\x, \y) circle (3pt); 
    	\filldraw (24 - \x, \y) circle (3pt); 
     
    }

    \foreach \x/\y [count = \n] in
 {8/13, 8.5/13.5, 8/14, 7.5/13.5, 
 8/11, 8.5/11.5, 8/12, 7.5/11.5, 
 8/9, 8.5/9.5, 8/10, 7.5/9.5, 
 8/7, 8.5/7.5, 8/8, 7.5/7.5,     
 8/-2.5, 8.5/ -2, 8/-1.5, 7.5/-2, 
 8/-6.5, 8.5/-6, 8/-5.5, 8/-4,7.5/-6, 
 9.5/-4.5, 10/ -4, 9.5/-3.5, 9/-3.5, 9/-4, 9/-4.5,
 9/-4, 11/-4, 11.5/-4.5, 11.5/-3.5 ,
 8/-3.5, 8/-4.5
 }
    {
    	\filldraw (\x, \y) circle (3pt);
        \filldraw (24-\x, \y) circle (3pt);
    }

    \end{tikzpicture}
    \caption{Clause gadget $C_j = \left( x_a,x_b,x_c \right)$ with $a<b<c$. 
    A red edge is selected in any perfect matching cut. 
    A blue edge is selected in some perfect matching cut. 
    A black edge is never selected in any perfect matching cut.}
    \label{fig:clause-gadget}
\end{figure}

\begin{enumerate}
\item For each variable $x_i$, let $\varGadget{i}$ denote a~fresh copy of the graph shown in~\Cref{fig:var-gadget}.
  Note that the variable $x_i$ appears in exactly four clauses, say, $C_j,C_k,C_p,C_q$ with $j<k<p<q$.
  The \emph{variable gadget} $\varGadget{i}$ contains the special vertices $\topvarclause{i}{j}$, $\botvarclause{i}{j}$, $\topvarclause{i}{k}$, $\botvarclause{i}{k}$, $\topvarclause{i}{p}$, $\botvarclause{i}{p}$, $\topvarclause{i}{q}$, $\botvarclause{i}{q}$ as shown in the figure.
  We recall that red edges are those forced in any perfect matching cut, while black edges cannot be in any solution.
  An essential part of the proof will consist of justifying the edge colors in our figures.
  
  For each clause $C_j = \left( x_a,x_b,x_c \right)$ with $a<b<c$ let $\clauseGadget{j}$ denote a new copy of the graph shown in \Cref{fig:clause-gadget}.
  The \emph{clause gadget} $\clauseGadget{j}$ contains the special vertices $\topclausevar{a}{j}$, $\botclausevar{a}{j}$, $\topclausevar{b}{j}$, $\botclausevar{b}{j}$, $\topclausevar{c}{j}$, $\botclausevar{c}{j}$, as shown in the figure. 
  Then for each variable $x_i$ that appears in the clause $C_j$, introduce two new edges $E_{ij} = \left\{ \topvarclause{i}{j} \topclausevar{i}{j}, \botvarclause{i}{j} \botclausevar{i}{j} \right\}$. 
  Let $\cubic$ denote the  graph defined as follows.
  $$V(\cubic) = \displaystyle\bigcup\limits_{i=1}^{n} V(\varGadget{i}) \cup \displaystyle\bigcup\limits_{j=1}^{m} V(\clauseGadget{j}) $$ $$ E(\cubic) = \displaystyle\bigcup\limits_{i=1}^{n} E(\varGadget{i}) \cup \displaystyle\bigcup\limits_{j=1}^{m} E(\clauseGadget{j}) \cup \displaystyle\bigcup\limits_{x_i\in C_j} E_{ij}.$$ 

  We assign to each edge $e \in E_{i,j}$ its variable as $\var(e) = i$.
  Note that, for a~variable gadget $\varGadget{i}$, there are exactly eight outgoing edges of $V(\varGadget{i})$.
  
  \medskip
   
    \item In the next step, we generate a drawing $\mathcal{R}$ of $\cubic$ on the plane according to the following procedure.
      \begin{enumerate}
      \item For each variable $x_i$, we embed $\varGadget{i}$ as a translate of the variable gadget of \Cref{fig:var-gadget} into $[0,1] \times [2i,2i + 1]$.
        \item For each clause $C_j$, we embed $\clauseGadget{j}$ as a translate of the clause gadget of \Cref{fig:clause-gadget} into $[2, 3] \times [2j, 2j+1]$.  
        
        \item Two edges incident to vertices in the same variable gadget or same clause gadget do not intersect in $\mathcal{R}$. For two variables $x_i, x_{i'}$ and clauses $C_j, C_{j'}$ with $x_i\in C_j, x_{i'} \in C_{j'}$, exactly one of the following holds:
        \begin{enumerate}
        \item For each pair of edges $(e,e') \in E_{ij} \times E_{i'j'}$, $e$ and $e'$ intersect exactly once in $\mathcal{R}$. When this condition is satisfied, we call $(E_{ij}, E_{i'j'})$ a~\emph{crossing quadruple}.
          Moreover, we ensure that the interior of the subsegment of $e \in E_{ij}$ between its two intersection points with edges of $E_{i'j'}$ is not crossed by any edge;
            \item There is no pair of edges $(e,e') \in E_{ij} \times E_{i'j'}$ such that $e$ and $e'$ intersect in $\mathcal{R}$; 
         \end{enumerate}
\end{enumerate}

    \medskip
     
    \item\label{it:cross} For each crossing quadruples $(E_{ij}, E_{i'j'})$ replace the four crossing points shown in \Cref{fig:cross-replace-a} by the crossing gadget shown in \Cref{fig:cross-replace-b}.
\end{enumerate}

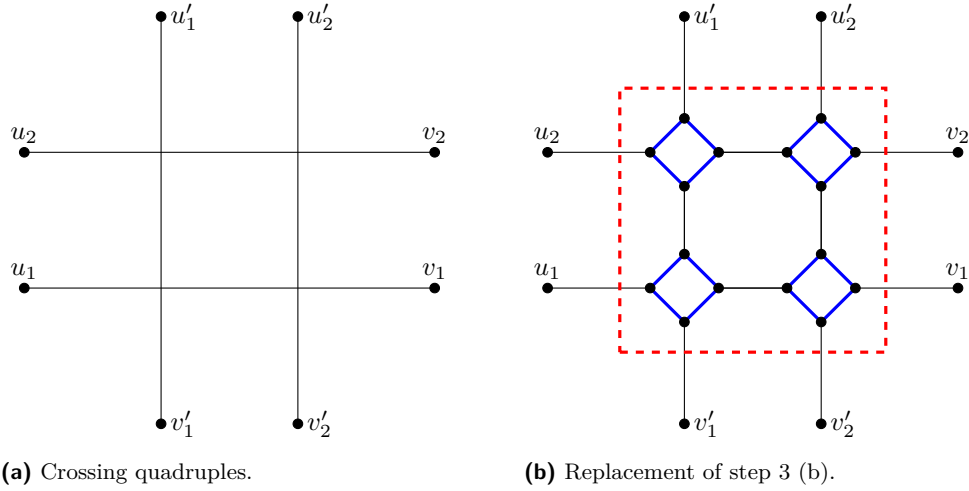
\begin{figure}[t]
\centering
\begin{subfigure}{0.45\textwidth}
\begin{tikzpicture}[scale=.9]
\foreach \x/\y/\z/\t [count = \n] in
 {0/2/6/2, 2/0/2/6, 0/4/6/4, 4/0/4/6} {
	\draw (\x, \y) -- (\z, \t);
}
\node[above] at (0, 2) {$u_1$};
\node[above] at (0, 4) {$u_2$};
\node[above] at (6, 2) {$v_1$};
\node[above] at (6, 4) {$v_2$};
\node[right] at (2, 6) {$u_1'$};
\node[right] at (4, 6) {$u_2'$};
\node[right] at (2, 0) {$v_1'$};
\node[right] at (4, 0) {$v_2'$};

\foreach \x/\y [count = \n] in
 {0/2, 0/4, 4/0, 2/0, 6/2, 6/4, 2/6, 4/6} {
	\filldraw (\x, \y) circle (2pt);
}	
\end{tikzpicture}
\subcaption{Crossing quadruples.}
\label{fig:cross-replace-a}
\end{subfigure}
~~~
\begin{subfigure}{0.45\textwidth}
\begin{tikzpicture}[scale=.9]

\foreach \x/\y/\z/\t [count = \n] in
 {1.5/2/2/1.5, 2/1.5/2.5/2, 2.5/2/2/2.5, 2/2.5/1.5/2,
 3.5/2/4/1.5, 4/1.5/4.5/2, 4.5/2/4/2.5, 4/2.5/3.5/2,
 1.5/4/2/3.5, 2/3.5/2.5/4, 2.5/4/2/4.5, 2/4.5/1.5/4,
 3.5/4/4/3.5, 4/3.5/4.5/4, 4.5/4/4/4.5, 4/4.5/3.5/4} {
	\draw[very thick, blue] (\x, \y) -- (\z, \t);
}
\node[above] at (0, 2) {$u_1$};
\node[above] at (0, 4) {$u_2$};
\node[above] at (6, 2) {$v_1$};
\node[above] at (6, 4) {$v_2$};
\node[right] at (2, 6) {$u_1'$};
\node[right] at (4, 6) {$u_2'$};
\node[right] at (2, 0) {$v_1'$};
\node[right] at (4, 0) {$v_2'$};

\foreach \x/\y/\z/\t [count = \n] in
 {0/2/1.5/2, 2.5/2/3.5/2, 2/0/2/1.5, 2/2.5/2/3.5,
  0/4/1.5/4, 2.5/4/3.5/4, 2/2.5/2/3.5, 2/4.5/2/6,
  2.5/4/3.5/4, 4.5/4/6/4, 4/2.5/4/3.5, 4/4.5/4/6,
  2.5/2/3.5/2, 4.5/2/6/2, 4/0/4/1.5, 4/2.5/4/3.5} {
	\draw (\x, \y) -- (\z, \t);
}

\foreach \x/\y [count = \n] in
 {0/2, 0/4, 2/0, 4/0, 6/2, 6/4, 2/6, 4/6, 1.5/2, 2/1.5, 2.5/2, 2/2.5, 3.5/2, 4/1.5, 4.5/2, 4/2.5, 1.5/4, 2/3.5, 2.5/4, 2/4.5, 3.5/4, 4/3.5, 4.5/4, 4/4.5} {
	\filldraw (\x, \y) circle (2pt);
}	
\node[rectangle,
    draw = red,
    very thick,
    dashed,
    minimum width = 3.5cm, 
    minimum height = 3.5cm] (r) at (3,3) {};
\end{tikzpicture} 
\subcaption{Replacement of step 3 (b).}
\label{fig:cross-replace-b}
\end{subfigure}
\caption{Replacement of a crossing by a crossing gadget.}
\label{fig:cross-replace}
\end{figure}
Let $\construct$ denote the resulting graph. We shall need the following definitions. 

\begin{definition}
Any edge of $\construct$ whose both endpoints are not contained withing the same gadget (variable, clause, or crossing) is a~\emph{connector edge}. 
Any endpoint of a connector edge is called a \emph{connector vertex}. 
For a connector edge $e$ incident to a crossing gadget, $var(e)$ is the index of the variable gadget it was originally going to. 
To each connector edge $uv$, we associate the variable $\var(uv)$ to both $u$ and $v$, denoted $\var(u), \var(v)$.
\end{definition}

\newcommand{\U}[1]{U_{j}}

\newcommand{\V}[1]{V_{j}}

Now we shall distinguish some 4-cycles of $\construct$.

\begin{definition}
An (induced) 4-cycle $C$ of $\construct$  is a \emph{crossover} 4-cycle if it belongs to some crossing gadget.
\end{definition}

 \begin{definition}
 An (induced) 4-cycle $C$ of $\construct$ is \emph{special} if $C$ it some $F_i$ or $F'_i$ of some~$\clauseGadget{j}$.
 \end{definition}

The special 4-cycles of a~particular clause gadget $\clauseGadget{j}$ are highlighted in~\Cref{fig:clause-gadget}.  
In the next section, we show that $\construct$ is indeed a~3-connected cubic bipartite planar graph.

\subsection{$\construct$ is Barnette}

We shall show that the constructed graph is Barnette.

\begin{lemma}\label{lem:3-connectedness}
The graph $\construct$ is 3-connected.
\end{lemma}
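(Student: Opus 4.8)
The plan is to show that $\construct-S$ is connected for every $S\subseteq V(\construct)$ with $|S|\le 2$; since $\construct$ is connected and has far more than four vertices, this is exactly $3$-connectedness. The argument rests on three ingredients. \textbf{(I) Gadget robustness.} For each gadget type $\Gamma$ (the variable gadget of \Cref{fig:var-gadget}, the clause gadget of \Cref{fig:clause-gadget}, the crossing gadget of \Cref{fig:cross-replace-b}), let $\widehat\Gamma$ be $\Gamma$ with one extra vertex made adjacent to all its connector vertices; I would verify that $\widehat\Gamma$ is $3$-connected, equivalently that for every $W\subseteq V(\Gamma)$ with $|W|\le 2$ each component of $\Gamma-W$ contains a connector vertex of $\Gamma$ outside $W$. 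This is a finite inspection: it is immediate for the crossing gadget (its sixteen vertices form four $4$-cycles cyclically joined, the eight ``internal'' vertices forming an $8$-cycle each pendant to a distinct connector vertex), and for the variable and clause gadgets it amounts to exhibiting three internally vertex-disjoint paths between the relevant pairs of vertices — routine since these gadgets are built from small $2$-connected pieces strung together cyclically. \textbf{(II) $\text{inc}(I)$ is $2$-connected.} We may assume $\text{inc}(I)$ is connected; by hypothesis it has no variable cutvertex, and no clause vertex $C_j$ can be one, for otherwise some component $D$ of $\text{inc}(I)-C_j$ meets $N(C_j)=\{x_a,x_b,x_c\}$ in a single variable $x_\ast$, and then $x_\ast$ (which has degree $4$, so $D\neq\{x_\ast\}$) would be a variable cutvertex. \textbf{(III)} Every vertex of $\construct$ lies in exactly one gadget, and for each incidence $x_i\in C_j$ the connector edges realise two ``wires'' from $\varGadget i$ to $\clauseGadget j$ passing through the same crossing gadgets but using disjoint vertices there; contracting every gadget returns $\text{inc}(I)$ with all edges doubled.

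Assume now $|S|\le 2$ and $\construct-S$ is disconnected; then $S$ meets at most two gadgets. \emph{Case A: $S\subseteq V(\Gamma)$ for a single gadget $\Gamma$.} By (I) every component of $\Gamma-S$ contains a connector vertex of $\Gamma$ outside $S$, and — since connector edges join distinct gadgets — it is joined to $R:=\construct-V(\Gamma)$. If $\Gamma=\varGadget i$ or $\clauseGadget j$, then contracting the gadgets of $R$ gives a graph that is connected iff $\text{inc}(I)$ minus that one vertex is, which holds by (II); so $R$, hence $\construct-S$, is connected — contradiction. \emph{Case B: $S=\{a,b\}$ with $a\in V(\Gamma_1)$, $b\in V(\Gamma_2)$, $\Gamma_1\neq\Gamma_2$.} Here $\Gamma_1-a$ and $\Gamma_2-b$ are connected (the gadgets are $2$-connected by (I)), so contracting them and the remaining gadgets yields the doubled $\text{inc}(I)$ minus at most the two connector edges at $a$ and $b$. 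Deleting one connector edge leaves its incidence joined through the other copy of its wire, and if the two deletions kill both copies of one incidence $\{x_\ell,C_{\ell'}\}$ then $\varGadget\ell$ and $\clauseGadget{\ell'}$ stay joined by a detour since $\text{inc}(I)$ is $2$-edge-connected by (II); either way the contracted graph, hence $\construct-S$, is connected — contradiction.

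This leaves Case A with $\Gamma=X$ a crossing gadget for a quadruple $(E_{ij},E_{i'j'})$, where $R=\construct-V(X)$ may be disconnected — this is the delicate case, and verifying it (together with ingredient (I)) is the main obstacle. The four gadgets $\varGadget i,\varGadget{i'},\clauseGadget j,\clauseGadget{j'}$ avoid $S$. I would first check, by a finite verification on the $16$-vertex gadget $X$, that for every $S$ with $|S|\le 2$ at least one of the four wire-copies through $X$ survives in $X-S$ — plausible because each wire-copy has two internally disjoint routes inside $X$ and the four ``spare'' routes have no common vertex. Say a $\{i',j'\}$-wire survives, so $\varGadget{i'}$ and $\clauseGadget{j'}$ are connected in $\construct-S$. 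Next, by $2$-edge-connectivity of $\text{inc}(I)$, a path from $x_i$ avoiding the edge $x_iC_j$ can be truncated at the first edge $x_{i'}C_{j'}$ it uses, giving a path in $R$ from $\varGadget i$ to $\varGadget{i'}$ or $\clauseGadget{j'}$; likewise for $\clauseGadget j$. Hence all four of $\varGadget i,\varGadget{i'},\clauseGadget j,\clauseGadget{j'}$ lie in one component of $\construct-S$. Finally, any other gadget lies in a component of $R$ whose only edges to the rest of $\text{inc}(I)$ were $x_iC_j$ and $x_{i'}C_{j'}$, so it contains one of $x_i,C_j,x_{i'},C_{j'}$ and hence one of the four; and every component of $X-S$ contains a connector vertex of $X$ whose dangling wire leads to one of the four. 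Thus $\construct-S$ is connected — contradiction, completing the proof.
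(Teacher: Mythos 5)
Your proposal is correct and follows essentially the same strategy as the paper: a case analysis on where the two deleted vertices lie, driven by (a) a local gadget-robustness property, (b) the non-existence of cut structures in $\text{inc}(I)$, and (c) the doubled connector edges between adjacent gadgets. Your organisation differs slightly — you split by whether $S$ hits one gadget or two, where the paper splits by whether both deleted vertices are connector vertices — but the underlying ideas are identical, and your (I) makes explicit a finite check the paper also implicitly relies on. The one place the two proofs genuinely diverge is the hardest sub-case, a crossing gadget split by $S$: you route the reconnection globally through $2$-edge-connectivity of $\text{inc}(I)$ and a surviving wire-copy, whereas the paper uses the cleaner local observation that the two halves of the split crossing gadget always retain connector vertices to a common neighbouring gadget (because the two parallel connectors on each side of the crossing land in distinct $4$-cycles). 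Both work; the paper's version avoids the extra detour argument, while yours is arguably more uniform since $\text{inc}(I)$'s $2$-connectivity (which you prove, noting the paper only records bridgelessness) already handles the non-crossing gadgets in Case~A.
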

\begin{proof}
  Observe that, for any two adjacent gadgets $\mathcal X, \mathcal Y$, there are two disjoint connector edges from $\mathcal X$ to $\mathcal Y$.
  We consider $\construct$ after the removal of two vertices~$u, v$.

  First assume that $u$ and $v$ are not both connector vertices. 
  Thus two gadgets $\mathcal X$ and $\mathcal Y$ are adjacent in $\construct$ if and only if they are adjacent in $\construct - \{u, v\}$.
  And in particular, each pair of gadgets are then connected in $\construct - \{u, v\}$.

Then $\construct - \{u, v\}$ can only be disconnected if there exists, inside a~same gadget, two vertices that are disconnected in $\construct - \{u, v\}$.
In particular, this gadget is disconnected by the removal of $u, v$, which forces both $u$ and $v$ to be picked inside it.
Indeed every gadget is 2-connected.
We go through the three kinds of gadgets.
\begin{itemize}
\item If it is a variable gadget $\varGadget{i}$, then $\varGadget{i}$ is split in two components each containing a connector vertex.
  Let $a$ (resp.~$b$) be a~connector vertex in the first (resp.~second) component.
  As no ``variable'' vertex is a cutvertex in $\text{inc}(I)$, there is a~path $P$ in $\text{inc}(I)  - \{x_i\}$ from ``clause $C_{a'}$'' to ``clause $C_{b'}$'', where $C_{a'}$ and $C_{b'}$ are the clauses corresponding respectively to $a$ and $b$ (i.e., with the notations of~\Cref{fig:var-gadget}, $a', b'$ are the second indices of $a, b$ respectively).
Thus $a$ and $b$ are connected in $\construct - \{u, v\}$.
	\item If it is a clause gadget $\clauseGadget{j}$, it is split in two connected components such that one contains a connector vertex $t'_{i, j}$ and the other contains $b'_{i, j}$.
	Thus we can simply follow the path from $t'_{i, j}$ to $t_{i, j}$, then to $b_{i, j}$ and finally back to~$b'_{i, j}$ to connect the two parts of the clause gadget.  
      \item If it is a crossing gadget $X$, then the split separates $X$ in two connected components, but note that there exists a~gadget $\mathcal Y$ incident to both components.
        Thus, as $\mathcal Y$ is connected, the subgraph induced by their union, and hence $\construct - \{u, v\}$, is connected.
\end{itemize}

We now deal with the case when both $u$ and $v$ are connector vertices.
Observe that every gadget remains connected by removing up to two connector vertices inside it.
Therefore every gadget is connected in $\construct - \{u, v\}$.
By the first paragraph, the only interesting case is when $u$ and $v$ are the endpoints of two distinct connector edges between the same pair of gadgets.
Then, the effect of removing $u, v$ is to remove the link between the two gadgets.

However $\text{inc}(I)$ cannot have a bridge, for otherwise it would have a~``variable'' vertex that is a cutvertex.
In turn, one can see that this implies that the gadget adjacency graph is bridgeless.
\end{proof}

\begin{lemma}\label{lem:Barnette}
The graph $\construct$ is Barnette. 
\end{lemma}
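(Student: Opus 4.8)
We already have $3$-connectedness of $\construct$ from \Cref{lem:3-connectedness}, so it remains to verify that $\construct$ is planar, cubic and bipartite.

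\textbf{Planarity} follows from the construction itself. The drawing $\mathcal{R}$ realizes $\cubic$ in the plane with every crossing confined to some crossing quadruple. For a fixed crossing quadruple $(E_{ij}, E_{i'j'})$, the ``moreover'' clause of its definition guarantees that the four crossing points, together with the four edge-subsegments joining them, lie inside a small open disk that no other edge meets. Replace the contents of that disk by a scaled translate of the plane drawing of \Cref{fig:cross-replace-b}; its eight terminals $u_1,u_2,v_1,v_2,u_1',u_2',v_1',v_2'$ appear on the boundary circle in the same cyclic order as the eight free ends of the old subsegments, so the replacement is planar. Doing this over pairwise disjoint disks produces a plane embedding of $\construct$.

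\textbf{Cubicity} is a finite inspection of the three templates. In \Cref{fig:var-gadget} every vertex has degree $3$ except the eight anchors, which have degree $2$; in \Cref{fig:clause-gadget} (with the sub-gadget $D_j$ included) every vertex has degree $3$ except the six anchors, which have degree $2$; in \Cref{fig:cross-replace-b} every vertex of the four $4$-cycles has degree $3$, while each of the eight terminals has degree $1$ within the gadget. Each of these under-degree vertices is the endpoint of exactly one connector edge (an $E_{ij}$-edge for an anchor, a wire edge for a terminal), and no other vertex is incident to a connector edge, so once the connector edges are added every vertex has degree exactly $3$.

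\textbf{Bipartiteness} is the substantive part, and I expect the parity bookkeeping below to be the main obstacle. Since $\construct$ has an edge and is $2$-connected and plane, a proper $2$-colouring of $\construct$ is the same as a proper bipartition, which is the same as saying $E(\construct)$ is a cutset of $\construct$; by \Cref{lem:facial-cycle} (applied with $M = E(\construct)$), this holds if and only if every facial cycle of $\construct$ has even length. It therefore suffices to show all faces are even. Each gadget template is bipartite: a direct check for the variable and crossing gadgets, and for the clause gadget this is exactly the purpose of $D_j$, which is plugged into an even face precisely to cancel the two odd faces created when the subdivided vertices of the cubes are given their third neighbours. Hence all faces interior to a variable gadget, to a clause gadget (with $D_j$), or to a crossing gadget (of length $4$ or $8$) are even. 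The remaining faces straddle several gadgets; each is bounded by arcs of gadget boundaries together with pairs of parallel ``wires'', where a wire joining a variable anchor to the matching clause anchor is a path of odd length ($8k+1$ when it runs straight through $k$ crossing gadgets). One checks, from the explicit template $2$-colourings, that these boundary lengths are always even; organizing this count --- which is exactly what necessitates the doubled anchor sets of the variable and clause gadgets and the paired wires of the crossing gadgets --- is the genuinely delicate step. With all facial cycles even, $\construct$ is bipartite, and together with $3$-connectedness, cubicity and planarity, $\construct$ is Barnette.
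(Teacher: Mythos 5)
Your overall structure matches the paper's: 3-connectedness is delegated to \Cref{lem:3-connectedness}, planarity and cubicity are read off the construction, and bipartiteness is established by checking that every bounded face has even length, splitting into faces interior to a gadget (even, because gadgets are bipartite) versus faces straddling several gadgets. That is exactly the paper's decomposition.

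Where your write-up diverges, and is a bit misleading, is in the description of the straddling faces. You speak of the boundary involving ``wires'' of odd length $8k+1$ running straight through $k$ crossing gadgets, but in $\construct$ the crossing gadgets are themselves gadgets: a face boundary never enters the interior of a crossing gadget, and the inter-gadget edges are single connector edges, not long wires. (Your length count is also off even on its own terms.) The paper handles this more cleanly by defining, for each gadget, the \emph{exposed paths}: the arcs along the external face between two cyclically consecutive outgoing edges. Each exposed path is observed to have an even number of vertices, and any straddling face is a concatenation of exposed paths joined by connector edges, so it is even. This formulation makes the parity bookkeeping uniform over the three gadget types and avoids reasoning about paths ``through'' crossing gadgets. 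Both proofs ultimately rest on an inspection of the gadget templates, which you honestly flag as the delicate step; your argument would read better if you replaced the wire-length discussion by the exposed-path observation.
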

\begin{proof}
  By the plane embedding of the crossing gadgets, $\construct$ is planar.
  One can check that $\construct$ is cubic, by observing that within each gadget (variable, clause, crossing), all the vertices have degree~3, except vertices of degree~2, 
  which are exactly those with an incident edge leaving the gadget.
  By \Cref{lem:3-connectedness}, $\construct$ is 3-connected.

  We shall thus prove the bipartiteness of $\construct$.
Recall that our construction had three main components: variable gadgets, clause gadgets and crossing gadgets. 
For a particular gadget $H$, observe that, all the outgoing edges of $H$ lie in the external face of $H$.
Circularly order the outgoing edges of $H$ by $e_1, \dots, e_p$, when going, say, clockwise. 
Take any two consecutive outgoing edges $e_i, e_{i+1}$.
Let $a_i(H), a_{i+1}(H)$ be the vertices of $H$ that are also incident to $e_i$ and $e_{i+1}$, respectively.
We can observe from our construction that the path from $a_i$ to $a_{i+1}$, 
denoted as $P(H,a_i,a_{i+1})$ along the external face of $H$ in clockwise order always has an even number of vertices. 

We call the path $P(H,a_i,a_{i+1})$ an \emph{exposed path} of $H$.
(Observe that a~particular gadget has several exposed paths.)
Let $F$ be a~bounded face of $\construct$.
If $F$ is a finite face of~a variable, clause, or crossing gadget, then $|V(F)|$ is even because $H$ is bipartite.
Otherwise, $F$~is a~union of exposed paths, and the previous arguments imply that $|V(F)|$ is even.
\end{proof}

\subsection{Properties of variable and crossing gadgets}

\begin{lemma}\label{lem:var-connector}
  Let $M$ be a perfect matching cut of $\construct$.
  Then for any variable gadget $\varGadget{i}$, $M \cap V(\varGadget{i})$ is the matching formed by the red edges in~\Cref{fig:var-gadget}.
  In particular, $M$ does not contain any connector edge incident to a~variable gadget. 
\end{lemma}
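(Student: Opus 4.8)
The plan is to analyze the variable gadget of \Cref{fig:var-gadget} locally, using the structural lemmas already established, and show that the red edges form the unique perfect matching cut intersection. The gadget is essentially a chain of subdivided $4$-cycles (the $S_i^1,\dots$ blocks) connected by paths, together with the eight anchor pairs $t_{i,j},b_{i,j}$ and their pendant connector edges. The key tools are \Cref{lem:4-cycle} (a perfect matching cut meets an induced $4$-cycle either in $0$ edges with all four outgoing edges selected, or in $2$ disjoint edges with no outgoing edge selected) and \Cref{cor:4-cycle-propagation} (adjacent induced $4$-cycles agree on whether they are ``empty''). Since each block $S_i^r$ is (a subdivision of) a $4$-cycle, and consecutive blocks are joined by short paths, I expect a propagation argument to pin down the state of every block.

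\textbf{Key steps, in order.} First I would identify in \Cref{fig:var-gadget} the induced $4$-cycles and the forced black edges: the connector edges leaving the gadget and certain internal edges must be black because the degree-$2$ anchor vertices each have one pendant edge, and a perfect matching cut restricted to the gadget (in the sense defined in the Preliminaries, where $M\cap E(U)$ must be a perfect matching cut of $G[U]$) cannot use those. Concretely, each anchor vertex has degree $2$ inside the gadget; if its incident connector edge were in $M$, then by \Cref{lem:4-cycle} applied to the nearby $4$-cycle we would get a contradiction with the parity of edges along that $4$-cycle or with the matching condition — this is where I would do the careful local case check justifying that every connector edge incident to the variable gadget is excluded. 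Second, with the connector edges out, every anchor vertex is matched inside the gadget, which forces its unique ``other'' internal edge; this is the first batch of red edges. Third, I would run the propagation: starting from the forced red edges at the anchors, \Cref{lem:4-cycle} forces the state of the first $4$-cycle block, \Cref{cor:4-cycle-propagation} carries it to the next block along the connecting path, and so on down the chain, and I would check that the only globally consistent assignment is the one shown (in particular, no block can be in the ``empty'' state (a) of \Cref{lem:4-cycle}, because that would force outgoing edges into $M$ that we have already shown are black). Finally, I would note that once all blocks and anchors are fixed, the matching on the remaining (path) edges is determined, giving exactly the red edge set, and in particular confirming $M$ contains no connector edge incident to $\varGadget{i}$ — and as a byproduct that the eight anchors all lie on the same side of $M$.

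\textbf{Main obstacle.} The technical heart is the first step: rigorously ruling out that a connector edge at an anchor is in $M$, and more generally ruling out the ``empty $4$-cycle'' option in each block. This requires a somewhat tedious local analysis of the specific $36$-vertex gadget — tracking which edges become black as soon as one makes a tentative choice, and deriving a parity contradiction on one of the faces via \Cref{lem:facial-cycle}. The propagation itself is routine once the blocks are correctly recognized as induced $4$-cycles (subdivisions do not literally give $4$-cycles, so I would need to argue at the level of the relevant small faces, possibly invoking \Cref{lem:hex-incident-square} or \Cref{lem:hex-3-edges} for the hexagonal faces created by the subdivisions). I would also double-check the boundary blocks near the four pairs of anchors, since those are where the chain ``turns'' and the argument is least symmetric.
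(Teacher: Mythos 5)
Your high-level idea — propagate forced edges through the gadget via parity arguments and the local $4$-cycle/$6$-cycle lemmas — is the right spirit, but your plan inverts the direction of the argument in a way that leaves a genuine gap, and it mischaracterizes the blocks. The blocks $S_i^1, S_i^2, S_i^3, S_i^5$ are $6$-cycles and only $S_i^4$ is a $4$-cycle, so the gadget is not ``a chain of subdivided $4$-cycles''; \Cref{lem:4-cycle} and \Cref{cor:4-cycle-propagation} do not apply directly to most of the blocks.

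More importantly, your ``first step'' — ruling out that a connector edge at an anchor is in $M$ — is not something you can establish up front, and your proposed mechanism does not apply: the anchor vertices $t_{i,j}, b_{i,j}$ sit at the ends of long pendant paths and are \emph{not} incident to any induced $4$-cycle, so there is no ``nearby $4$-cycle'' on which to invoke \Cref{lem:4-cycle}. You also cannot simply invoke the clause/crossing-side constraints, as those rely (in the paper) on this very lemma. What you are missing is a \emph{seed}: an unconditional first forced step inside the gadget. The paper gets this by applying \Cref{lem:hex-incident-square} to the central $6$-cycle $S_i^2$, whose alternating edges each lie on a small induced $4$-cycle formed by pairs of parallel connecting edges between consecutive hexagons; this yields $E(S_i^2) \cap M = \emptyset$ with no prior assumption. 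From there, all outgoing edges of $S_i^2$ are in $M$, and one propagates \emph{outward} — via \Cref{lem:hex-3-edges} on $S_i^1$ and $S_i^3$, then \Cref{lem:4-cycle} on $S_i^4$, then \Cref{lem:hex-3-edges} on $S_i^5$ — until every vertex of the gadget is saturated by a red edge. The exclusion of connector edges is then a one-line \emph{consequence} (all anchor vertices are already matched internally), not the starting point. Your plan reverses this dependency and so never gets off the ground.
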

\begin{proof}
  Consider the variable gadget $\varGadget{i}$.
  By applying~\Cref{lem:hex-incident-square} on the 6-cycle~$\cycle{i}{2}$ (which satisfies the requirement of having four particular edges in some 4-cycles), we get that all outgoing edges of~$V(\cycle{i}{2})$ are in~$M$.
  We can thus apply \Cref{lem:hex-3-edges} on the 6-cycles~$\cycle{i}{1}$ and $\cycle{i}{3}$, and obtain that all outgoing edges of these cycles are in $M$.

  Now there is an outgoing edge of the 4-cycle $\cycle{i}{4}$ that is in $M$, hence by \Cref{lem:4-cycle}, all of them are.
  We can finally apply \Cref{lem:hex-3-edges} on the 6-cycle $\cycle{i}{5}$, and get that all the red edges of~\Cref{fig:var-gadget} should indeed be in~$M$.
  In particular, as all the vertices of $\varGadget{i}$ are touched by red edges, the connector edges incident to a~variable gadget cannot be in~$M$.
\end{proof}

Now we prove a property of the crossover 4-cycles.

\begin{lemma}\label{lem:F-cross}
  Let $M$ be a perfect matching cut of $\construct$ and $F$ be a crossover \mbox{4-cycle}.
  Then $\left| E(F) \right|=2$.
\end{lemma}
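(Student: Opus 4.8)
The statement is to be read as $|E(F)\cap M|=2$. Since $F$ is an induced $4$-cycle of the cubic graph $\construct$ and $M$ is a perfect matching cut, \Cref{lem:4-cycle} already gives $|E(F)\cap M|\in\{0,2\}$, so the plan is to rule out the remaining possibility $E(F)\cap M=\emptyset$, that is, case~\ref{it:4-cycle-1} of~\Cref{lem:4-cycle}. I would argue by contradiction: assuming that some crossover $4$-cycle falls in case~\ref{it:4-cycle-1}, I would propagate this behaviour first across the crossing gadget containing it, and then along a chain of crossing gadgets out to a variable gadget, where it contradicts~\Cref{lem:var-connector}.

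The first step is local. Fix a crossing gadget $X$. Reading off~\Cref{fig:cross-replace-b}, $X$ consists of four crossover $4$-cycles that are pairwise vertex-disjoint and induced, arranged in a cyclic pattern in which consecutive ones are joined by a (black) edge of $\construct$; hence consecutive ones satisfy the hypotheses of~\Cref{cor:4-cycle-propagation}. Chaining~\Cref{cor:4-cycle-propagation} around this cycle shows that the four crossover $4$-cycles of $X$ are either all in case~\ref{it:4-cycle-1} of~\Cref{lem:4-cycle} or all in case~\ref{it:4-cycle-2}; call $X$ \emph{bad} in the former situation. If $X$ is bad, then by~\Cref{lem:4-cycle}\ref{it:4-cycle-1} every outgoing edge of every crossover $4$-cycle of $X$ lies in $M$, and in particular every connector edge incident to $X$ lies in $M$, since every connector edge leaving a crossing gadget is incident to one of its four crossover $4$-cycles (again read off~\Cref{fig:cross-replace-b}).

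The second step propagates outwards. By construction, $X$ was introduced for a crossing quadruple $(E_{ij},E_{i'j'})$, so one of the two edges of $E_{ij}$, say $\topvarclause{i}{j}\topclausevar{i}{j}$ of $\cubic$, was subdivided into a path $P$ of $\construct$ from $\topvarclause{i}{j}\in V(\varGadget{i})$ to $\topclausevar{i}{j}\in V(\clauseGadget{j})$ that passes through $X$; moreover, between $\topvarclause{i}{j}$ and $X$ the path $P$ meets only $\varGadget{i}$ and crossing gadgets. List the crossing gadgets met by $P$ as $X_1,\dots,X_s$ in the order encountered from $\topvarclause{i}{j}$, and let $X=X_{\ell_0}$. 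If $X$ is bad, then the connector edge of $P$ joining $X_{\ell_0}$ to the previous gadget on $P$ (which is $X_{\ell_0-1}$, or $\varGadget{i}$ if $\ell_0=1$) is incident to $X_{\ell_0}$ and hence lies in $M$ by the first step. If $\ell_0=1$, this is a connector edge incident to the variable gadget $\varGadget{i}$, contradicting~\Cref{lem:var-connector}. If $\ell_0>1$, its other endpoint lies on a crossover $4$-cycle of $X_{\ell_0-1}$, which therefore has an outgoing edge in $M$ and so is in case~\ref{it:4-cycle-1} of~\Cref{lem:4-cycle}, making $X_{\ell_0-1}$ bad. Since $\ell_0$ strictly decreases, iterating reaches $\ell_0=1$ and the contradiction; hence no crossover $4$-cycle is in case~\ref{it:4-cycle-1}, and $|E(F)\cap M|=2$.

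I expect the main obstacle to be bookkeeping rather than a conceptual difficulty. One must pin down precisely how the eight connector edges of a crossing gadget attach to its four crossover $4$-cycles (so that ``badness'' genuinely forces all of them into $M$, and so that following a connector edge into a neighbouring crossing gadget lands on one of its crossover $4$-cycles), and one must justify that the subdivided copy of $\topvarclause{i}{j}\topclausevar{i}{j}$ really does reach a variable gadget after traversing only crossing gadgets along the way — both facts are immediate from the construction and~\Cref{fig:cross-replace-b}, but they are exactly what makes the ``propagate to a variable gadget'' step go through. A minor point is that, since $X$ lies on all four wires of its crossing quadruple, one is free to route the propagation along a wire that ends at a variable gadget.
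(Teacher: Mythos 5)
Your proof is correct and follows essentially the same route as the paper's: both rest on \Cref{lem:var-connector} to anchor a base case at a variable gadget and then propagate along a chain of pairwise-adjacent, vertex-disjoint $4$-cycles using \Cref{cor:4-cycle-propagation} and \Cref{lem:4-cycle}. The paper phrases this directly (exhibiting a ``path of $4$-vertex cycles'' from $F$ out to a crossover $4$-cycle adjacent to a variable gadget, and pushing ``contains an edge of $M$'' inward), whereas you phrase it as a contradiction pushing ``badness'' outward; since \Cref{cor:4-cycle-propagation} is an equivalence, these are the same argument read in opposite directions, and your extra bookkeeping about the four crossover $4$-cycles within a single crossing gadget just makes explicit a step the paper leaves implicit.
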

\begin{proof}
Say that a \emph{path of $4$-vertex cycles} is a sequence $C_1, \dots, C_k$ of vertex-disjoint $4$-cycles such that $C_i$ is adjacent to $C_{i+1}$.
Considering step 3 of the construction, observe that for every crossover 4-cycle $C$, there is a path of \mbox{4-vertex} cycles starting at $C$ and ending at a~crossover 4-cycle adjacent to a variable gadget.

By~\Cref{lem:var-connector}, no edge incident to a variable gadget is in~$M$.
Thus any crossover 4-cycle adjacent to a variable gadget contains an edge of~$M$.
Repeated applications of~\Cref{cor:4-cycle-propagation} imply that $C$ contains an edge of~$M$, and we conclude with~\Cref{lem:4-cycle} applied on~$C$. 
\end{proof}

\begin{corollary}\label{cor:connector}
For any perfect matching $M$ of $\construct$, $M$ contains no connector edges.
\end{corollary}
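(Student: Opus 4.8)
\textbf{Proof plan for \Cref{cor:connector}.}
The plan is to classify every connector edge of $\construct$ according to which kinds of gadgets it joins, and to rule out its membership in $M$ case by case, leaning on the two preceding lemmas. By the definition of connector edges, each one is incident to at least one of: a variable gadget, a clause gadget, or a crossing gadget. I would split the argument along these lines.

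First, consider a connector edge incident to a variable gadget. This case is already fully handled by \Cref{lem:var-connector}: since every vertex of a variable gadget $\varGadget{i}$ is saturated by a red edge of that gadget, and $M$ restricted to $\varGadget{i}$ is exactly the set of red edges, no connector edge touching a variable gadget can lie in $M$. So it remains to treat connector edges both of whose incident gadgets are clause or crossing gadgets. Recall from step~2 of the construction (item (c)) that the only connector edges present are those of the sets $E_{ij}$ together with the edges introduced when subdividing crossings in step~3; in particular, every connector edge is part of some $E_{ij}$-``wire'' that, after crossing replacement, is a sequence of edges alternating with crossover $4$-cycles, and this wire starts at a variable gadget on the $\varGadget{i}$ side and ends at a clause gadget on the $\clauseGadget{j}$ side.

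The core observation is that every connector edge $e$ other than those already excluded has an endpoint lying on a crossover $4$-cycle. Indeed, by construction each crossing gadget consists of four $4$-vertex cycles, and each connector vertex incident to a crossing gadget lies on one of these crossover $4$-cycles; moreover the connector edges joining two crossing gadgets, and those joining a crossing gadget to a clause gadget, both have an endpoint on such a crossover cycle. So let $F$ be a crossover $4$-cycle containing an endpoint $u$ of $e$. By \Cref{lem:F-cross}, $|E(F) \cap M| = 2$, and hence by \Cref{lem:4-cycle}\ref{it:4-cycle-2} none of the outgoing edges of $V(F)$ — in particular not the connector edge $e$ incident to $u \in V(F)$ — can belong to $M$. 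This disposes of all connector edges with an endpoint on a crossover $4$-cycle, i.e.\ all connector edges incident to a crossing gadget.

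The only remaining type is a connector edge both of whose endpoints lie in clause gadgets; but I claim no such edge exists, since by step~2(c) every connector edge of the set $E_{ij}$ goes from the variable gadget $\varGadget{i}$ (or, after crossing replacement, from the crossing gadget immediately preceding $\clauseGadget{j}$ along that wire) to the clause gadget $\clauseGadget{j}$, never directly between two clause gadgets. Thus every connector edge is either incident to a variable gadget (excluded by \Cref{lem:var-connector}) or incident to a crossing gadget (excluded by \Cref{lem:F-cross} and \Cref{lem:4-cycle}), and the corollary follows. The only mildly delicate point — and the part I would write most carefully — is the bookkeeping in the last paragraph: verifying from the construction that the two endpoints of a connector edge cannot both sit in clause gadgets (equivalently, that a wire never runs from one clause directly to another without passing through $\varGadget{i}$), which is immediate from the fact that each $E_{ij}$ is attached to exactly one clause and that crossing replacement only inserts crossover $4$-cycles in the interior of wires.
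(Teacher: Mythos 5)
Your proposal is correct and follows essentially the same route as the paper: rule out connector edges incident to a variable gadget via \Cref{lem:var-connector}, rule out connector edges incident to a crossing gadget via \Cref{lem:F-cross} together with \Cref{lem:4-cycle}\ref{it:4-cycle-2}, and observe that no other kind of connector edge exists. The paper states this in a single sentence citing only \Cref{lem:var-connector,lem:F-cross}, leaving the final application of \Cref{lem:4-cycle} and the ``no clause--clause connector edge'' bookkeeping implicit; your version just makes these two small steps explicit.
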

\begin{proof}
  We know that any connector edge incident to a~crossing gadget or to a~variable gadget is not in~$M$ by \Cref{lem:var-connector,lem:F-cross}.
\end{proof}

\subsection{Properties of clause gadgets}

Observe that $D_j$ is an induced subgraph of the variable gadget $\clauseGadget{j}$.

\begin{lemma}\label{lem:Dj-red-edges}
Any perfect matching cut of $\construct$ contains the edges of $D_j$ drawn in red in~\Cref{fig:clause-gadget}.
\end{lemma}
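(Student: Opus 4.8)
The plan is to analyze the sub-gadget $D_j$ together with its immediate surroundings in the clause gadget, using the local structure visible in \Cref{fig:clause-gadget} and the general propagation lemmas already proved. First I would identify exactly which vertices and edges $D_j$ consists of: the three small $4$-cycles $D^1_j, D^2_j, D^3_j$, their three attachment vertices $d_1, d_2, d_3$ on the boundary of the even face into which $D_j$ is plugged, and the connector-like edges joining the $d_\ell$'s to the $4$-cycles. The red edges to be forced are the ones drawn in red in the figure, namely the edge from each $d_\ell$ to its corresponding $4$-cycle, and the ``horizontal'' edges linking consecutive $4$-cycles of $D_j$.

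The key steps, in order: (1) Observe that each $D^\ell_j$ is an induced $4$-cycle, so \Cref{lem:4-cycle} applies: either it has no edge of $M$ and all four outgoing edges are in $M$, or it has two disjoint edges of $M$ and no outgoing edge in $M$. (2) Use \Cref{lem:hex-incident-square} or the $6$-cycle structure around the face containing $D_j$ to pin down the parity of $M$ on that face, exactly as in the proof of \Cref{lem:var-connector}: the four boundary edges incident to the appropriate $4$-cycles force $M$ to avoid the relevant hexagon, hence by \Cref{lem:4-cycle} the $d_\ell$'s receive their $M$-edge from outside, i.e. along the red edge going into $D^\ell_j$. (3) Once we know an outgoing edge of each $D^\ell_j$ (the one towards $d_\ell$) is in $M$, \Cref{lem:4-cycle}\ref{it:4-cycle-1} forces $E(D^\ell_j) \cap M = \emptyset$ and all four outgoing edges of each $D^\ell_j$ in $M$; the ``horizontal'' edges between consecutive $4$-cycles are among these outgoing edges, so they are in $M$ too, giving all the red edges of $D_j$. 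Alternatively, if the parity argument first forces the horizontal edges, then \Cref{cor:4-cycle-propagation} chained across $D^1_j, D^2_j, D^3_j$ plus \Cref{lem:4-cycle} does the rest — one should pick whichever entry point the figure actually dictates.

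The main obstacle I expect is purely bookkeeping: correctly reading off from \Cref{fig:clause-gadget} which edges of $\cubic$ incident to $D_j$ were already shown black (e.g. by \Cref{cor:connector}, the connector edges, or by earlier parts of the clause-gadget analysis), so that the right hexagon or $4$-cycle has the required number of forced-in or forced-out edges to trigger \Cref{lem:hex-3-edges}, \Cref{lem:hex-incident-square}, or \Cref{lem:4-cycle}. In other words, the deduction is a short chain of applications of the preparatory lemmas, but it hinges on establishing the correct ``seed'' — one known edge in $M$ adjacent to $D_j$, or one known hexagon all of whose relevant incident edges lie in $4$-cycles. I would first nail down that seed from the figure (most likely the edges $d_\ell$-to-$D^\ell_j$ are forced because the $d_\ell$'s lie on a $6$-cycle satisfying the hypotheses of \Cref{lem:hex-incident-square}, or the $8$--$0$ edges on the central spine are already black), and then the rest is a two-line propagation argument. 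No delicate estimates are involved; the proof should be under half a page once the incidence pattern of $D_j$ is spelled out.
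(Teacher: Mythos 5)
Your proposal takes essentially the paper's route: the paper's proof of this lemma is literally a one-liner stating that ``the same proof for the variable gadgets already contained all the arguments,'' which is exactly the strategy you outline by appealing to \Cref{lem:var-connector} — seed the matching via \Cref{lem:hex-incident-square} on a hexagon of $D_j$ whose alternating boundary edges lie in induced $4$-cycles, then propagate with \Cref{lem:hex-3-edges} and \Cref{lem:4-cycle}. A minor misreading of the figure aside (the labels $D^1_j, D^2_j, D^3_j$ actually point to three hexagons rather than $4$-cycles), your chain of deductions is the intended one.
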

\begin{proof}
 Observe that the same proof for the variable gadgets already contained all the arguments. 
\end{proof}

We prove a property of the special 4-cycles of a clause gadget.

\begin{lemma}\label{lem:F-special}
  Let $M$ be a perfect matching cut of $\construct$ and $F$ be a special 4-cycle of $\clauseGadget{j}$.
  Then $\left| E(F) \cap M \right|=2$, and no outgoing edge of $V(F)$ is in $M$.
\end{lemma}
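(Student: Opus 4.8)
The plan is to mimic the proof of \Cref{lem:F-cross} almost verbatim. The structural fact to extract from \Cref{fig:clause-gadget} is that within each clause gadget the special 4-cycles are organised into short ``paths of 4-vertex cycles'' (in the sense introduced in the proof of \Cref{lem:F-cross}), each of which ends at a special 4-cycle having a connector edge among its four outgoing edges. Concretely, $F_1, F_2, F_3, F_4$ are pairwise vertex-disjoint induced 4-cycles with $F_t$ adjacent to $F_{t+1}$, and $F_1$ is incident to the connector edge at $\botclausevar{a}{j}$; the mirror chain $F'_1, F'_2, F'_3, F'_4$ behaves the same way via the connector edge at $\topclausevar{a}{j}$; and the two remaining special 4-cycles on each side form the chains $F_6, F_5$ and $F'_6, F'_5$, whose first cycles $F_6$ and $F'_6$ are again incident to connector edges. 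Hence every special 4-cycle $F$ of $\clauseGadget{j}$ is linked by a path of vertex-disjoint induced 4-cycles to a special 4-cycle $F^\star$ one of whose outgoing edges is a connector edge.

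Given this, I would conclude as follows. By \Cref{cor:connector}, no connector edge of $\construct$ belongs to $M$; in particular the connector edge incident to $F^\star$ is not in $M$, so not all four outgoing edges of $V(F^\star)$ lie in $M$. \Cref{lem:4-cycle} applied to $F^\star$ therefore rules out the first case, leaving $|E(F^\star) \cap M| = 2$, and in particular $E(F^\star) \cap M \neq \emptyset$. Applying \Cref{cor:4-cycle-propagation} repeatedly along the path of 4-cycles joining $F^\star$ to $F$ propagates this, giving $E(F) \cap M \neq \emptyset$. A final application of \Cref{lem:4-cycle} to $F$ then cannot land in the first case (which requires $E(F) \cap M = \emptyset$), so $|E(F) \cap M| = 2$ and no outgoing edge of $V(F)$ is in $M$, which is precisely the statement.

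The only genuine work lies in the bookkeeping of the first paragraph: verifying from \Cref{fig:clause-gadget} that the four chains really consist of pairwise vertex-disjoint induced 4-cycles with consecutive members sharing a connecting edge, and that each chain's extremal cycle carries a connector edge. I do not anticipate any real obstacle here; it reflects the same design principle that made \Cref{lem:F-cross} go through, namely that the ``wires'' leaving the subdivided cubes of the clause gadget are built entirely out of 4-cycles all the way to the boundary of the gadget, so that the forced absence of connector edges from $M$ propagates inward.
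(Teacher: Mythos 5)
Your plan to mimic the proof of \Cref{lem:F-cross} — propagating the fact that some outgoing edge is not in $M$ along chains of adjacent $4$-cycles anchored at connector edges — is the right idea, and it does work for most of the special $4$-cycles: $F_1,F_2,F_3,F_4$ (and the primed mirror chain) are pairwise vertex-disjoint adjacent induced $4$-cycles anchored at $F_1$ (resp.\ $F'_1$), which carries the connector edge at $\botclausevar{a}{j}$ (resp.\ $\topclausevar{a}{j}$), and $F_6$, $F'_6$ each carry connector edges directly. However, your claim that ``$F_6, F_5$'' and ``$F'_6, F'_5$'' form chains is false: in \Cref{fig:clause-gadget}, $F_5$ and $F_6$ are \emph{not} adjacent. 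The vertical segment between them passes through the three vertices $d_3,d_2,d_1$ of the sub-gadget $D_j$, so the neighbour of $F_5$ on that side is $d_3$, not a vertex of $F_6$. Since they are separated by a three-vertex path, \Cref{cor:4-cycle-propagation} cannot jump from $F_6$ to $F_5$, and none of $F_5$'s other neighbours ($u_{14}$, $u_{20}$, and $F'_5$, which has the same problem symmetrically) gives you an anchor either.

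This is precisely why the paper's proof invokes \Cref{lem:Dj-red-edges} in addition to \Cref{cor:connector}, and that extra ingredient is not cosmetic: since every perfect matching cut of $\construct$ contains all red edges of $D_j$, the vertex $d_3$ is already saturated, so the outgoing edge of $F_5$ towards $d_3$ cannot lie in $M$, and \Cref{lem:4-cycle} then places $F_5$ directly in case~(b) with no propagation chain at all (symmetrically for $F'_5$). So the fix to your proof is small, but there is a genuine hole: $F_5$ and $F'_5$ need \Cref{lem:Dj-red-edges} as a separate anchor, not a chain back to $F_6$.
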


\begin{proof}
We know from \Cref{cor:connector} that a connector edge is not in $M$, and from \Cref{lem:Dj-red-edges} that $F_5$ (see \Cref{fig:clause-gadget}) has an incident edge not contained in $M$.
Thus every special $4$-cycle is connected by a path of $4$-cycles to a $4$-cycle incident to an edge not in $M$.
By application of \Cref{lem:4-cycle} and \Cref{cor:4-cycle-propagation}, every special $4$-cycle of $\clauseGadget{j}$ contains an edge of $M$.
\end{proof}

\begin{lemma}\label{lem:clause-property-1}
  Let $M$ be a perfect matching cut of $\construct$ and $\clauseGadget{j}$ be a clause gadget.
  Let $U_j = \{u_1, \dots, u_{20}\}$, and $V_j = \{v_1, \dots, v_{20}\}$.
  Then no outgoing edge of $U_j$ or of $V_j$ is in $M$.
\end{lemma}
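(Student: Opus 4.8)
The plan is to enumerate the outgoing edges of $U_j$, using the vertex names of~\Cref{fig:clause-gadget}, and show that none of them is in $M$; the statement for $V_j$ then follows by the left--right symmetry of $\clauseGadget{j}$. Every outgoing edge of $U_j$ falls into one of three families: (i) the connector edge incident to $u_1 = \topclausevar{b}{j}$; (ii) the seven edges joining $U_j$ to the special $4$-cycles $F_1,\dots,F_5$ (one incident to each of $u_2,u_3,u_4,u_7,u_8,u_{14},u_{15}$); and (iii) the four ``horizontal'' edges $u_9v_9$, $u_{10}v_{10}$, $u_{12}v_{12}$, $u_{13}v_{13}$ joining $U_j$ to $V_j$. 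A family-(i) edge is not in $M$ by~\Cref{cor:connector}. A family-(ii) edge is an outgoing edge of some $V(F_i)$, hence not in $M$ by~\Cref{lem:F-special}. It remains to handle family~(iii).

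For those, I would look at the two induced $4$-cycles $C_a := u_9u_{10}v_{10}v_9$ and $C_b := u_{12}u_{13}v_{13}v_{12}$ (induced, since they have no chords) and apply~\Cref{lem:4-cycle} to each. For $C_a$ the possibilities are $E(C_a)\cap M = \emptyset$, or $E(C_a)\cap M = \{u_9u_{10}, v_9v_{10}\}$, or $E(C_a)\cap M = \{u_9v_9, u_{10}v_{10}\}$; only in the third ``horizontal'' case does a horizontal edge lie in $M$, and then (by the ``no outgoing edge in $M$'' clause) the cube edges $u_8u_9, u_{10}u_{11}$ and their mirrors are outside $M$. The same trichotomy holds for $C_b$, with horizontal case $\{u_{12}v_{12}, u_{13}v_{13}\}$. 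Next, applying~\Cref{lem:pmc-cycles} to the $6$-cycle $u_{10}u_{11}u_{12}v_{12}v_{11}v_{10}$ and splitting its six edges into a $C_a$-triple $\{u_{10}u_{11}, v_{10}v_{11}, u_{10}v_{10}\}$ and a $C_b$-triple $\{u_{11}u_{12}, v_{11}v_{12}, u_{12}v_{12}\}$: a case check shows each triple contains $2$, $0$, or $1$ edges of $M$ according to which of its three cases holds, so evenness of the total forces $C_a$ to be in its horizontal case exactly when $C_b$ is. Thus it suffices to rule out that both are.

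Assume for contradiction that both $C_a$ and $C_b$ are in their horizontal case. Then none of $u_8u_9, u_9u_{10}, u_{10}u_{11}, u_{11}u_{12}, u_{12}u_{13}, u_{13}u_{14}$ is in $M$, so $u_8,\dots,u_{14}$ all lie on the same side of the cut. Since $u_8u_9\notin M$ and the third edge at $u_8$ goes to $F_4$ (not in $M$ by~\Cref{lem:F-special}), we get $u_7u_8\in M$; likewise $u_{14}u_{15}\in M$ (using the edge at $u_{14}$ into $F_5$) and $u_{11}u_{18}\in M$ (using $u_{10}u_{11}, u_{11}u_{12}\notin M$). These matching edges and the non-matching edges $u_6u_7, u_{15}u_{16}, u_{17}u_{18}, u_{18}u_{20}$ put $u_6,u_7,u_{15},u_{16},u_{17},u_{18},u_{20}$ all on the side opposite $u_8$. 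But then $u_{16}$ is forced to be matched to $u_1$ (its neighbours $u_{15},u_{17}$ being saturated or on $u_{16}$'s own side), $u_6$ to $u_5$, $u_{17}$ to $u_{19}$, and $u_{20}$ to $u_{19}$; the last two are incompatible, the desired contradiction. Hence neither $C_a$ nor $C_b$ is in its horizontal case, so none of $u_9v_9, u_{10}v_{10}, u_{12}v_{12}, u_{13}v_{13}$ is in $M$.

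The step I expect to be the crux is ruling out the horizontal configuration of $C_a$ and $C_b$. Because the two copies of the subdivided cube are genuinely identified along those four edges, this is not a local deduction: it forces a global chase of the perfect matching and of the implied bipartition through the whole ``cube'' region of $\clauseGadget{j}$, leaning throughout on the already-established facts that connector edges (\Cref{cor:connector}) and edges toward the special $4$-cycles (\Cref{lem:F-special}) are never selected. The rest is routine bookkeeping against~\Cref{fig:clause-gadget}.
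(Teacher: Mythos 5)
Your proof is correct, but it follows a genuinely different route from the paper's. The paper's argument goes \emph{outward} from the cube: assuming $u_9v_9\in M$, it first applies \Cref{lem:4-cycle} to the $4$-cycle $u_7u_8w_1w_5$ (since $u_8u_9\notin M$, the cycle has an outgoing edge outside $M$, so none of its outgoing edges is in $M$, and in particular $w_1w_2\notin M$), mirrors this to get $w_3w_4\notin M$, invokes \Cref{lem:F-special} for $u_8w_1$, $w_2w_3$, $v_8w_4$, and then observes that the $8$-cycle $u_9u_8w_1w_2w_3w_4v_8v_9$ would meet $M$ in exactly one edge, contradicting \Cref{lem:pmc-cycles}. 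The symmetric $8$-cycle through $F_5,F'_5$ rules out $u_{13}v_{13}$, and a final application of \Cref{lem:4-cycle} to $u_9v_9v_{10}u_{10}$ and $u_{12}v_{12}v_{13}u_{13}$ disposes of $u_{10}v_{10}$ and $u_{12}v_{12}$. Your proof instead stays \emph{inside} the cube region $U_j\cup V_j$: it first couples the two $4$-cycles $C_a$ and $C_b$ via the parity of the $6$-cycle $u_{10}u_{11}u_{12}v_{12}v_{11}v_{10}$ (forcing them to be horizontal simultaneously or not at all), and then rules out the doubly-horizontal case by a matching/side chase that collapses at $u_{19}$. Both are sound; the paper's is shorter and more local (two small parity checks), while yours avoids tracking the $w$-vertices and makes the interaction between the perfect-matching and cutset constraints inside the cube more explicit, at the cost of a longer chase. (Two small redundancies: once you note that $u_{16}u_{17}$ and $u_6u_{20}$ cannot lie in $M$ because both endpoints are on the same side, you do not actually need to establish $u_1u_{16}\in M$ or $u_5u_6\in M$ to reach the contradiction at $u_{19}$.)
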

\begin{proof}
By~\Cref{lem:F-special} and \Cref{cor:connector}, the only edges that remain to be checked are $u_9v_9, u_{10}v_{10}, u_{12}v_{12}, u_{13}v_{13}$.

Suppose $M$ contains $u_9v_9$. As $u_8u_9$ is not available, by \Cref{lem:4-cycle} on $u_7u_8w_1w_5$, $w_1w_2 \not \in M$. Symmetrically, $w_3w_4 \not \in M$.
As by \Cref{lem:F-special} $u_8w_1, v_8w_4$ and $w_2w_3$ are not in $M$, $u_9v_9$ would be the only edge of $u_9, u_8, w_1, w_2, w_3, w_4, v_8, v_9$ to be in $M$ which is absurd by \Cref{lem:pmc-cycles}.
A symmetric argument rules out that $u_{13}v_{13} \in M$.
Thus we conclude applying \Cref{lem:4-cycle} to the 4-cycles $u_9v_9v_{10}u_{10}$ and $u_{12}v_{12}v_{13}u_{13}$.
\end{proof}

From now on we assume that, for a clause gadget the two sets $U_j = \{u_1, \dots,$ $u_{20}\}$, and $V_j = \{v_1, \dots,$ $v_{20}\}$ are defined.
Now we shall prove that for every clause gadget $\clauseGadget{j}$ and perfect matching cut $M$, the set $M \cap E\left( \U{j} \cup \V{j} \right)$ can be of only three types.
Before we prove the corresponding lemma, we introduce the following notations. 
Let $H$ denote the subgraph of $\construct$ induced by the vertices of $\U{j} \cup \V{j}$ of the clause gadget $\clauseGadget{j}$. 
Let the vertices of $H$ be named as shown in~\Cref{fig:clause-gadget}.
We define the following sets:
\newcommand{\Type}[2]{{L}^{#1}_{#2}}
\newcommand{\Typecross}[2]{{P}^{#1}_{#2}}
\newcommand{\Typem}[2]{{R}^{#1}_{#2}}
$$ \Type{1}{j} = \left\{u_1u_2, u_3u_4, u_5u_{19}, u_6u_{20}, u_7u_8, u_9u_{10}, u_{16}u_{17}, u_{18}u_{11}, u_{12}u_{13}, u_{15}u_{14} \right\},$$ 
$$ \Type{2}{j} = \left\{u_{1}u_{2}, u_{3}u_{4}, u_{5}u_{6}, u_{7}u_{8}, u_{19}u_{20}, u_{9}u_{10}, u_{16}u_{15}, u_{17}u_{18},  u_{11}u_{12}, u_{14}u_{13} \right\},$$
$$ \Type{3}{j} = \left\{u_{2}u_{3}, u_{4}u_{5}, u_{6}u_{7}, u_{8}u_{9}, u_{1}u_{16}, u_{19}u_{17}, u_{20}u_{18}, u_{10}u_{11}, u_{12}u_{13},  u_{15}u_{14} \right\}.$$

For $i\in \{1,2,3\}$, let $\Typem{i}{j}$ denote the set of edges $\{v_k v_l \colon u_k u_l\in \Type{i}{j}\}$. 

\begin{definition}
We say that a perfect matching cut $M$ of $\construct$ is of \emph{type $i$ in $\clauseGadget{j}$} with  
$i\in \{1,2,3\}$, if $M\cap E(\U{j} \cup \V{j}) = \Type{i}{j} \cup \Typem{i}{j}$.
\end{definition}

\begin{figure}[t]
\centering
\begin{subfigure}{0.45\textwidth}
\begin{tikzpicture}
	 [scale = 0.5]    
    
 \foreach \x/\y [count = \n] in
 {0/6, 1/6, 2/6, 3/6, 5/ 6, 6/ 6, 7/6, 8/6, 8/5, 8/4, 8/3, 8/2, 8/1, 8/0, 6/1.5, 4/ 3, 5/ 3, 6/ 3, 5/ 4.5, 6/ 4.5 }
    {
    	\ifthenelse{\n<9}{  \ifthenelse{\n=1}
    								{   \node[left] at (\x, \y) {\scriptsize $u_{\n} = \scriptsize \topclausevar{b}{j}$};}
    	{ \ifthenelse{\n=8}{   \node[right] at (\x,\y ) {\scriptsize $u_{\n}$};}{ \node[above] at (\x, \y) 
        {\scriptsize $u_{\n}$};}}}
    	 { \ifthenelse{\n<15}{ \ifthenelse{\n=14}{   \node[below] at (\x,\y ) {\scriptsize $u_{\n}$};} 
    	 {  \node[right] at (\x, \y) {\scriptsize $u_{\n}$};}} 
    	 {  \ifthenelse{\n<17}{\node[left] at (\x, \y) 
    	 {\scriptsize $u_{\n}$};}{ \ifthenelse{\n<17}{\node[left] at (\x, \y) { \scriptsize $u_{\n}$};}{ \ifthenelse{\n<19}{\node[below] at (\x, \y) { \scriptsize $u_{\n}$};}
    	 { \ifthenelse{\n=19}{\node[left] at (\x, \y) { \scriptsize $u_{\n}$};}{ \ifthenelse{\n=20}{\node[right] at (\x, \y) { \scriptsize $u_{\n}$};}{   }  }  }  }  } }}
     
    }
    
\draw (0, 6)--(8, 6)--(8, 0) -- (0, 6);
\draw (4, 3) -- (8, 3);
\draw (5, 6)--(5, 3);
\draw (6, 6)--(6, 3);
\draw (5, 4.5)--(6, 4.5);

	
	\foreach \x/\y/\w/\z [count = \n] in
 {4/3/5/3, 6/3/8/3, 5/4.5/5/6, 6/4.5/6/6, 7/6/8/6, 8/5/8/4, 8/1/8/2, 8/0/6/1.5, 0/6/1/6, 2/6/3/6   }
    {
    	\draw[line width=0.07cm, brown] (\x, \y) -- (\w,\z); 
     
    }
	
     \foreach \x/\y [count = \n] in
 {0/6, 1/6, 2/6, 3/6, 5/ 6, 6/ 6, 7/6, 8/6, 8/5, 8/4, 8/3, 8/2, 8/1, 8/0, 5/ 4.5, 6/ 4.5, 5/ 3, 6/ 3, 4/ 3, 6/1.5}
    {
    	\filldraw (\x, \y) circle (3pt); 
     
    }
    
\end{tikzpicture}
\subcaption{Edges of $\Type{1}{j}$ are in brown.}
\label{fig:matching-types-a}
\end{subfigure}
~~~
\begin{subfigure}{0.45\textwidth}
\begin{tikzpicture}
	 [scale = 0.5]

 \foreach \x/\y [count = \n] in
 {0/6, 1/6, 2/6, 3/6, 5/ 6, 6/ 6, 7/6, 8/6, 8/5, 8/4, 8/3, 8/2, 8/1, 8/0, 6/1.5, 4/ 3, 5/ 3, 6/ 3, 5/ 4.5, 6/ 4.5 }
    {
    	\ifthenelse{\n<9}{  \ifthenelse{\n=1}
    								{   \node[left] at (\x, \y) {\scriptsize $u_{\n} = \scriptsize \topclausevar{b}{j}$};}
    	{ \ifthenelse{\n=8}{   \node[right] at (\x,\y ) {\scriptsize $u_{\n}$};}{ \node[above] at (\x, \y) 
        {\scriptsize $u_{\n}$};}}}
    	 { \ifthenelse{\n<15}{ \ifthenelse{\n=14}{   \node[below] at (\x,\y ) {\scriptsize $u_{\n}$};} 
    	 {  \node[right] at (\x, \y) {\scriptsize $u_{\n}$};}} 
    	 {  \ifthenelse{\n<17}{\node[left] at (\x, \y) 
    	 {\scriptsize $u_{\n}$};}{ \ifthenelse{\n<17}{\node[left] at (\x, \y) { \scriptsize $u_{\n}$};}{ \ifthenelse{\n<19}{\node[below] at (\x, \y) { \scriptsize $u_{\n}$};}
    	 { \ifthenelse{\n=19}{\node[left] at (\x, \y) { \scriptsize $u_{\n}$};}{ \ifthenelse{\n=20}{\node[right] at (\x, \y) { \scriptsize $u_{\n}$};}{   }  }  }  }  } }}
     
    }
    
\draw (0, 6)--(8, 6)--(8, 0) -- (0, 6);
\draw (4, 3) -- (8, 3);
\draw (5, 6)--(5, 3);
\draw (6, 6)--(6, 3);
\draw (5, 4.5)--(6, 4.5);
    
	
	\foreach \x/\y/\w/\z [count = \n] in
 {6/3/5/3, 6/4.5/5/4.5, 6/6/5/6, 7/6/8/6, 8/4/8/5, 8/3/8/2, 8/0/8/1, 6/1.5/4/3 , 0/6/1/6, 2/6/3/6  }
    {
    	\draw[line width=0.07cm, brown] (\x, \y) -- (\w,\z); 
     
    }    

	 \foreach \x/\y [count = \n] in
 {0/6, 1/6, 2/6, 3/6, 5/ 6, 6/ 6, 7/6, 8/6, 8/5, 8/4, 8/3, 8/2, 8/1, 8/0, 5/ 4.5, 6/ 4.5, 5/ 3, 6/ 3, 4/ 3, 6/1.5}
    {
    	\filldraw (\x, \y) circle (3pt); 
     
    }

\end{tikzpicture} 
\subcaption{Edges of $\Type{2}{j}$ are in brown.}
\label{fig:matching-types-b}
\end{subfigure}

\begin{subfigure}{0.45\textwidth}
\begin{tikzpicture}
	 [scale = 0.5]

 \foreach \x/\y [count = \n] in
 {0/6, 1/6, 2/6, 3/6, 5/ 6, 6/ 6, 7/6, 8/6, 8/5, 8/4, 8/3, 8/2, 8/1, 8/0, 6/1.5, 4/ 3, 5/ 3, 6/ 3, 5/ 4.5, 6/ 4.5 }
    {
    	\ifthenelse{\n<9}{  \ifthenelse{\n=1}
    								{   \node[left] at (\x, \y) {\scriptsize $u_{\n} = \scriptsize \topclausevar{b}{j}$};}
    	{ \ifthenelse{\n=8}{   \node[right] at (\x,\y ) {\scriptsize $u_{\n}$};}{ \node[above] at (\x, \y) 
        {\scriptsize $u_{\n}$};}}}
    	 { \ifthenelse{\n<15}{ \ifthenelse{\n=14}{   \node[below] at (\x,\y ) {\scriptsize $u_{\n}$};} 
    	 {  \node[right] at (\x, \y) {\scriptsize $u_{\n}$};}} 
    	 {  \ifthenelse{\n<17}{\node[left] at (\x, \y) 
    	 {\scriptsize $u_{\n}$};}{ \ifthenelse{\n<17}{\node[left] at (\x, \y) { \scriptsize $u_{\n}$};}{ \ifthenelse{\n<19}{\node[below] at (\x, \y) { \scriptsize $u_{\n}$};}
    	 { \ifthenelse{\n=19}{\node[left] at (\x, \y) { \scriptsize $u_{\n}$};}{ \ifthenelse{\n=20}{\node[right] at (\x, \y) { \scriptsize $u_{\n}$};}{   }  }  }  }  } }}
     
    }
    
\draw (0, 6)--(8, 6)--(8, 0) -- (0, 6);
\draw (4, 3) -- (8, 3);
\draw (5, 6)--(5, 3);
\draw (6, 6)--(6, 3);
\draw (5, 4.5)--(6, 4.5);
    
	
	\foreach \x/\y/\w/\z [count = \n] in
 {6/3/6/4.5, 5/3/5/4.5, 3/6/5/6, 7/6/6/6, 8/6/8/5, 8/3/8/4, 8/2/8/1, 6/1.5/8/0 , 0/6/4/3, 2/6/1/6  }
    {
    	\draw[line width=0.07cm, brown] (\x, \y) -- (\w,\z); 
     
    }    
    
    \foreach \x/\y [count = \n] in
 {0/6, 1/6, 2/6, 3/6, 5/ 6, 6/ 6, 7/6, 8/6, 8/5, 8/4, 8/3, 8/2, 8/1, 8/0, 5/ 4.5, 6/ 4.5, 5/ 3, 6/ 3, 4/ 3, 6/1.5}
    {
    	\filldraw (\x, \y) circle (3pt); 
     
    }
    
\end{tikzpicture}
\caption{Edges of $\Type{3}{j}$ are in brown.}
\label{fig:matching-types-c}
\end{subfigure}
\caption{The three types of perfect matching cuts within a clause gadget.}
\label{fig:matching-types}
\end{figure}
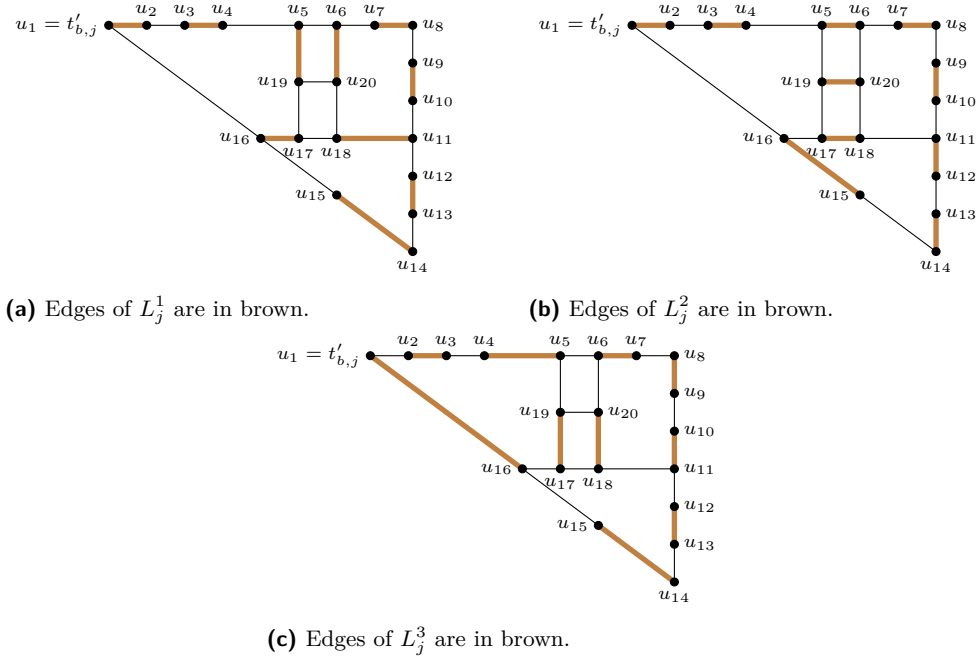

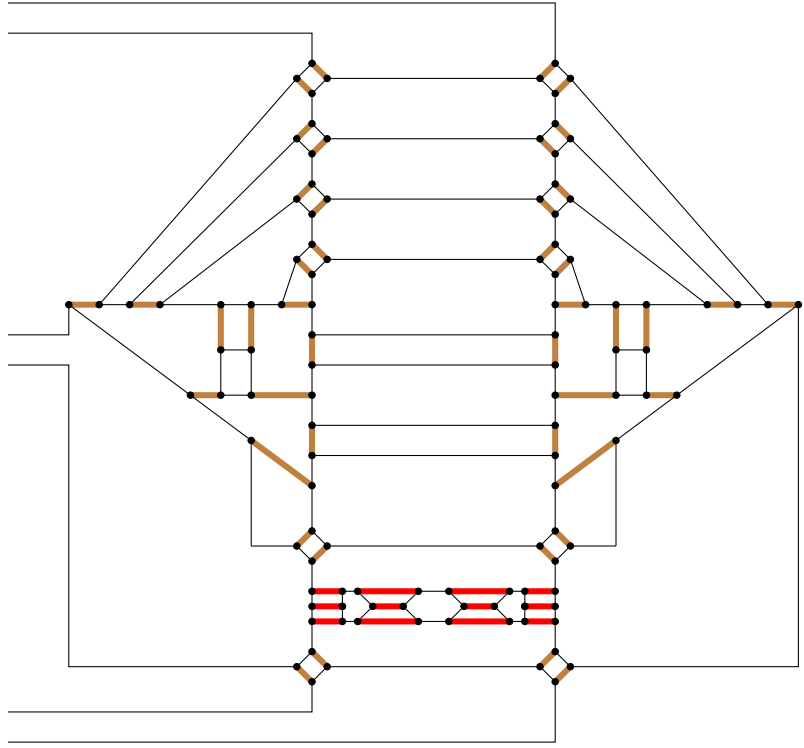
\begin{figure}[t]
    \centering
        \begin{tikzpicture}[scale = 0.4]

        \foreach \x/\y/\w/\z [count = \n] in
 {4/3/5/3, 6/3/8/3, 5/4.5/5/6, 6/4.5/6/6, 7/6/8/6, 8/5/8/4, 8/1/8/2, 8/0/6/1.5, 0/6/1/6, 2/6/3/6   }
    {
    	\draw[line width=0.08cm, brown] (\x, \y) -- (\w,\z); 
     	\draw[line width=0.08cm, brown] (24-\x, \y) -- (24-\w,\z); 
    }
    \foreach \x/\y/\w/\z [count = \n] in
 {1/6/2/6, 3/6/7/6, 8/6/8/5, 8/4/8/2, 8/1/8/0, 0/6/6/1.5, 5/4.5/5/3, 6/4.5/6/3, 5/4.5/6/4.5, 5/3/6/3  }
    {
    	\draw (\x,\y) -- (\w,\z); 
    	\draw (24 - \x, \y) -- (24 - \w, \z); 
     
    }
    
    \foreach \x/\y [count = \n] in
 {8/5, 8/4, 8/2, 8/1, 8.5/-2, 8.5/-6, 11.5/-3.5, 11.5/-4.5, 8.5/7.5, 8.5/9.5, 8.5/11.5, 8.5/13.5}
    {
    	\draw (\x,\y) -- (24-\x,\y); 
     
    }

\foreach \x/\y/\w/\z [count = \n] in
 {
8.5/13.5/8/14, 7.5/13.5/8/13,
  8/11/8.5/11.5, 8/12/7.5/11.5,
  8/9/8.5/9.5, 8/10/7.5/9.5,
  8.5/7.5/8/8, 7.5/7.5/8/7,
  8/-2.5/8.5/-2, 8/-1.5/7.5/-2,
  8.5/-6/8/-5.5, 7.5/-6/8/-6.5
   }
    {
    	
    	\draw[line width=0.08cm, brown] (\x,\y) -- (\w,\z); 
        \draw[line width=0.08cm, brown] (24-\x,\y) -- (24-\w,\z);
    }
    
    \foreach \x/\y/\w/\z [count = \n] in
 {
8/13/8.5/13.5, 8/14/7.5/13.5, 
  8.5/11.5/8/12, 7.5/11.5/8/11,
  8.5/9.5/8/10, 7.5/9.5/8/9,
  8/7/8.5/7.5, 8/8/7.5/7.5,
  8.5/ -2/8/-1.5, 7.5/-2/8/-2.5,
 8/-6.5/8.5/-6, 8/-5.5/7.5/-6
   }
    {
    	
    	\draw (\x,\y) -- (\w,\z); 
        \draw (24-\x,\y) -- (24-\w,\z);
    }

    \foreach \x/\y/\w/\z [count = \n] in
 { 8/13/8/12, 8/11/8/10, 8/9/8/8, 8/7/8/6, 
 8/-1.5/8/0, 8/-2.5/8/-5.5, 7.5/-2/6/-2, 
 6/-2/6/1.5,  8/-4/9/-4, 9.5/-4.5/10/ -4, 
 10/ -4/9.5/-3.5, 9.5/-3.5/9/-3.5, 9/-3.5/9/-4, 9/-4/9/-4.5, 9/-4.5/10/-4.5, 
 11/-4/11.5/-4.5, 11/-4/11.5/-3.5, 
 7.5/7.5/7/6, 7.5/9.5/3/6, 7.5/11.5/2/6, 7.5/13.5/1/6 }
    {
    	\draw (\x,\y) -- (\w,\z); 
        \draw (24-\x,\y) -- (24-\w,\z);
    }    
    
    \foreach \x/\y/\w/\z [count = \n] in
 { 8/-4/9/-4, 10/-4/11/-4, 9.5/-4.5/11.5/-4.5, 9.5/-3.5/11.5/-3.5,
 9/-4.5/8/-4.5, 9/-3.5/8/-3.5}
    {
    	\draw[line width=0.08cm, red] (\x,\y) -- (\w,\z); 
        \draw[line width=0.08cm, red] (24-\x,\y) -- (24-\w,\z);
    }

    
    \draw (24,6) -- (24,-6) -- (16.5,-6);
    \draw (0,6) -- (0,5) -- (-2,5); 
    \draw (7.5,-6) -- (0,-6) -- (0,4) -- (-2,4);
    
    \draw (16,-6.5) -- (16,-8.5) -- (-2,-8.5);
    \draw (8,-6.5) -- (8,-7.5) -- (-2,-7.5);
    
     \draw (16,14) -- (16,16) -- (-2,16);
    \draw (8,14) -- (8,15) -- (-2,15);
  
      \foreach \x/\y [count = \n] in
 {8/13, 8.5/13.5, 8/14, 7.5/13.5, 
 8/11, 8.5/11.5, 8/12, 7.5/11.5, 
 8/9, 8.5/9.5, 8/10, 7.5/9.5, 
 8/7, 8.5/7.5, 8/8, 7.5/7.5,     
 8/-2.5, 8.5/ -2, 8/-1.5, 7.5/-2, 
 8/-6.5, 8.5/-6, 8/-5.5, 8/-4,7.5/-6, 
 9.5/-4.5, 10/ -4, 9.5/-3.5, 9/-3.5, 9/-4, 9/-4.5,
 9/-4, 11/-4, 11.5/-4.5, 11.5/-3.5 ,
 8/-3.5, 8/-4.5
 }
    {
    	\filldraw (\x, \y) circle (3pt);
        \filldraw (24-\x, \y) circle (3pt);
    }

 \foreach \x/\y [count = \n] in
 {0/6, 1/6, 2/6, 3/6, 5/ 6, 6/ 6, 7/6, 8/6, 8/5, 8/4, 8/3, 8/2, 8/1, 8/0, 5/ 4.5, 6/ 4.5, 5/ 3, 6/ 3, 4/ 3, 6/1.5}
    {
    	\filldraw (\x, \y) circle (3pt); 
    	\filldraw (24 - \x, \y) circle (3pt); 
     
    }
    
    \end{tikzpicture}
    \caption{The brown and red edges make the only intersection of a~clause gadget with a~perfect matching cut $M$ such that $M \cap U_j = \Type{1}{j}$.}
    \label{fig:clause-gadget-type-1-match}
\end{figure}

\begin{figure}[t]
  \centering
  \begin{subfigure}{0.48\textwidth}
        \begin{tikzpicture}[scale = 0.22]
        
        \foreach \x/\y/\w/\z [count = \n] in
 {6/3/5/3, 6/4.5/5/4.5, 6/6/5/6, 7/6/8/6, 8/4/8/5, 8/3/8/2, 8/0/8/1, 6/1.5/4/3 , 0/6/1/6, 2/6/3/6  }
    {
    	\draw[line width=0.06cm, brown] (\x, \y) -- (\w,\z); 
     	\draw[line width=0.06cm, brown] (24-\x, \y) -- (24-\w,\z); 
    } 

    \foreach \x/\y/\w/\z [count = \n] in
 {1/6/2/6, 3/6/5/6, 6/6/7/6, 8/6/8/5, 8/4/8/3, 8/2/8/1, 8/0/6/1.5, 0/6/4/3, 5/6/5/3, 6/6/6/3, 4/3/5/3, 8/3/6/3  }
    {
    	\draw (\x,\y) -- (\w,\z); 
    	\draw (24 - \x, \y) -- (24 - \w, \z); 
     
    }
    
    \foreach \x/\y [count = \n] in
 {8/5, 8/4, 8/2, 8/1, 8.5/-2, 8.5/-6, 11.5/-3.5, 11.5/-4.5, 8.5/7.5, 8.5/9.5, 8.5/11.5, 8.5/13.5}
    {
    	\draw (\x,\y) -- (24-\x,\y); 
     
    }

\foreach \x/\y/\w/\z [count = \n] in
 {
8/13/8.5/13.5, 8/14/7.5/13.5,
    8.5/11.5/8/12, 7.5/11.5/8/11,
  8.5/9.5/8/10, 7.5/9.5/8/9,
  8.5/7.5/8/8, 7.5/7.5/8/7,
8.5/ -2/8/-1.5, 7.5/-2/8/-2.5,
  8.5/-6/8/-5.5, 7.5/-6/8/-6.5
   }
    {
    	\draw[line width=0.06cm, brown] (\x,\y) -- (\w,\z); 
        \draw[line width=0.06cm, brown] (24-\x,\y) -- (24-\w,\z);
    }

    \foreach \x/\y/\w/\z [count = \n] in
 {
8.5/13.5/8/14, 7.5/13.5/8/13, 
8/11/8.5/11.5, 8/12/7.5/11.5,
  8/9/8.5/9.5, 8/10/7.5/9.5,
  8/7/8.5/7.5, 8/8/7.5/7.5,
  8/-2.5/8.5/-2, 8/-1.5/7.5/-2,
 8/-6.5/8.5/-6, 8/-5.5/7.5/-6
   }
    {
    	
    	\draw (\x,\y) -- (\w,\z); 
        \draw (24-\x,\y) -- (24-\w,\z);
    }

    \foreach \x/\y/\w/\z [count = \n] in
 { 8/13/8/12, 8/11/8/10, 8/9/8/8, 8/7/8/6, 
 8/-1.5/8/0, 8/-2.5/8/-5.5, 7.5/-2/6/-2, 
 6/-2/6/1.5,  8/-4/9/-4, 9.5/-4.5/10/ -4, 
 10/ -4/9.5/-3.5, 9.5/-3.5/9/-3.5, 9/-3.5/9/-4, 9/-4/9/-4.5, 9/-4.5/10/-4.5, 
 11/-4/11.5/-4.5, 11/-4/11.5/-3.5, 
 7.5/7.5/7/6, 7.5/9.5/3/6, 7.5/11.5/2/6, 7.5/13.5/1/6 }
    {
    	\draw (\x,\y) -- (\w,\z); 
        \draw (24-\x,\y) -- (24-\w,\z);
    }    
    
    \foreach \x/\y/\w/\z [count = \n] in
 { 8/-4/9/-4, 10/-4/11/-4, 9.5/-4.5/11.5/-4.5, 9.5/-3.5/11.5/-3.5,
 9/-4.5/8/-4.5, 9/-3.5/8/-3.5}
    {
    	\draw[line width=0.06cm, red] (\x,\y) -- (\w,\z); 
        \draw[line width=0.06cm, red] (24-\x,\y) -- (24-\w,\z);
    }

    
    \draw (24,6) -- (24,-6) -- (16.5,-6);
    \draw (0,6) -- (0,5) -- (-2,5); 
    \draw (7.5,-6) -- (0,-6) -- (0,4) -- (-2,4);
    
    \draw (16,-6.5) -- (16,-8.5) -- (-2,-8.5);
    \draw (8,-6.5) -- (8,-7.5) -- (-2,-7.5);
    
     \draw (16,14) -- (16,16) -- (-2,16);
    \draw (8,14) -- (8,15) -- (-2,15);
  
      \foreach \x/\y [count = \n] in
 {8/13, 8.5/13.5, 8/14, 7.5/13.5, 
 8/11, 8.5/11.5, 8/12, 7.5/11.5, 
 8/9, 8.5/9.5, 8/10, 7.5/9.5, 
 8/7, 8.5/7.5, 8/8, 7.5/7.5,     
 8/-2.5, 8.5/ -2, 8/-1.5, 7.5/-2, 
 8/-6.5, 8.5/-6, 8/-5.5, 8/-4,7.5/-6, 
 9.5/-4.5, 10/ -4, 9.5/-3.5, 9/-3.5, 9/-4, 9/-4.5,
 9/-4, 11/-4, 11.5/-4.5, 11.5/-3.5 ,
 8/-3.5, 8/-4.5
 }
    {
    	\filldraw (\x, \y) circle (3pt);
        \filldraw (24-\x, \y) circle (3pt);
    }  
    
 \foreach \x/\y [count = \n] in
 {0/6, 1/6, 2/6, 3/6, 5/ 6, 6/ 6, 7/6, 8/6, 8/5, 8/4, 8/3, 8/2, 8/1, 8/0, 5/ 4.5, 6/ 4.5, 5/ 3, 6/ 3, 4/ 3, 6/1.5}
    {
    	\filldraw (\x, \y) circle (3pt); 
    	\filldraw (24 - \x, \y) circle (3pt); 
     
    }
    
    \end{tikzpicture}
    \caption{Edges (brown and red) implied by $\Type{2}{j}$.}
    \label{fig:clause-gadget-type-2-match}
  \end{subfigure}
  ~
\begin{subfigure}{0.48\textwidth}
    \centering
        \begin{tikzpicture}[scale = 0.22]
   
   	\foreach \x/\y/\w/\z [count = \n] in
 {6/3/6/4.5, 5/3/5/4.5, 3/6/5/6, 7/6/6/6, 8/6/8/5, 8/3/8/4, 8/2/8/1, 6/1.5/8/0 , 0/6/4/3, 2/6/1/6  }
    {
    	\draw[line width=0.06cm, brown] (\x, \y) -- (\w,\z); 
     	\draw[line width=0.06cm, brown] (24-\x, \y) -- (24-\w,\z); 
    } 

    \foreach \x/\y/\w/\z [count = \n] in
 {0/6/1/6, 3/6/2/6, 6/6/5/6, 7/6/8/6, 8/4/8/5, 8/2/8/3, 8/0/8/1, 4/3/6/1.5, 8/3/4/3, 5/6/5/4.5, 6/6/6/4.5, 5/4.5/6/4.5  }
    {
    	\draw (\x,\y) -- (\w,\z); 
    	\draw (24 - \x, \y) -- (24 - \w, \z); 
     
    }

    \foreach \x/\y [count = \n] in
 {8/5, 8/4, 8/2, 8/1, 8.5/-2, 8.5/-6, 11.5/-3.5, 11.5/-4.5, 8.5/7.5, 8.5/9.5, 8.5/11.5, 8.5/13.5}
    {
    	\draw (\x,\y) -- (24-\x,\y); 
     
    }

\foreach \x/\y/\w/\z [count = \n] in
 {
8/13/8.5/13.5, 8/14/7.5/13.5,
    8/11/8.5/11.5, 8/12/7.5/11.5,
  8.5/9.5/8/10, 7.5/9.5/8/9,
  8/7/8.5/7.5, 8/8/7.5/7.5,
8/-2.5/8.5/-2, 8/-1.5/7.5/-2,
  8/-6.5/8.5/-6, 8/-5.5/7.5/-6
   }
    {
    	\draw[line width=0.06cm,brown] (\x,\y) -- (\w,\z); 
        \draw[line width=0.06cm,brown] (24-\x,\y) -- (24-\w,\z);
    }

    \foreach \x/\y/\w/\z [count = \n] in
 {
8.5/13.5/8/14, 7.5/13.5/8/13, 
8.5/11.5/8/12, 7.5/11.5/8/11,
  8/9/8.5/9.5, 8/10/7.5/9.5,
  8.5/7.5/8/8, 7.5/7.5/8/7,
  8.5/ -2/8/-1.5, 7.5/-2/8/-2.5,
 8.5/-6/8/-5.5, 7.5/-6/8/-6.5
   }
    {
    	
    	\draw (\x,\y) -- (\w,\z); 
        \draw (24-\x,\y) -- (24-\w,\z);
    }

    \foreach \x/\y/\w/\z [count = \n] in
 { 8/13/8/12, 8/11/8/10, 8/9/8/8, 8/7/8/6, 
 8/-1.5/8/0, 8/-2.5/8/-5.5, 7.5/-2/6/-2, 
 6/-2/6/1.5,  8/-4/9/-4, 9.5/-4.5/10/ -4, 
 10/ -4/9.5/-3.5, 9.5/-3.5/9/-3.5, 9/-3.5/9/-4, 9/-4/9/-4.5, 9/-4.5/10/-4.5, 
 11/-4/11.5/-4.5, 11/-4/11.5/-3.5, 
 7.5/7.5/7/6, 7.5/9.5/3/6, 7.5/11.5/2/6, 7.5/13.5/1/6 }
    {
    	\draw (\x,\y) -- (\w,\z); 
        \draw (24-\x,\y) -- (24-\w,\z);
    }    
    
    \foreach \x/\y/\w/\z [count = \n] in
 { 8/-4/9/-4, 10/-4/11/-4, 9.5/-4.5/11.5/-4.5, 9.5/-3.5/11.5/-3.5,
 9/-4.5/8/-4.5, 9/-3.5/8/-3.5}
    {
    	\draw[line width=0.06cm, red] (\x,\y) -- (\w,\z); 
        \draw[line width=0.06cm, red] (24-\x,\y) -- (24-\w,\z);
    }

    
    \draw (24,6) -- (24,-6) -- (16.5,-6);
    \draw (0,6) -- (0,5) -- (-2,5); 
    \draw (7.5,-6) -- (0,-6) -- (0,4) -- (-2,4);
    
    \draw (16,-6.5) -- (16,-8.5) -- (-2,-8.5);
    \draw (8,-6.5) -- (8,-7.5) -- (-2,-7.5);
    
     \draw (16,14) -- (16,16) -- (-2,16);
    \draw (8,14) -- (8,15) -- (-2,15);
  
      \foreach \x/\y [count = \n] in
 {8/13, 8.5/13.5, 8/14, 7.5/13.5, 
 8/11, 8.5/11.5, 8/12, 7.5/11.5, 
 8/9, 8.5/9.5, 8/10, 7.5/9.5, 
 8/7, 8.5/7.5, 8/8, 7.5/7.5,     
 8/-2.5, 8.5/ -2, 8/-1.5, 7.5/-2, 
 8/-6.5, 8.5/-6, 8/-5.5, 8/-4,7.5/-6, 
 9.5/-4.5, 10/ -4, 9.5/-3.5, 9/-3.5, 9/-4, 9/-4.5,
 9/-4, 11/-4, 11.5/-4.5, 11.5/-3.5 ,
 8/-3.5, 8/-4.5
 }
    {
    	\filldraw (\x, \y) circle (3pt);
        \filldraw (24-\x, \y) circle (3pt);
    }  
    
 \foreach \x/\y [count = \n] in
 {0/6, 1/6, 2/6, 3/6, 5/ 6, 6/ 6, 7/6, 8/6, 8/5, 8/4, 8/3, 8/2, 8/1, 8/0, 5/ 4.5, 6/ 4.5, 5/ 3, 6/ 3, 4/ 3, 6/1.5}
    {
    	\filldraw (\x, \y) circle (3pt); 
    	\filldraw (24 - \x, \y) circle (3pt); 
     
    }
    
    \end{tikzpicture}
    \caption{Edges (brown and red) implied by $\Type{3}{j}$.}
    \label{fig:clause-gadget-type-3-match}
\end{subfigure}
\caption{Same as~\Cref{fig:clause-gadget-type-1-match} for $\Type{2}{j}$ (left) and $\Type{3}{j}$ (right).}
\label{fig::clause-gadget-type-23-match}
\end{figure}

\begin{lemma}\label{lem:matching-type}
Let $M$ be a perfect matching cut of $\construct$ and $\clauseGadget{j}$ be a clause gadget. 
Then there exists exactly one integer $i\in \{1,2,3\}$ such that $M$ is of type $i$ in $\clauseGadget{j}$.
\end{lemma}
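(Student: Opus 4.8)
The plan is to localize $M$ to the induced subgraph on $U_j \cup V_j$, to enumerate the possibilities on each of the two ``cube halves'' $\construct[U_j]$ and $\construct[V_j]$ separately, and finally to glue the two halves together using that $M$ is a \emph{cut} and not merely a perfect matching. First, by \Cref{cor:connector} no connector edge of $\construct$ lies in $M$, and by \Cref{lem:clause-property-1} no edge leaving $U_j$ or leaving $V_j$ lies in $M$; in particular the four bridge edges $u_9v_9,u_{10}v_{10},u_{12}v_{12},u_{13}v_{13}$ are not in $M$. Since $M$ is a perfect matching of $\construct$, every vertex of $U_j$ is therefore matched by an edge of $\construct[U_j]$ and every vertex of $V_j$ by an edge of $\construct[V_j]$, so that $M \cap E(U_j\cup V_j)=M_U\uplus M_V$ with $M_U:=M\cap E(U_j)$ a perfect matching of $\construct[U_j]$ and $M_V:=M\cap E(V_j)$ a perfect matching of $\construct[V_j]$. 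Every cycle of $\construct[U_j]$ is a cycle of $\construct$, so by \Cref{lem:pmc-cycles} $M_U$ is in fact a perfect matching \emph{cut} of $\construct[U_j]$, and likewise $M_V$ of $\construct[V_j]$.

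Next I would enumerate the perfect matching cuts of $\construct[U_j]$. This graph is a subdivision of the cube $Q_3$ (the subgraph drawn with blue edges on the left of \Cref{fig:clause-gadget}), with $u_1,u_8,u_{14}$ sitting on three prescribed cube edges. Using \Cref{lem:facial-cycle} on the six quadrilateral faces of $Q_3$, one records for each cube edge whether it carries a matching edge at each of its two corners; the parity of the number of subdivisions of an edge dictates which of the few local patterns is allowed, and a short case analysis collapses everything to the statement that the perfect matching cuts of this subdivided cube are in bijection with the three perfect matching cuts of $Q_3$ itself (while the other six perfect matchings of $Q_3$ each meet some face in a single edge). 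Tracing these three configurations on \Cref{fig:clause-gadget} identifies them with $\Type{1}{j},\Type{2}{j},\Type{3}{j}$, as depicted in \Cref{fig:matching-types}. As $\construct[V_j]$ is a translate of $\construct[U_j]$, the same reasoning shows $M_V$ is one of $\Typem{1}{j},\Typem{2}{j},\Typem{3}{j}$.

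It remains to rule out $M_U$ and $M_V$ being ``of different types''. Here I would apply \Cref{lem:facial-cycle} to the faces of $\construct$ that straddle the two halves --- the faces incident to a bridge edge $u_kv_k$, and the faces squeezed between a special $4$-cycle $F_i$ and its mirror $F'_i$ --- using \Cref{lem:F-special}, which tells us that each special $4$-cycle contributes exactly two disjoint internal edges of $M$ and no leaving edge, hence a definite parity of $M$-edges along each straddling face. Imposing evenness on these faces couples $M_U$ to $M_V$ in the only reflection-compatible way, i.e.\ $M_U=\Type{i}{j}$ if and only if $M_V=\Typem{i}{j}$. This yields an $i\in\{1,2,3\}$ with $M\cap E(U_j\cup V_j)=\Type{i}{j}\cup\Typem{i}{j}$, and uniqueness of $i$ is immediate since the three sets $\Type{1}{j}\cup\Typem{1}{j}$, $\Type{2}{j}\cup\Typem{2}{j}$, $\Type{3}{j}\cup\Typem{3}{j}$ are pairwise distinct.

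The main obstacle is the second step: reducing ``perfect matching cut of a subdivided cube with three marked vertices'' to the precise three-element list $\{\Type{1}{j},\Type{2}{j},\Type{3}{j}\}$ is the case-heavy part, even though the parity bookkeeping on the faces keeps it bounded and organized. The third step is conceptually the most delicate after that, since it is the only point where being a \emph{cut} rather than just a perfect matching is exploited to couple the two halves --- without it, a type-$1$ half could be paired with a type-$2$ half --- but it is short once the faces gluing the two cube copies have been identified.
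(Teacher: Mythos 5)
Your proof follows the same overall architecture as the paper's: pin down $M$ on the $U_j$-half of the gadget (three possible types), mirror the analysis on the $V_j$-half, then couple the two halves through faces of $\construct$ that cross between them. The difference is in how the three types on one half are derived. You view $\construct[U_j]$ as a subdivision of $Q_3$, note that $M_U := M\cap E(U_j)$ is already a perfect matching cut of $\construct[U_j]$ by \Cref{lem:pmc-cycles}, and then want to transport the three perfect matching cuts of $Q_3$ (the three pairs of opposite faces) over to $\construct[U_j]$. The paper instead runs a direct three-way case split on $M\cap E(u_{17}u_{18}u_{19}u_{20})$ and propagates it via \Cref{lem:4-cycle} and \Cref{lem:clause-property-1}. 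Your route is arguably the more conceptual one, but it hides the load-bearing fact that makes it sound: $\construct[U_j]$ is $Q_3$ with exactly three edges subdivided, each subdivided an \emph{even} number of times (four internal vertices each), and it is precisely this evenness that makes perfect matchings of $\construct[U_j]$ correspond bijectively to perfect matchings of $Q_3$ in a way that preserves face parities. Without that remark the ``short case analysis'' you defer cannot be closed. Also beware that $\construct[U_j]$ is not cubic --- it has twelve degree-$2$ vertices --- so only \Cref{lem:facial-cycle} is available there, not \Cref{lem:4-cycle}; the paper dodges this by always invoking \Cref{lem:4-cycle} inside the cubic graph $\construct$ itself. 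Finally, your coupling step is right in spirit but over-equipped: the two crossing $4$-faces $u_9u_{10}v_{10}v_9$ and $u_{12}u_{13}v_{13}v_{12}$ alone suffice. Their vertical edges are excluded from $M$ by \Cref{lem:clause-property-1}, so \Cref{lem:4-cycle} gives $u_9u_{10}\in M\Leftrightarrow v_9v_{10}\in M$ and $u_{12}u_{13}\in M\Leftrightarrow v_{12}v_{13}\in M$, and membership of exactly these two edges distinguishes the three types, which fixes the $V_j$-type. The faces sandwiched between $F_i$ and $F_i'$ are not needed for this, and in fact do not obviously yield the desired coupling since \Cref{lem:F-special} fixes how many $M$-edges lie in each $F_i$ but not which ones.
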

\begin{proof}
Let $H$ denote the subgraph of $\construct$ induced by the vertices of $\U{j} \cup \V{j}$ of the clause gadget~$\clauseGadget{j}$. 
Let $F = \{u_9v_9, u_{10}v_{10}, u_{12}v_{12}, u_{13}v_{13}\}$. 
Consider the 4-cycle~$C$ induced by $u_{17}, u_{18}, u_{19}, u_{20}$. 
Consider the case when $M \cap E(C) = \emptyset$.
In this case, applying \Cref{lem:4-cycle} on~$C$, we know that $\{u_{19}u_5, u_{20}u_6, u_{17}u_{16}, u_{18}u_{11} \} \subset M$; see \Cref{fig:matching-types-a}.
Since no outgoing edge of $U_j$ is in $M$, due to \Cref{lem:clause-property-1}, it is now easy to verify that $\Type{1}{j} \subset M$. 
\sloppy In the case where $M \cap E(C) = \{ u_{19}u_{20}, u_{17}u_{18}\}$, applying \Cref{lem:4-cycle} on the 4-cycle induced by $u_5, u_6, u_{19}, u_{20}$, we infer that $u_5u_{6}\in M$, and once again it is now easy to verify that $\Type{2}{j} \subset M$; see \Cref{fig:matching-types-b}. 
\sloppy In the last case, $M \cap E(C) = \{ u_{19}u_{17}, u_{18}u_{20}\}$.
We again apply \Cref{lem:4-cycle} on the 4-cycle induced by $u_5, u_6, u_{19}, u_{20}$, and infer this time that $u_5u_{6} \not \in M$. 
As no outgoing edge of $U_j$ is in $M$, it is now easy to verify that $\Type{3}{j} \subset M$; see \Cref{fig:matching-types-c}. 

Observe that in $\Type{1}{j}$, $u_9u_{10} \in M$ and $u_{12}u_{13} \in M$, while for $\Type{2}{j}$ we have $u_9u_{10} \in M$ and $u_{12}u_{13} \not \in M$ and for $\Type{2}{j}$ we have $u_9u_{10} \not \in M$ and $u_{12}u_{13} \in M$.
Thus $M \cap U_j$ is determined by the containment of $u_9u_{10}$ and of $u_{12}u_{13}$ in $M$.
This is also the fact, by symmetry, for $V_j \cap M$, when considering the edges $v_9v_{10}$ and $v_{12}v_{13}$.

At this point, apply \Cref{lem:4-cycle} to the two 4-cycles $u_9, u_{10}, v_{10}, v_{9}$ and $v_{17}, v_{18}, v_{19}, v_{20}$. We have that $u_9u_{10} \in M$ if and only if $v_9v_{10}\in M$, and $u_{12}u_{13} \in M$ if and only if $v_{12}v_{13}\in M$. Thus $\Type{i}{j}$ propagates to $\Type{i}{j} \cup \Typem{i}{j}$.
\end{proof}

As a direct consequence of \Cref{lem:matching-type}, we get the following.

\begin{lemma}\label{lem:corner-not-same-side}
Let $M$ be a perfect matching cut of $\construct$ and $(A,B)$ be the cut of~$M$. 
The vertices $u_1,u_8,u_{14}$ of a clause gadget $\clauseGadget{j}$ cannot all be on the same side of~$M$.
More precisely:
\begin{enumerate}
	\item $\Type{1}{j}$ sets $u_1$ to one side of $M$, and $u_8, u_{14}$ to the other;
	\item $\Type{2}{j}$ sets $u_{14}$ to one side of $M$, and $u_1, u_{8}$ to the other;
	\item $\Type{3}{j}$ sets $u_8$ to one side of $M$, and $u_1, u_{14}$ to the other.
\end{enumerate}
\end{lemma}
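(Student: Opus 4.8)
The plan is to get this directly from \Cref{lem:matching-type} by a short parity computation using \Cref{obs:sides}. Fix a clause gadget $\clauseGadget{j}$, and let $H$ be the subgraph of $\construct$ induced by $U_j \cup V_j$; the three vertices $u_1, u_8, u_{14}$ all lie on the subdivided cube spanned by $U_j$, which is connected. By \Cref{lem:matching-type}, $M$ is of a unique type $i \in \{1,2,3\}$ in $\clauseGadget{j}$, so $M \cap E(H) = \Type{i}{j} \cup \Typem{i}{j}$ is completely known. By \Cref{obs:sides}, for any two of $u_1, u_8, u_{14}$ the parity of the number of $M$-edges along \emph{any} path between them inside $H$ determines whether they lie on the same side of $M$; since such a path can be taken among the $u$-vertices, and the edges of $\Typem{i}{j}$ all lie on $V_j$, only the edges of $\Type{i}{j}$ matter.

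I would compute with three paths read off \Cref{fig:clause-gadget}: the top path $u_1u_2u_3u_4u_5u_6u_7u_8$, the right path $u_8u_9u_{10}u_{11}u_{12}u_{13}u_{14}$, and the hypotenuse path $u_1u_{16}u_{15}u_{14}$; all their edges lie in $E(H)$, being pairs of consecutive subdivision vertices along blue edges. If $M$ is of type $1$: the top path meets $\Type{1}{j}$ in $\{u_1u_2, u_3u_4, u_7u_8\}$ (odd), the right path in $\{u_9u_{10}, u_{12}u_{13}\}$ (even), the hypotenuse in $\{u_{14}u_{15}\}$ (odd), so $u_8$ and $u_{14}$ are together and $u_1$ is on the other side. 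If $M$ is of type $2$: the top path meets $\Type{2}{j}$ in $\{u_1u_2, u_3u_4, u_5u_6, u_7u_8\}$ (even) and the right path in $\{u_9u_{10}, u_{11}u_{12}, u_{13}u_{14}\}$ (odd), so $u_1, u_8$ are together and $u_{14}$ is separated. If $M$ is of type $3$: the top path meets $\Type{3}{j}$ in $\{u_2u_3, u_4u_5, u_6u_7\}$ (odd) and the hypotenuse in $\{u_1u_{16}, u_{14}u_{15}\}$ (even), so $u_1, u_{14}$ are together and $u_8$ is separated. In each case exactly one of $u_1, u_8, u_{14}$ is on the side opposite to the other two, which settles items (1)--(3) and in particular rules out all three being on the same side. (These three forced partial matchings are precisely what \Cref{fig:clause-gadget-type-1-match} and \Cref{fig::clause-gadget-type-23-match} display.)

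There is no real obstacle here once \Cref{lem:matching-type} is available: the only care needed is the bookkeeping, i.e., checking that every listed edge indeed belongs to the set $\Type{i}{j}$ defined just before \Cref{lem:matching-type}, and that the stated paths are genuine paths of $H$. As a sanity check, \Cref{obs:sides} forces the parity along the hypotenuse to equal the parity along the concatenation of the top and right paths, which one verifies holds in all three types.
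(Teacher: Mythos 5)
Your proposal is correct and follows exactly the route the paper intends: the paper states the lemma as an unproven ``direct consequence'' of \Cref{lem:matching-type}, and what you have written is precisely the parity bookkeeping along the top, right, and hypotenuse paths (using \Cref{obs:sides} and the fact that $R^i_j$-edges lie entirely among the $v$-vertices) that the paper leaves implicit. I spot-checked all six parities against the definitions of $L^1_j, L^2_j, L^3_j$ and they are all correct.
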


Note that for a clause gadget $\clauseGadget{j}$, if $M$ is of type 1 (type 2, type 3, respectively) in $\clauseGadget{j}$, then the edges in $M \cap E\left( \clauseGadget{j}\right)$ are indicated in \Cref{fig:clause-gadget-type-1-match} (\Cref{fig:clause-gadget-type-2-match}, \Cref{fig:clause-gadget-type-3-match}, respectively).

\subsection{Relation between variable and clause gadgets}

\begin{lemma}\label{lem:variable-clause-propagation}
Let $M$ be a perfect matching cut of $\construct$.
Then for a variable $x_i$ and a clause $C_j$ with $x_i\in C_j$, $\topvarclause{i}{j}, \topclausevar{i}{j}, \botvarclause{i}{j}, \botclausevar{i}{j}$ are on the same side of~$M$.
\end{lemma}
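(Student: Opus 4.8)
The plan is to show that the four vertices $\topvarclause{i}{j}$, $\botvarclause{i}{j}$ (in the variable gadget) and $\topclausevar{i}{j}$, $\botclausevar{i}{j}$ (in the clause gadget) all lie on the same side of the cut $M$ by chaining together parity arguments via \Cref{obs:sides}. Recall that two vertices are on the same side of $M$ precisely when \emph{some} (equivalently, every) path between them contains an even number of edges of $M$. So it suffices to exhibit, for suitable pairs among these four vertices, a path along which the number of $M$-edges is even.

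First I would handle the pair $\topvarclause{i}{j}$ and $\topclausevar{i}{j}$ (and symmetrically $\botvarclause{i}{j}$ and $\botclausevar{i}{j}$): these are joined directly by the connector edge $\topvarclause{i}{j}\topclausevar{i}{j} \in E_{ij}$, and by \Cref{cor:connector} this connector edge is not in $M$. Hence the one-edge path between them has zero $M$-edges, so they are on the same side. The same argument applies to the ``bottom'' pair using $\botvarclause{i}{j}\botclausevar{i}{j}$.

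Next I would connect the top pair to the bottom pair. The cleanest route is through the variable gadget: by \Cref{lem:var-connector}, $M$ restricted to $\varGadget{i}$ is exactly the set of red edges of \Cref{fig:var-gadget}, which is a fixed perfect matching of the gadget. One then simply reads off from \Cref{fig:var-gadget} a path inside $\varGadget{i}$ from $\topvarclause{i}{j}$ to $\botvarclause{i}{j}$ and checks that it uses an even number of red edges — indeed the 8 anchor points of the variable gadget are all forced on the same side, exactly as stated in the proof outline (``The 8 anchor points of each variable gadget are forced on the same side''). Since all of $M\cap E(\varGadget{i})$ is known explicitly, this is a finite, deterministic verification: pick the path along the external boundary of the gadget (or along the $\cycle{i}{\bullet}$ spine) and count. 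Combining this with the previous paragraph gives that $\topvarclause{i}{j}$, $\botvarclause{i}{j}$, $\topclausevar{i}{j}$, $\botclausevar{i}{j}$ are pairwise on the same side of $M$.

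The main obstacle, though it is a routine rather than deep one, is the bookkeeping of which edges of the fixed red matching lie on the chosen path in \Cref{fig:var-gadget}; one must be careful that the figure's anchor labelling ($\topvarclause{i}{j}$ versus $\botvarclause{i}{j}$, etc.) is consistent with the even-parity claim, and that the path crosses each relevant $6$-cycle $\cycle{i}{\bullet}$ and $4$-cycle $\cycle{i}{4}$ the ``right'' number of times. An alternative, should the figure-reading be awkward, is to route the connecting path through the \emph{clause} gadget instead, using \Cref{lem:matching-type}: in each of the three types $\Type{1}{j}, \Type{2}{j}, \Type{3}{j}$, one checks (from \Cref{fig:clause-gadget-type-1-match,fig:clause-gadget-type-2-match,fig:clause-gadget-type-3-match}) that $\topclausevar{i}{j}$ and $\botclausevar{i}{j}$ are on the same side of $M$ — this is really the substance of \Cref{lem:corner-not-same-side} applied at the ``$u_1$'' (i.e. $\topclausevar{b}{j}$) and ``$\botclausevar{b}{j}$'' endpoints, and similarly for the $a$- and $c$-variable anchor pairs. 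Either way, the conclusion follows by transitivity of ``same side of $M$''.
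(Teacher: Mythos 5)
Your first step contains a genuine gap. You assert that $\topvarclause{i}{j}$ and $\topclausevar{i}{j}$ ``are joined directly by the connector edge $\topvarclause{i}{j}\topclausevar{i}{j}\in E_{ij}$,'' and invoke \Cref{cor:connector}. But the edges of $E_{ij}$ are edges of the intermediate graph $\cubic$, not necessarily of the final graph~$\construct$: in step~3 of the construction, crossing points along these edges are replaced by crossing gadgets, so in $\construct$ the original edge $\topvarclause{i}{j}\topclausevar{i}{j}$ is in general a path that enters and leaves some number of crossing gadgets. The two endpoints are not adjacent in $\construct$ unless that particular wire happens to cross nothing. \Cref{cor:connector} still tells you each individual connector edge along the way is absent from $M$, but the path between $\topvarclause{i}{j}$ and $\topclausevar{i}{j}$ also passes \emph{through} crossover $4$-cycles, and on each such $4$-cycle the path uses two edges of the cycle, exactly one of which belongs to $M$ by \Cref{lem:F-cross}. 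So $|E(P)\cap M|$ equals the number of crossover $4$-cycles traversed, and you still have to argue this number is even. This is exactly what the duplication of wires (the ``crossing quadruple'' design, see \Cref{fig:cross-replace}) is for: every time the top wire crosses another pair of parallel wires it crosses both of them, contributing two crossover $4$-cycles. The paper makes this count explicit; your proposal skips it entirely, and without it the claim ``same side'' for the top pair (and symmetrically the bottom pair) does not follow.

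Your second step, connecting $\topvarclause{i}{j}$ to $\botvarclause{i}{j}$ inside the variable gadget via a path through $\cycle{i}{2}$ and reading off the fixed red matching from \Cref{lem:var-connector}, is correct and matches the paper's argument. The alternative route you sketch through the clause gadget would also work in principle, but it is not needed and introduces avoidable case analysis over the three types. The essential missing piece is the even-parity argument across crossing gadgets.
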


\begin{proof}
Let $z$ be any vertex of the cycle $\cycle{i}{2}$ in the variable gadget $\varGadget{i}$; see \Cref{fig:var-gadget}.
Observe that there exists a path $P$ (resp. $P'$) between $z$ and $\topvarclause{i}{j}$ (resp. $\botvarclause{i}{j}$) such that $|M  \cap E(P)|$ (resp. $|M  \cap E(P')|$) is even.
Hence, due to \Cref{obs:sides}, $\topvarclause{i}{j}$ and $\botvarclause{i}{j}$ are on the same side of~$M$.

Our construction of $\construct$ ensures that there exists an even non-negative integer $k$ (where $k = 0$ if $\topvarclause{i}{j}$ and $\topclausevar{i}{j}$ are adjacent) such that all the following holds:
\begin{itemize}
\item there are $k$ crossover 4-cycles $F_1, F_2, \ldots, F_k$ and a path $P$ between $\topvarclause{i}{j}$ and $\topclausevar{i}{j}$ where 
$$V(P) \setminus \{ \topvarclause{i}{j}, \topclausevar{i}{j} \} \subset \displaystyle\bigcup_{l\in k} V(F_l)$$
\item for each $1\leq l\leq k$, $E(P) \cap E(F_l)$ is a~2-edge subpath.
\end{itemize} 

Now due to \Cref{lem:F-cross} we know that for any $1\leq l\leq k$, $|M \cap E(F_l)|=2$. 
The above arguments further imply that $|M \cap E(F_l) \cap E(P)|=1$. This implies that $|E(P) \cap M|=k$, which is even. 
Hence due to \Cref{obs:sides} we have that $\topvarclause{i}{j}$ and $\topclausevar{i}{j}$ are on the same side of~$M$. 

Using similar reasoning we can infer that $\botclausevar{i}{j}$ is on the same side as $\botvarclause{i}{j}$. Hence  
$\topvarclause{i}{j}, \topclausevar{i}{j}, \botvarclause{i}{j}, \botclausevar{i}{j}$ are all on the same side. 
\end{proof}

\begin{lemma}\label{lem:intra-clause-propagation}
Let $M$ be a perfect matching cut of $\construct$.
Then for any clause gadget $\clauseGadget{j}$ corresponding to the clause $C_j=(x_a,x_b,x_c)$ with $a<b<c$, the following hold: 
\begin{enumerate}[label=(\alph*)]
\item $\topclausevar{c}{j}$ and $u_{14}$ are on the same side of $M$, and
\item $\botclausevar{a}{j}$ and $u_{8}$ are on the same side of $M$. 
\end{enumerate} 
\end{lemma}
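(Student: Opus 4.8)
The plan is to argue exactly as in the proof of \Cref{lem:variable-clause-propagation}: for each of the two claims I would produce a path between the two relevant vertices along which every edge of $M$ lies inside a special $4$-cycle, with each traversed special $4$-cycle contributing precisely one edge of $M$ to the path; counting these contributions yields an even number, and \Cref{obs:sides} then puts the endpoints on the same side of $M$.

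First I would record a local fact about a single special $4$-cycle. Let $F = xyzw$ (in cyclic order) be a special $4$-cycle of $\clauseGadget{j}$. By \Cref{lem:F-special}, $|E(F) \cap M| = 2$ and no outgoing edge of $V(F)$ belongs to $M$; by \Cref{lem:4-cycle}, the two edges of $E(F) \cap M$ are disjoint, so $E(F) \cap M$ is either $\{xy, zw\}$ or $\{yz, wx\}$. Hence, whenever a path enters $F$ at a corner and exits at the opposite corner --- that is, it uses one of the two internal subpaths $x\,y\,z$ or $x\,w\,z$ --- it meets exactly one edge of $M$ inside $F$ (namely $xy$ or $yz$ in the first case, $wx$ or $zw$ in the second, according to which of the two options $E(F) \cap M$ is). This, together with the fact that the edges joining $F$ to the outside are never in $M$, is the only ingredient the rest of the argument uses.

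Next I would read off the two relevant chains of special $4$-cycles from \Cref{fig:clause-gadget}. For part~(b), $u_8$ is joined to $\botclausevar{a}{j}$ by a path $P'$ that passes successively through $F_4, F_3, F_2, F_1$: from $u_8$ along a link edge to the near corner of $F_4$, across $F_4$ to its opposite corner, along a link edge to $F_3$, across $F_3$, and so on, until after crossing $F_1$ it ends at the far corner of $F_1$, which is $\botclausevar{a}{j}$. The four link edges are outgoing edges of special $4$-cycles, hence not in $M$ (again by \Cref{lem:F-special}), and each of the four crossings contributes exactly one edge of $M$ by the local fact above; therefore $|E(P') \cap M| = 4$, which is even, and \Cref{obs:sides} gives that $u_8$ and $\botclausevar{a}{j}$ lie on the same side of $M$. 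Part~(a) is the same argument with the shorter path $P$ from $u_{14}$ through $F_5$ and $F_6$, ending at the far corner $\topclausevar{c}{j}$ of $F_6$: two link edges, both avoided by $M$, and two crossings, so $|E(P) \cap M| = 2$, even, whence $u_{14}$ and $\topclausevar{c}{j}$ are on the same side.

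I do not anticipate a genuine obstacle: the whole thing is the even-count bookkeeping of \Cref{lem:variable-clause-propagation}, now run with special $4$-cycles (and \Cref{lem:F-special}) instead of crossover $4$-cycles (and \Cref{lem:F-cross}). The one point deserving care is checking, against the figure, that $F_4, F_3, F_2, F_1$ (resp.\ $F_5, F_6$) really form a path of pairwise vertex-disjoint $4$-cycles hanging off $u_8$ (resp.\ $u_{14}$), that consecutive members of the chain are linked by single edges meeting them at opposite corners, and that $\botclausevar{a}{j}$ (resp.\ $\topclausevar{c}{j}$) is exactly the far corner of the last $4$-cycle --- so that each crossing is corner-to-opposite-corner and the path terminates precisely on the anchor vertex.
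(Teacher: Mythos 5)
Your argument is sound for part~(b): the chain $F_1, F_2, F_3, F_4$ really is a path of special $4$-cycles joined by single link edges (these link edges are outgoing edges of special $4$-cycles, hence not in $M$ by \Cref{lem:F-special}), and each corner-to-opposite-corner crossing of an $F_i$ contributes exactly one $M$-edge, so $|E(P')\cap M| = 4$.

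For part~(a), however, there is a genuine gap. You state that $F_5$ and $F_6$ are linked by a single edge ``meeting them at opposite corners,'' but this is not how the clause gadget is built. Between the corner $(8,-2.5)$ of $F_5$ and the corner $(8,-5.5)$ of $F_6$ lies a subdivided path through the three vertices $d_3, d_2, d_1$ of the sub-gadget $D_j$ --- four edges in total, of which only the two extremal ones are outgoing edges of a special $4$-cycle. The two interior edges $d_1 d_2$ and $d_2 d_3$ are not outgoing edges of any $F_i$, so \Cref{lem:F-special} says nothing about them, and without further input the segment $z_2\, d_1\, d_2\, d_3\, z_3$ could contribute $0$, $1$, or $2$ edges of $M$ to the path (for instance $d_1 d_2 \in M$ alone would already break the parity count). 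To close this you must invoke \Cref{lem:Dj-red-edges}: since the red $D_j$-edges saturate $d_1, d_2, d_3$ in every perfect matching cut, none of the four edges $z_2 d_1$, $d_1 d_2$, $d_2 d_3$, $d_3 z_3$ can lie in $M$, and only then does the bookkeeping give $|E(P)\cap M|=2$, even. The paper's proof of part~(a) hinges exactly on this extra lemma, which your write-up omits.
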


\begin{proof}
First we prove $(a)$. 
Using \Cref{fig:clause-gadget} observe that there exists a path $P$ between $\topclausevar{c}{j}$ and $u_{14}$ such that $P$ can be written as 
$\topclausevar{c}{j}~z_1~z_2~d_1~d_2~d_3~z_3~z_4~z_5~u_{14}$ where $\{z_1,z_2\}\subset V(F_6)$ and $\{z_3,z_4,z_5\}\subset V(F_5)$. 
Note that $F_5$ and $F_6$ are special 4-cycles. 
Due to \Cref{lem:F-special}, we have that $|M \cap E(F_5)| = 2$ and $|M \cap E(F_6)| = 2$. 
This implies there exists exactly one edge $e\in \{\topclausevar{c}{j}z_1, z_1z_2\}$ such that \mbox{$e\in M$}. 
Similarly, there exists exactly one edge $e'\in \{z_3z_4, z_4z_5\}$ such that \mbox{$e'\in M$}.
Moreover, from \Cref{lem:Dj-red-edges} it follows that none of $\{z_2d_1, d_1d_2, d_2d_3, d_3z_3\} $ belongs to~$M$. 
Hence $M \cap E(P) = \{e,e'\}$, and $|M \cap E(P)|$ is even. 
Now invoking \Cref{obs:sides} we conclude that $ \topclausevar{c}{j}$ and $u_{14}$ are on the same side of $M$.

\sloppy Now we prove $(b)$. Using \Cref{fig:clause-gadget} observe that there exists a path $P'$ between~$\botclausevar{a}{j}$ and $u_{8}$ such that $P'$ can be written as 
$\botclausevar{c}{j}~z_1~z_2~z_3~z_4~z_5~z_6~z_7~z_8~z_9~z_{10}$ $z_{11}~u_{8}$ where $\{z_1,z_2\}\subset V(F_1)$, 
$\{z_3,z_4,z_5\}\subset V(F_2)$, 
$\{z_6,z_7,z_8\}\subset V(F_3)$, and
$\{z_9,z_{10},z_{11}\}\subset V(F_4)$.
Now arguing similarly as in $(a)$ on the special 4-cycles $F_1,F_2,F_3,F_4$, we have that $|M \cap E(P)|$ is even. 
By \Cref{obs:sides}, we conclude that $ \botclausevar{a}{j}$ and $u_{8}$ are on the same side of $M$.
\end{proof}

\begin{figure}[t]
  \centering
  \begin{subfigure}{0.45\textwidth}
    
\begin{tikzpicture}[scale=0.82]
      
\node[below] at (1,2)  {\scriptsize $b_{a}$};
\node[below] at (1,5)  {\scriptsize $t_{a}$};
\node[right] at (2,6)  {\scriptsize $b_{c}$};
\node[right] at (5,6)  {\scriptsize $t_{c}$};
\node[right] at (2,1)  {\scriptsize $b'_{c}$};
\node[right] at (5,1)  {\scriptsize $t'_{c}$};
\node[below] at (6,2)  {\scriptsize $b'_{a}$};
\node[below] at (6,5)  {\scriptsize $t'_{a}$};
\draw (1,2)--(2,1)--(3,2)--(2,3)--(1,2)--(2,3)--(2,4)--(1,5)--(2,6)--(3,5)--(2,4)--(3,5)--(4,5)--(5,4)--( 6,5)--(5,6)--(4,5)--(5,4)--(5,3)--(4,2)--(3,2)--(4,2)--(5,1)--(6,2)--(5,3);
\foreach \x/\y/\z/\t in {1/2/0/2, 1/5/0/5, 2/1/2/0, 2/6/2/7, 5/1/5/0, 5/6/5/7, 6/2/7/2, 6/5/7/5}
	\draw (\x, \y)--(\z, \t);
	
\foreach \x/\y/\z/\t in {1/2/2/3, 1/5/2/4, 2/1/3/2, 2/6/3/5, 5/1/4/2, 5/6/4/5, 6/2/5/3, 6/5/5/4}
	\draw[line width=0.07cm, brown] (\x, \y)--(\z, \t);
\foreach \x/\y [count = \n] in
 {1/2, 2/1, 3/2, 2/3, 1/5, 2/4, 3/5, 2/6, 4/2, 5/1, 6/2, 5/3, 4/5, 5/4, 6/5, 5/6}
	{ \filldraw (\x, \y) circle (2pt);
	\ifthenelse{\n = 5 \OR \n = 1 \OR \n = 9 \OR \n = 13}
	{\node[right] at (\x, \y) {\scriptsize $u_{\n}$};}
	{\node[left] at (\x, \y) {\scriptsize $u_{\n}$};}
	}

\end{tikzpicture}
\subcaption{Edges of $\Typecross{1}{j}$ are drawn in brown.}
\label{fig:cross-type-a}
\end{subfigure}
~~~
\begin{subfigure}{0.45\textwidth}
\begin{tikzpicture}[scale=0.82]

\draw (1,2)--(2,1)--(3,2)--(2,3)--(1,2)--(2,3)--(2,4)--(1,5)--(2,6)--(3,5)--(2,4)--(3,5)--(4,5)--(5,4)--( 6,5)--(5,6)--(4,5)--(5,4)--(5,3)--(4,2)--(3,2)--(4,2)--(5,1)--(6,2)--(5,3);
\node[below] at (1,2)  {\scriptsize $b_{a}$};
\node[below] at (1,5)  {\scriptsize $t_{a}$};
\node[right] at (2,6)  {\scriptsize $b_{c}$};
\node[right] at (5,6)  {\scriptsize $t_{c}$};
\node[right] at (2,1)  {\scriptsize $b'_{c}$};
\node[right] at (5,1)  {\scriptsize $t'_{c}$};
\node[below] at (6,2)  {\scriptsize $b'_{a}$};
\node[below] at (6,5)  {\scriptsize $t'_{a}$};
\foreach \x/\y/\z/\t in {1/2/0/2, 1/5/0/5, 2/1/2/0, 2/6/2/7, 5/1/5/0, 5/6/5/7, 6/2/7/2, 6/5/7/5}
	\draw (\x, \y)--(\z, \t);
	
\foreach \x/\y/\z/\t in {1/2/2/1, 1/5/2/6, 2/3/3/2, 2/4/3/5, 5/1/6/2, 5/6/6/5, 4/2/5/3, 4/5/5/4}
	\draw[line width=0.07cm, brown] (\x, \y)--(\z, \t);
\foreach \x/\y [count = \n] in
 {1/2, 2/1, 3/2, 2/3, 1/5, 2/4, 3/5, 2/6, 4/2, 5/1, 6/2, 5/3, 4/5, 5/4, 6/5, 5/6}
	{ \filldraw (\x, \y) circle (2pt);
	\ifthenelse{\n = 5 \OR \n = 1 \OR \n = 9 \OR \n = 13}
	{\node[right] at (\x, \y) {\scriptsize $u_{\n}$};}
	{\node[left] at (\x, \y) {\scriptsize $u_{\n}$};}
	}
\end{tikzpicture}
\subcaption{Edges of $\Typecross{2}{j}$ are drawn in brown.}
\label{fig:cross-type-b}
\end{subfigure}
\caption{$\Typecross{1}{j}$ and $\Typecross{2}{j}$ are the only possible restrictions of $M$ to a crossing gadget.}
\label{fig:cross-type}
\end{figure}

For any crossing gadget $X_j$ as drawn in \Cref{fig:cross-type} we consider the two perfect matching cuts $\Typecross{1}{j}$ of \Cref{fig:cross-type-a} and $\Typecross{2}{j}$ of \Cref{fig:cross-type-b} on~$X_j$.

\begin{lemma}\label{lem:crossing-gadget-cst}
Let $X_j$ be a crossing gadget of $\construct$.
For any $M \in \{\Typecross{1}{j}, \Typecross{2}{j}\}$, $M$~is a perfect matching cut of $X_j$.
Vertices $t_a, b_a, t'_a, b'_a$ are always on the same side of~$M$, and $t_c, b_c, t'_c, b'_c$ are always on the same side of~$M$. Moreover, if $M = \Typecross{1}{j}$, $t_a$ and $t_c$ are on the same side of $M$ (in $X_j$), otherwise they are not.
\end{lemma}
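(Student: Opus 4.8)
The plan is to verify the statement directly on the explicit gadget of \Cref{fig:cross-type}, using \Cref{lem:facial-cycle} to certify that $\Typecross{1}{j}$ and $\Typecross{2}{j}$ are cutsets of $X_j$ and \Cref{obs:sides} to read off the sides. Recall from the construction (see \Cref{fig:cross-replace-b}) that $X_j$ is the disjoint union of four crossover $4$-cycles joined cyclically by four \emph{bridge} edges, and that among its sixteen vertices exactly eight are connector vertices, namely $t_a, b_a, t'_a, b'_a$ (incident to connector edges of variable index $a$) and $t_c, b_c, t'_c, b'_c$ (variable index $c$); each connector vertex has degree $2$ inside $X_j$, while each endpoint of a bridge edge has degree $3$ inside $X_j$.

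First I would check that each $M \in \{\Typecross{1}{j}, \Typecross{2}{j}\}$ is a perfect matching of $X_j$. Reading off the brown edges of \Cref{fig:cross-type-a} and \Cref{fig:cross-type-b}, each such $M$ consists of eight edges of $X_j$, none of them a bridge edge or a connector edge, that together cover each of the sixteen vertices of $X_j$ exactly once; equivalently, $M$ restricts to a $2$-edge matching inside each of the four crossover $4$-cycles and contains nothing else. To see that $M$ is a cutset, observe that $X_j$ is plane and connected, so by \Cref{lem:facial-cycle} it suffices to check that $\card{E(C) \cap M}$ is even for every facial cycle $C$ of $X_j$. By Euler's formula $X_j$ has six faces: the four $4$-faces bounded by the crossover $4$-cycles (each meeting $M$ in $2$ edges), one \emph{inner} face bounded by the four bridge edges together with the single edge of each crossover $4$-cycle that joins its two bridge endpoints, and one \emph{outer} face bounded by the four bridge edges together with the remaining three edges of each crossover $4$-cycle. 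That distinguished inner edge of a crossover $4$-cycle lies outside $\Typecross{1}{j}$ but inside $\Typecross{2}{j}$, so the inner face meets $\Typecross{1}{j}$ in $0$ edges and $\Typecross{2}{j}$ in $4$, and correspondingly the outer face meets $\Typecross{1}{j}$ in $8$ edges and $\Typecross{2}{j}$ in $4$; all six counts are even, so $M$ is a perfect matching cut of $X_j$. Since $X_j$ is connected, the cut $(A,B)$ of $M$ is uniquely determined.

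Finally, with the cut $(A,B)$ fixed, \Cref{obs:sides} says that two vertices lie on the same side of $M$ exactly when some path between them in $X_j$ has an even number of edges in $M$. Tracing short paths that run through consecutive crossover $4$-cycles and the bridges joining them --- for instance from $b_a$ to each of $t_a$, $t'_a$, $b'_a$, $t_c$, $b_c$, $t'_c$, $b'_c$ --- one finds that for $M = \Typecross{1}{j}$ all eight connector vertices lie on one side of $M$ (hence $t_a, b_a, t'_a, b'_a$ are together, $t_c, b_c, t'_c, b'_c$ are together, and $t_a$ and $t_c$ are on the same side), whereas for $M = \Typecross{2}{j}$ the four connectors of variable index $a$ lie on one side and the four of index $c$ on the other (hence $t_a$ and $t_c$ are separated). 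Equivalently, and this is what I would actually write out, one exhibits the bipartition $(A,B)$ witnessing $M$ in each of the two cases and checks by inspection that it is consistent with $M$.

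The argument is a finite verification with no genuine obstacle. The two points that deserve care are (i) confirming that $\Typecross{1}{j}$ and $\Typecross{2}{j}$ indeed avoid every bridge edge and every connector edge, so that the eight chosen edges form a perfect matching of the \emph{induced} subgraph $X_j$ and each crossover $4$-cycle carries a genuine $4$-cycle matching; and (ii) describing the inner face of $X_j$ correctly --- its boundary mixes crossover-cycle edges with bridge edges, and, unlike the four $4$-faces, its number of $M$-edges depends on whether $M$ is $\Typecross{1}{j}$ or $\Typecross{2}{j}$ --- so that \Cref{lem:facial-cycle} can be applied without slips.
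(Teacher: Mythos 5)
Your proof is correct and takes essentially the same approach as the paper's: both verify that $M$ is a cutset via \Cref{lem:facial-cycle} and then determine sides by applying \Cref{obs:sides} to paths in the gadget. The paper's version is much terser (it simply points to subpaths of the external facial cycle), whereas you spell out the full face structure of $X_j$ and the resulting parity counts; the extra detail is sound and does not change the underlying argument.
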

\begin{proof}
We refer to \Cref{fig:cross-type} for the notations on $X_j$.
$M$ is a perfect matching cut of $X_j$ by \Cref{lem:facial-cycle}.
Let $C$ be the external facial cycle of $X_j$.
We conclude by~\Cref{obs:sides} on subpaths of $C$ starting at $t_a$ or $t_c$.
\end{proof}

\subsection{Existence of perfect matching cut implies satisfiability}

In this section, we show that if $\construct$ has a perfect matching cut then $I$ has a~satisfying assignment. 
Let $M$ be a perfect matching cut of $M$ and $(A, B)$ be the cut of $M$.
As we already observed, a~potential solution to $I$ can be seen as a partition $(\mathcal{V}_A, \mathcal{V}_B)$ of the variables.
We set $x_i$ in $\mathcal{V}_A$ if and only if $V(\cycle{i}{2}) \subset A$, and we show that $\mathcal{P} = (\mathcal{V}_A, \mathcal{V}_B)$ satisfies the $\MSAT$-instance~$I$. 




Assume for contradiction that there exists a clause $C_j$ such that all variables, $x_a, x_b,x_c$, of $C_j$ are on the same side of $\mathcal{P}$. Thus all the vertices in $\cup_{i \in \{a, b, c\}} V(\cycle{i}{2})$ are on the same side of $M$. Assume without loss of generality that this side is~$A$. 
Let $z_i$ be any vertex of $\cycle{i}{2}$.

Now fix an integer $i\in \{a,b,c\}$. Observe, using \Cref{fig:var-gadget}, that there exists a path $P$ between $z_i$ and $\topvarclause{i}{j}$ such that $|M  \cap E(P)|$ is even. 
Hence, due to \Cref{obs:sides}, we infer that $\topvarclause{i}{j}$ lies in $A$. Now due to \Cref{lem:variable-clause-propagation} we have that $\left\{\topvarclause{i}{j}, \topclausevar{i}{j}, \botvarclause{i}{j}, \botclausevar{i}{j}\right\}\subset A$.
The above discussion implies that $$\displaystyle\bigcup\limits_{i\in \{a,b,c\}} \left\{\topvarclause{i}{j}, \topclausevar{i}{j}, \botvarclause{i}{j}, \botclausevar{i}{j}\right\}\subset A.$$ 
Now invoking \Cref{lem:intra-clause-propagation}, we have that all three vertices in $\{u_1,u_8,u_{14}\}$ lie in~$A$.
But this contradicts \Cref{lem:corner-not-same-side}.
Hence we get the following.

\begin{lemma}\label{lem:cut-to-SAT}
If $\construct$ has a~perfect matching cut then $I$ is a~positive instance. 
\end{lemma}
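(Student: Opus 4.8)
The plan is to argue by contradiction, chaining together the structural lemmas already established for the variable, clause, and crossing gadgets. Given a perfect matching cut $M$ of $\construct$ with cut $(A,B)$, I would first record that the partition $\mathcal{P} = (\mathcal{V}_A, \mathcal{V}_B)$ of the variables, defined by placing $x_i$ in $\mathcal{V}_A$ precisely when $V(S_i^2) \subseteq A$, is well-defined: the edges of the $6$-cycle $S_i^2$ are black, so by \Cref{lem:var-connector} none of them lies in $M$, and hence all of $V(S_i^2)$ sits on one side of $M$. I then assume for contradiction that $\mathcal{P}$ fails to satisfy some clause $C_j = (x_a, x_b, x_c)$ with $a<b<c$, i.e., that $x_a, x_b, x_c$ are all on the same side of $\mathcal{P}$, say the side corresponding to $A$, so that $\bigcup_{i \in \{a,b,c\}} V(S_i^2) \subseteq A$.

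Next I would push this information from the variable gadgets into the clause gadget $\clauseGadget{j}$. Fixing $i \in \{a,b,c\}$ and any $z_i \in V(S_i^2)$, one reads off \Cref{fig:var-gadget} a path from $z_i$ to the anchor $\topvarclause{i}{j}$ meeting $M$ an even number of times (using the red edges forced into $M$ by \Cref{lem:var-connector}), so $\topvarclause{i}{j} \in A$ by \Cref{obs:sides}; then \Cref{lem:variable-clause-propagation} yields $\{\topvarclause{i}{j}, \topclausevar{i}{j}, \botvarclause{i}{j}, \botclausevar{i}{j}\} \subseteq A$. Doing this for $i = a, b, c$, all six clause-side connector endpoints lie in $A$; in particular $\topclausevar{b}{j}, \topclausevar{c}{j}, \botclausevar{a}{j} \in A$. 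Finally I would propagate these to the three corner vertices $u_1, u_8, u_{14}$ of $\clauseGadget{j}$: since $u_1 = \topclausevar{b}{j}$ by the labelling of \Cref{fig:clause-gadget}, we already have $u_1 \in A$; \Cref{lem:intra-clause-propagation}(a) puts $u_{14}$ on the same side as $\topclausevar{c}{j}$, so $u_{14} \in A$; and \Cref{lem:intra-clause-propagation}(b) puts $u_8$ on the same side as $\botclausevar{a}{j}$, so $u_8 \in A$. Then $u_1, u_8, u_{14}$ all lie in $A$, contradicting \Cref{lem:corner-not-same-side}, so $\mathcal{P}$ satisfies every clause and $I$ is a positive instance.

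I do not expect a genuine obstacle here, since the combinatorial heavy lifting already sits in the earlier lemmas — the three-way case analysis of \Cref{lem:matching-type} and the parity bookkeeping through crossing gadgets in \Cref{lem:variable-clause-propagation} and \Cref{lem:crossing-gadget-cst}. The only points requiring care are purely a matter of bookkeeping: matching figure labels to anchor names (notably $u_1 = \topclausevar{b}{j}$, and the anchors $\topclausevar{c}{j}, \botclausevar{a}{j}$ feeding \Cref{lem:intra-clause-propagation}), and verifying that the chosen path from a vertex of $S_i^2$ to $\topvarclause{i}{j}$ really meets $M$ an even number of times — both immediate once the red edges of \Cref{fig:var-gadget} are pinned down.
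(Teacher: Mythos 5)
Your proposal is correct and follows the same approach as the paper's proof: define $\mathcal{P}$ from the side of $S_i^2$, propagate through \Cref{lem:var-connector}, \Cref{obs:sides}, \Cref{lem:variable-clause-propagation}, and \Cref{lem:intra-clause-propagation} to force $u_1, u_8, u_{14}$ onto one side, and contradict \Cref{lem:corner-not-same-side}. Your bookkeeping is in fact slightly more explicit than the paper's (which invokes \Cref{lem:intra-clause-propagation} without spelling out that $u_1 = \topclausevar{b}{j}$ is handled directly by the labelling of \Cref{fig:clause-gadget}), and the remark on well-definedness of $\mathcal{P}$ is a small but worthwhile addition.
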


\subsection{Satisfiability implies the existence of a perfect matching cut}
In this section, we show that given a~$\MSAT$-instance~$I$ and a partition $\mathcal{P} = (\mathcal{V}_A, \mathcal{V}_B)$ satisfying~$I$, we can construct a~perfect matching cut $M_{\mathcal{P}}$ of $\construct$, as follows:

\begin{itemize}
	\item for each variable gadget $\varGadget{i}$, $M_{\mathcal{P}} \cap V(\varGadget{i})$ is the matching imposed by~\Cref{lem:var-connector},
	\item for each crossing gadget $X_j$ incident to two edges $e, f$, we choose $\Typecross{1}{j}$ if $\var(e)$ and $\var(f)$ are on the same side of $\mathcal{P}$, and $\Typecross{2}{j}$ otherwise,
	\item for each clause gadget $\clauseGadget{j}$ over variables $a, b, c$, we choose the matching of \Cref{fig:clause-gadget-type-1-match} if $b$ is not on the same side of $\mathcal{P}$ as $a$ and $c$,
the matching of \Cref{fig:clause-gadget-type-2-match} if $c$ is not on the same side of $\mathcal{P}$ as $a$ and $b$, and the matching of \Cref{fig:clause-gadget-type-3-match} in the last case.
\end{itemize}

As $M_{\mathcal{P}}$ is a perfect matching on each gadget, and as every vertex belongs to some gadget, $M_{\mathcal{P}}$ is a perfect matching of $\construct$.
By construction, $M_{\mathcal{P}}$ contains no connector edges.
Recall that any edge that does not have both endpoints inside the same gadget is a connector edge, we call \emph{connector vertex} a~vertex $v$ incident to a connector edge $e$, and that $\var(v) = \var(e)$.
\begin{lemma}\label{lem:side-change}
\sloppy For any path $Q$ between two connector vertices $u$ and $v$, we have $|Q \cap M_{\mathcal{P}}|$ even if and only if $\var(u)$ and $\var(v)$ are on the same side of $\mathcal{P}$.
\end{lemma}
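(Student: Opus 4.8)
The plan is to reduce the claim to the behaviour of $M_{\mathcal P}$ on individual gadgets, by decomposing the path $Q$ at its \emph{connector edges}. Since $M_{\mathcal P}$ contains no connector edge, the parity $|E(Q)\cap M_{\mathcal P}|$ is unaffected by how $Q$ crosses from one gadget to the next; what matters is only the parity of $|E(Q')\cap M_{\mathcal P}|$ where $Q'$ ranges over the maximal subpaths of $Q$ lying inside a single gadget. Concretely, I would write $Q$ as $u = w_0, e_1, w_1, \dots$, where $w_0 = u$, $w_k = v$, each $w_\ell$ is a connector vertex, each segment of $Q$ between $w_{\ell}$ and $w_{\ell+1}$ (other than a lone connector edge) lies in one gadget $H_\ell$, and consecutive gadgets $H_\ell, H_{\ell+1}$ are joined by a connector edge of $Q$. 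The key fact I want is: for each gadget $H$ appearing in this decomposition, with entry connector vertex $a$ and exit connector vertex $b$, the parity of $|E \cap M_{\mathcal P}|$ for a sub-path of $Q$ inside $H$ from $a$ to $b$ depends only on whether $\var(a)$ and $\var(b)$ (as dictated by $\mathcal P$) land on the same side or not \emph{in the combinatorial sense fixed by the chosen local matching}.

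The heart of the argument is therefore a gadget-by-gadget verification, using the lemmas already proved about the three kinds of gadgets:
\begin{itemize}
\item \textbf{Variable gadget $\varGadget{i}$:} by \Cref{lem:var-connector} the restriction of $M_{\mathcal P}$ is the fixed red matching; by \Cref{obs:sides} applied inside $\varGadget{i}$, all eight connector vertices of $\varGadget{i}$ lie on one common side of that matching. Since all connector vertices of $\varGadget{i}$ carry $\var = i$, a sub-path of $Q$ through $\varGadget{i}$ has even intersection with $M_{\mathcal P}$; this matches ``$\var(a)=\var(b)$ hence same side''.
\item \textbf{Crossing gadget $X_j$:} by \Cref{lem:crossing-gadget-cst}, $t_a,b_a,t'_a,b'_a$ are on one side and $t_c,b_c,t'_c,b'_c$ on one side; and if we picked $\Typecross{1}{j}$ these two sides coincide, while if we picked $\Typecross{2}{j}$ they are opposite. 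By our choice rule, $\Typecross{1}{j}$ is used exactly when $\var$ of the two incident wires agrees under $\mathcal P$. So a sub-path through $X_j$ from a connector vertex of variable $\var(a)$ to one of variable $\var(b)$ has even intersection iff $\var(a),\var(b)$ agree under $\mathcal P$ (when they are the same variable, the relevant endpoints are among $\{t_a,b_a,t'_a,b'_a\}$, always even).
\item \textbf{Clause gadget $\clauseGadget{j}$ over $x_a,x_b,x_c$:} the connector vertices are the six anchors $\topclausevar{i}{j},\botclausevar{i}{j}$ for $i\in\{a,b,c\}$. Using \Cref{lem:variable-clause-propagation} one already knows all four of $\topvarclause{i}{j},\topclausevar{i}{j},\botvarclause{i}{j},\botclausevar{i}{j}$ are on the same side, so the anchors of a fixed variable are on one side; and \Cref{lem:intra-clause-propagation} together with \Cref{lem:corner-not-same-side} (and the explicit matchings of \Cref{fig:clause-gadget-type-1-match,fig:clause-gadget-type-2-match,fig:clause-gadget-type-3-match}) tell exactly which pairs among $\{a,b,c\}$ land on the same side of $M_{\mathcal P}\cap E(\clauseGadget{j})$: precisely the pair that is together under $\mathcal P$ (our type was chosen accordingly, and since $\mathcal P$ satisfies $I$, the not-all-equal condition guarantees the three variables split $2$ vs.\ $1$, so exactly one of the three types is applicable). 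Hence a sub-path through $\clauseGadget{j}$ between two anchors has even intersection iff the two corresponding variables agree under $\mathcal P$.
\end{itemize}

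Having these three local statements, I would finish by a telescoping/parity argument. Define, for each gadget $H$ in the decomposition and each connector vertex $w$ of $H$ that is an endpoint of a sub-path of $Q$, a ``colour'' $\in\{A,B\}$ equal to the side of $w$ with respect to $M_{\mathcal P}\cap E(H)$ (well-defined by the local side statements above, and by construction this side is determined by $\var(w)$ and $\mathcal P$: it is the side of $\mathcal P$ that $\var(w)$ belongs to, up to a fixed global relabelling per gadget — but the relabelling cancels in what follows). Then the parity of the intersection of each sub-path inside $H$ equals the XOR of the colours of its two endpoints, while each connector edge of $Q$ contributes $0$ to $|E(Q)\cap M_{\mathcal P}|$ and does not change the variable (both endpoints of a connector edge have the same $\var$, hence the same colour). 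Summing over all sub-paths and connector edges, the intermediate colours telescope away, and $|E(Q)\cap M_{\mathcal P}|$ has the same parity as the XOR of the colours of $u$ and $v$, i.e.\ it is even iff $\var(u)$ and $\var(v)$ are on the same side of $\mathcal P$.

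The main obstacle I anticipate is \emph{bookkeeping consistency of the per-gadget ``side'' relabelling}: each gadget's local cut $(A_H,B_H)$ is only defined up to a swap, and one must make sure that the chosen matching type in each clause and crossing gadget is compatible with the side assignment coming in through the connector edges from neighbouring gadgets, so that the telescoping actually collapses rather than flipping. This is exactly where \Cref{lem:crossing-gadget-cst} (preservation of sides through the crossing gadget), the eight-anchor design of the variable gadget (which is why wires are duplicated), and the explicit figures \Cref{fig:clause-gadget-type-1-match,fig:clause-gadget-type-2-match,fig:clause-gadget-type-3-match} are needed; I would phrase the colour of a connector vertex $w$ directly as ``the side of $\mathcal P$ containing $\var(w)$'' and then verify, gadget by gadget, that with the matching chosen by our rule this is consistent with the local side relation — turning the potential obstacle into a short case check rather than a global invariant to track.
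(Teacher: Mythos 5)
Your proposal matches the paper's proof: both decompose $Q$ at its connector edges into maximal gadget-internal subpaths, verify the local parity-vs-side statement in each of the three gadget types via \Cref{lem:var-connector}, \Cref{lem:corner-not-same-side}/\Cref{lem:intra-clause-propagation}, and \Cref{lem:crossing-gadget-cst}, and then chain these together; your ``telescoping XOR of colours'' is just an unrolled version of the paper's induction on the number of subpaths. The observation that $\var(u')=\var(v')$ across each connector edge and that $u'v'\notin M_{\mathcal P}$ is the same pivot the paper uses, so the approaches are essentially identical.
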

\begin{proof}
As $M_{\mathcal{P}}$ does not contain any connector edges, $|Q \cap M_{\mathcal{P}}|$ is defined by the parts of $Q$ inside a gadget.
Let $Q_1, \dots, Q_k$ be spanning vertex-disjoint subpaths of $Q$ such that for any $i$, $Q_i$ lies inside a gadget and, there is a connector edge from the last vertex of $Q_i$ to the first one of $Q_{i+1}$, for every $1 \leq i < k$.
We prove the property by induction on $k$.

If $k = 1$, the whole $Q$ lies inside a gadget, and the property is true by~\Cref{lem:var-connector} for variable gadgets, \Cref{lem:corner-not-same-side} for clause gadgets and \Cref{lem:crossing-gadget-cst} for crossing gadgets.
 
Assume the property true for $i \leqslant k-1$, let $u'$ the last vertex of $Q_{k-1}$ and $v'$~the first vertex of $Q_k$.
By induction, $\var(u)$ and $\var(u')$ are on the same side of $\mathcal{P}$ if and only if $|\bigcup_{1 \leq i \leq k-1} E(Q_i) \cap M_{\mathcal{P}}|$ is even,
and $\var(v')$ and $\var(v)$ are on the same side of~$\mathcal{P}$ if and only if $|E(Q_k) \cap M_{\mathcal{P}}|$ is even.
As $\var(u') = \var(v')$, we know that $\var(u')$ is on the same side of $\mathcal{P}$ as $\var(v')$, moreover $u'v' \not \in M_{\mathcal{P}}$.
Thus $\var(u)$ and $\var(v)$ are on the same side of $\mathcal{P}$ if and only if $|\bigcup_{1 \leq i \leq k} E(Q_i) \cap M_{\mathcal{P}}|$ and $|E(Q_k) \cap M_\mathcal{P}|$ have the same parity, thus $|\bigcup_{1 \leq i \leq k} E(Q_i) \cap M_{\mathcal{P}}|$ is even if and only if $\var(u)$ and $\var(v)$ are on the same side of $\mathcal{P}$.
\end{proof}

\begin{lemma}\label{lem:SAT-PMC}
$M_{\mathcal{P}}$ is a perfect matching cut of $\construct$.
\end{lemma}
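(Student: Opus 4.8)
The plan is to invoke \Cref{lem:facial-cycle}: since $\construct$ is a plane graph (it is Barnette) and we have already observed that $M_{\mathcal{P}}$ is a perfect matching of $\construct$, it remains only to check that every facial cycle $F$ of $\construct$ satisfies that $|E(F)\cap M_{\mathcal{P}}|$ is even; this makes $M_{\mathcal{P}}$ a cutset, hence a perfect matching cut. I would split the facial cycles into two families: those bounding a face contained in a single gadget (variable, clause, or crossing), and those bounding a face that runs along the external boundaries of several gadgets.

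For a face contained in a gadget $H$, the point is that $M_{\mathcal{P}}\cap E(H)$ is a perfect matching cut of $H$. For crossing gadgets this is exactly \Cref{lem:crossing-gadget-cst}. For a variable gadget, $M_{\mathcal{P}}\cap E(\varGadget{i})$ is the red matching of \Cref{fig:var-gadget}, and for a clause gadget it is one of the matchings depicted in \Cref{fig:clause-gadget-type-1-match}, \Cref{fig:clause-gadget-type-2-match}, \Cref{fig:clause-gadget-type-3-match}; in each of these cases one checks directly — a finite inspection of the figure, equivalently a second application of \Cref{lem:facial-cycle} inside the gadget — that every bounded face of the gadget carries an even number of matching edges. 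Hence $|E(F)\cap M_{\mathcal{P}}|$ is even for every such $F$.

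For a face $F$ not contained in any gadget, I would use the structure already exploited in the proof of \Cref{lem:Barnette}: since every outgoing edge of a gadget lies on that gadget's external face, the facial cycle $F$ decomposes cyclically as $P_1, e_1, P_2, e_2, \dots, P_\ell, e_\ell$, where each $P_t$ is an exposed path running along the external boundary of some gadget between two connector vertices $a_t$ and $a_t'$, and each $e_t$ is the connector edge joining the last vertex $a_t'$ of $P_t$ to the first vertex $a_{t+1}$ of $P_{t+1}$ (indices modulo $\ell$). As $M_{\mathcal{P}}$ contains no connector edge, $|E(F)\cap M_{\mathcal{P}}| = \sum_{t=1}^{\ell} |E(P_t)\cap M_{\mathcal{P}}|$. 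By \Cref{lem:side-change}, $|E(P_t)\cap M_{\mathcal{P}}|$ is even if and only if $\var(a_t)$ and $\var(a_t')$ lie on the same side of $\mathcal{P}$. Since $e_t$ is a connector edge, $\var(a_t') = \var(a_{t+1})$, so reading around $F$ the cyclic binary sequence of sides $\mathrm{side}(\var(a_1)),\ \mathrm{side}(\var(a_1')) = \mathrm{side}(\var(a_2)),\ \dots,\ \mathrm{side}(\var(a_\ell')) = \mathrm{side}(\var(a_1))$ changes value an even number of times; every such change occurs along some $P_t$, so the number of indices $t$ with $|E(P_t)\cap M_{\mathcal{P}}|$ odd is even, and therefore $|E(F)\cap M_{\mathcal{P}}|$ is even.

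Having verified both families of facial cycles, \Cref{lem:facial-cycle} yields that $M_{\mathcal{P}}$ is a cutset of $\construct$; together with the fact that it is a perfect matching, this shows that $M_{\mathcal{P}}$ is a perfect matching cut of $\construct$. I expect the only slightly delicate part to be the bookkeeping in the non-gadget case — certifying that the facial cycle genuinely alternates between exposed paths and single connector edges — whereas the single-gadget case is a routine, if somewhat tedious, check of the figures.
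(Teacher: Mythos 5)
Your proof is correct and in the same spirit as the paper's, but it is somewhat over-engineered in the non-gadget case. The paper's argument is shorter: given a cycle $C$ that is not contained in a single gadget, pick \emph{one} connector edge $uv$ on it and view $C$ as that edge together with a path $Q$ from $v$ to $u$. Then $u$ and $v$ are connector vertices with $\var(u)=\var(v)$, and one application of \Cref{lem:side-change} to $Q$ already gives that $|E(Q)\cap M_{\mathcal{P}}|$ is even; since $uv\notin M_{\mathcal{P}}$, this is $|E(C)\cap M_{\mathcal{P}}|$. Your decomposition of the facial cycle into alternating exposed paths $P_t$ and connector edges $e_t$, followed by the telescoping parity count along the cyclic sequence of sides, is correct, but it effectively re-derives inside this proof the induction that is already carried out once and for all in the proof of \Cref{lem:side-change} (whose statement holds for \emph{any} path between connector vertices, not only exposed paths). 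In other words, you are re-proving a special case of the lemma rather than using it in its full strength. Apart from that, the case split (faces inside a gadget, checked by inspection or by \Cref{lem:crossing-gadget-cst}; faces spanning several gadgets, handled via \Cref{lem:side-change}) and the final appeal to \Cref{lem:facial-cycle} match the paper's reasoning, so your proof stands.
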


\begin{proof}
  We already know that $M_{\mathcal{P}}$ is a perfect matching.
  Moreover, $M_{\mathcal{P}}$ is a cutset by \Cref{lem:facial-cycle}.
Indeed, let $C$ be any cycle in $\construct$.
If $C$ is contained in a gadget then, as $M_{\mathcal{P}}$ is a cutset when restricted to a gadget, $|C \cap M_{\mathcal{P}}|$ is even.
Otherwise, $C$ contains a connector edge $uv$, so we can see $C$ as the concatenation of the edge $uv$ and a path $Q$ from $v$ to $u$.
We know that $uv \not \in M_{\mathcal P}$, and $\var(u) = \var(v)$.
By~\Cref{lem:side-change}, $|E(C) \cap M_{\mathcal{P}}| = |E(Q) \cap M_{\mathcal{P}}|$ is even.
\end{proof}

We finally get~\Cref{thm:hard}, due to \Cref{lem:cut-to-SAT,lem:SAT-PMC,lem:Barnette}.


\end{document}